\newtheorem{theorem}{Theorem}
\newtheorem{prop}[theorem]{Proposition}
\newtheorem{lemma}[theorem]{Lemma}
\DeclareMathOperator*{\argmin}{arg\,min}
\newcommand{\blue}[1]{{\color{black}{#1}}}
\newcommand{\trho}{\tilde{\rho}}
\newcommand{\tx}{\tilde{x}}
\newcommand{\erf}{\text{erf}}
\newcommand{\WProx}{\text{WProx}}
\newcommand{\KL}{\mathrm{D}_\mathrm{KL}}
\newcommand{\tP}[1]{{\tilde{P}}}
\newcommand{\prox}{\text{prox}}
\title[Splitting Regularized Wasserstein Proximal Algorithms]{Splitting Regularized Wasserstein Proximal Algorithms\\ for Nonsmooth Sampling Problems}
\author{Fuqun Han}
\address{Department of Mathematics, University of California, Los Angeles, Los Angeles, CA, USA}
\email{fqhan@math.ucla.edu}
\author{Stanley Osher}
\address{Department of Mathematics, University of California, Los Angeles, Los Angeles, CA, USA}
\email{sjo@math.ucla.edu}
\author{Wuchen Li}
\address{Department of Mathematics, University of South Carolina, Columbia, SC, USA}
\email{wuchen@mailbox.sc.edu}
\begin{document}
\maketitle

\begin{abstract}
Sampling from nonsmooth target probability distributions is essential in various applications, including the Bayesian Lasso. We propose a splitting-based sampling algorithm for the time-implicit discretization of the probability flow for the Fokker-Planck equation, where the score function, defined as the gradient logarithm of the current probability density function, is approximated by the regularized Wasserstein proximal. When the prior distribution is the Laplace prior, our algorithm is explicitly formulated as a deterministic interacting particle system, incorporating softmax operators and shrinkage operations to efficiently compute the gradient drift vector field and the score function. 
We verify the convergence towards target distributions regarding R\'enyi divergences and Wasserstein-2 distance under suitable conditions. Numerical experiments in high-dimensional nonsmooth sampling problems, such as sampling from mixed Gaussian and Laplace distributions, logistic regressions, image restoration with $L_1$-TV regularization, and Bayesian neural networks, demonstrate the efficiency and robust performance of the proposed method. 
\end{abstract}

\section{Introduction}


Solving the Bayesian Lasso problem \cite{bayesian_lasso_2008} involves sampling from the target distribution  
\[
\rho^*(x) = \frac{1}{Z} \exp\left( -\beta(f(x) + g(x)) \right),
\]  
where \( x \in \mathbb{R}^d \), \( f: \mathbb{R}^d \to \mathbb{R} \) is the negative log-likelihood, \( g(x) = \lambda \|x\|_1 \) is the log-density of the Laplace prior for $\lambda >0$, \( \beta > 0 \) is a known parameter, and \( Z \) is an unknown normalization constant. 
The Bayesian Lasso is widely used as it simultaneously conducts parameter estimation and variable selection. It has broad applications in high-dimensional real-world data analysis, including cancer prediction \cite{chu2022application}, depression symptom diagnosis \cite{psycho_bay_lasso}, and Bayesian neural networks \cite{priors_bayesian_NN}.  

Most algorithms for sampling from $\rho^*$ rely on discretizing the overdamped Langevin dynamics. In each iteration, these algorithms evaluate the gradient of the logarithm of target distribution 
once and plus a Brownian motion perturbation to generate diffusion. However, the time-discretized overdamped Langevin dynamics presents several challenges. First, the gradient of \( g \) may not be well-defined, as in the case of $g$ being a \(L_1\) norm. Second, overdamped Langevin dynamics often perform inefficiently in high-dimensional sampling problems due to the fact that the variance of Brownian motion linearly depends on the dimension.

To address the first challenge, many proximal sampling algorithms, often with splitting techniques by regrading the sampling task as a composite optimization problem in Wasserstein space, have been extensively studied. \cite{pereyra2016proximal,durmus2018efficient,salim_splitting} use proximal operators to approximate the gradient of the nonsmooth log-density. Extended works include methods leveraging a restricted Gaussian oracle (RGO) \cite{Lee2021RGO, chen2023improved, liang2022proximal}, incorporating both sub-gradient and proximal operators \cite{habring2024subgradient},  and solving an inexact proximal map at each iteration \cite{benko2024langevin}. For a recent review, see \cite{pock_sampling}. 
In these works, the proximal map is often interpreted as a semi-implicit discretization of the Langevin dynamics with respect to the drift term. The present study also employs the proximal operator to approximate the gradient of nonsmooth terms, however, the proposed algorithm is fully deterministic as described below. 

Furthermore, to handle the second challenge, instead of considering the time discretization of the Langevin dynamic, we will analyze a deterministic interacting particle system obtained by the time-discretized probability flow ODE. Here, the ODE involves the drift function and the gradient logarithm of the current probability density function, named the score function, which induces the diffusion. Since this approach avoids simulating Brownian motion, it is independent of the sample space dimension. However, accurately approximating the score function presents a challenge of its own.

To approximate the evolution of the score function, \cite{BRWP_2023} derived a closed-form formula using the regularized Wasserstein proximal operator (RWPO). The RWPO is defined as the Wasserstein proximal operator with a Laplacian regularization term  (see Section \ref{sec_algo} for details). By applying Hopf–Cole transformations, the operator admits a closed-form kernel formula. It can be showed that the RWPO \blue{is a Fisher information regularized version of the classical JKO scheme \cite{jko1998} for the Fokker-Plank equation} and provides a first-order approximation to the evolution of the Fokker–Planck equation \cite{han2024convergence}, leading to an effective score function approximation.
The sampling algorithm based on RWPO named backward regularized Wasserstein proximal (BRWP), has been implemented in several studies \cite{BRWP_2023, TT_BRWP} with different computational strategies. Its backward nature comes from the implicit time discretization of the probability flow ODE for the score function term. However, a key challenge in implementing the BRWP kernel lies in approximating an integral over $\mathbb{R}^d$ to compute the denominator term.

In this work, we derive a computationally efficient closed-form update for BRWP without evaluating a high-dimensional integral for special nonsmooth functions, such as the $L_1$ norm. 
Following the restricted Gaussian oracle of BRWP with $L_1$ function, we derive an explicit formula of the sampling algorithm, in which samples interact with each other following an interacting kernel function. In particular, this kernel function is constructed by shrinkage operators and the softmax functions. Moreover, we also apply the splitting method and proximal updates for sampling problems with nonsmooth target density. 

\blue{We sketch the algorithm below. For particles $\{x_i^k\}_{i=1}^N$ in the $k$ iteration, when $g(x) = \lambda\|x\|_1$, the proposed iterative sampling scheme is
\begin{equation*}
    x_i^{k+\frac{1}{2}} = x_i^k - h\nabla f(x_i^k)\,,\quad x_i^{k+1} =x_i^{k+\frac{1}{2}} + \frac{1}{2} \left( S_{\lambda h}(x_i^{k+\frac{1}{2}}) -  \sum_{j=1}^N\text{softmax}(U(i,j)_j)x_j^{k+\frac{1}{2}} \right)\,, 
\end{equation*}
where $h>0$ is the time step size. The interacting kernel is defined as
\begin{equation*}
    U(i,j) := -\frac{\beta}{2} \left(\frac{\|x_i^{k+\frac{1}{2}} -x_j^{k+\frac{1}{2}} \|_2^2 - \| S_{\lambda h}(x_i^{k+\frac{1}{2}})-x_j^{k+\frac{1}{2}} \|_2^2}{2h} - \lambda\| S_{\lambda h}(x_j^{k+\frac{1}{2}} )\|_1\right)\,,
\end{equation*}
with
\[
\textrm{softmax}(x) = \left(\frac{\exp(x_j)}{\sum_{\ell=1}^N \exp(x_{\ell})}\right)_{1\leq j \leq N}\,,\quad x\in\mathbb{R}^d\,.
\]  
The shrinkage operator $S_{\lambda h}$ takes the form
\begin{equation*}
  S_{\lambda h}(x) :
 = \operatorname{sign}(x)\operatorname{ReLU}(|x| - \lambda h)\,,
\end{equation*}
with the rectified linear unit (ReLU) function $\operatorname{ReLU}(z) = \max\{z,0\}$ for $z\in\mathbb{R}$. We remark that the shrinkage operator is the proximal map of the $L_1$ norm, i.e., $ S_{\lambda h}(x)= \operatorname{prox}_{\lambda\|x\|_1}^h(x)$. }
 

Compared to algorithms based on splitting the overdamped Langevin dynamics with Brownian motion, as studied in \cite{pereyra2016proximal, durmus2018efficient, salim_splitting, Lee2021RGO, chen2023improved, liang2022proximal}, the proposed deterministic approach generally provides a better approximation to the target density empirically \blue{with similar computational complexity} and with a small number of particles. It also demonstrates faster convergence in high-dimensional sample spaces, benefiting from adapting the deterministic score function, as established in \cite{han2024convergence}.  
Several other works have investigated deterministic interacting particle systems for sampling, including Stein variational gradient descent methods  \cite{svgd_2016} and blob methods \cite{blob_sampling}. The proposed approach, however, leverages a kernel formulation derived directly from the solution of the Fokker–Planck equation, naturally incorporating information about the underlying dynamics, as reflected in the definition of \( U(i,j) \) above. Furthermore, the proposed kernel is closely related to the restricted Gaussian oracle \cite{Lee2021RGO} due to the definition of the kernel formula for RWPO (see section \ref{sec_algo} for more details). Hence, our computational implementation can also be regarded as an approximation to the restricted Gaussian oracle. 

The structure of this paper is as follows. Section \ref{sec_algo} presents the derivation of the BRWP-splitting sampling scheme with a detailed algorithm description. In particular, we introduce several kernels, each corresponding to a different particle-based approximation of the initial density. \blue{Section \ref{sec_analysis} establishes the convergence of the BRWP-splitting algorithm to the target density under two different sets of assumptions. The first scenario relies on the Poincaré inequality and interpolation techniques to demonstrate exponential convergence in Rényi divergence for smooth potentials. The second scenario adopts a convex optimization framework to prove the convergence in Wasserstein-2 distance for nonsmooth potentials. Section \ref{sec_TV_imaging} further extends our approach to incorporate additional regularization terms—specifically, $L_1$-TV regularization—by integrating primal-dual optimization techniques with the BRWP-splitting algorithm.} Finally, Section \ref{sec_NE} presents numerical experiments on mixture distributions, Bayesian logistic regression, several imaging applications, and Bayesian neural network training. Proofs and detailed derivations are included in the supplementary material.   

\section{Regularized Wasserstein Proximal and Splitting Methods for Sampling}
\label{sec_algo}

We are aiming to draw samples from probability distributions of the form  
\begin{equation}
\label{def_rho_L1}
    \rho^*(x) = \frac{1}{Z}\exp(-\beta(f(x)+\lambda\|x\|_1))\,,
\end{equation}  
where $x\in\mathbb{R}^d$, $f:\mathbb{R}^d \rightarrow \mathbb{R}$ is $L$-smooth, \( \beta=(k_B T)^{-1} \) with a temperature constant $T>0$ and the Boltzmann constant $k_B$, \( \lambda \) is a regularization parameter, and \( Z=\int_{\mathbb{R}^d} \exp(-\beta(f(y)+\lambda\|y\|_1))dy<+\infty\) is an unknown normalization constant. 

Sampling from such a distribution is widely used in parameter estimation, particularly under the framework of the Bayesian Lasso problem \cite{bayesian_lasso_2008}, which simultaneously performs estimation and variable selection. However, the nonsmoothness of the \(L_1\) norm poses significant challenges in developing theoretically sound and numerically efficient sampling algorithms. Beyond the Bayesian Lasso setting, the proposed scheme can also be applied to the case where \( g(x) \) is a nonsmooth function whose proximal operator is easy to compute. In this case, we consider sampling from the distribution  
\begin{equation}
\label{def_rho}
\rho^*(x) = \frac{1}{Z}\exp(-\beta(f(x)+g(x)))\,.
\end{equation}

\subsection{Discretization of Langevin dynamics and probability flow ODE}
In this section, we review the time discretization of the overdamped Langevin dynamics and regularized Wasserstein proximal operator to motivate the proposed algorithm. 

Denote $V = f+g$ for simplicity. To sample from $\rho^*$ in \eqref{def_rho}, a classical approach involves the overdamped Langevin dynamics at time $t$
\begin{equation}
\label{def_Langevin}
    dX_t = -\nabla V(X_t) dt + \sqrt{2\beta^{-1}} dB_t\,,
\end{equation}
where $X_t\in\mathbb{R}^d$ is a stochastic process, and $B_t$ is the standard Brownian motion in $\mathbb{R}^d$.
Denote $\rho_t$ as the probability density function of $X_t$. We know that the  Kolmogorov forward equation of the stochastic process $X_t$ satisfies the following Fokker–Planck equation:  
\begin{equation}
\label{def_FP}
\frac{\partial \rho_t}{\partial t} = \nabla\cdot(\rho_t\nabla V) + \beta^{-1}\Delta\rho_t = \beta^{-1}\nabla\cdot \left(\rho_t\nabla\log\frac{\rho_t}{\rho^*}\right)\,
\end{equation}
where we use the fact that $\rho_t\nabla\log\rho_t=\nabla\rho_t$ and $\nabla\log\rho^*=\nabla \log e^{-\beta V}=-\beta\nabla V$. 

\blue{From the stationary solution of the Fokker–Planck equation, we observe that the invariant distribution of the Langevin dynamics coincides with the target distribution \(\rho^*\).} However, directly applying the overdamped Langevin dynamics \eqref{def_Langevin} to sample from \eqref{def_rho_L1} presents several challenges. Firstly, the function $V$ is nonsmooth, which creates difficulties in the gradient computation. Secondly, the variance of the Brownian motion depends on the sample space dimension linearly, which slows down convergence, posing challenges for high-dimensional sampling tasks.  

To address the first issue, for a small stepsize $h>0$, one often utilizes the Moreau envelope  
\begin{equation}
\label{def_moreau}
    g_{h}(x) = \inf_{y\in\mathbb{R}^d} \left\{ g(y) + \frac{1}{2h}\|x-y\|_2^2 \right\}\,,
\end{equation}
which provides a smooth approximation to the nonsmooth function $g$; and the proximal operator  
\begin{equation}
\label{def_prox}
\prox_g^h(x) = \argmin_{y\in\mathbb{R}^d} \left\{ g(y) + \frac{1}{2h}\|x-y\|_2^2 \right\}\,,
\end{equation}
which provides a smooth approximation to the gradient of $g$ based on the relation  
\begin{equation}
\nabla g_h(x) = \frac{x-\prox_g^h(x)}{h}, \quad \text{for a convex function } g\,.
\end{equation}
These tools, along with splitting techniques, have been widely applied in nonsmooth sampling problems \cite{liang2022proximal,pula_wibisono,durmus2018efficient}. In this work, we also employ the proximal operator to approximate the gradient of nonsmooth functions.  

Furthermore, to tackle the second challenge, which arises from the linear dependence of the variance of Brownian motion and the dimension, we aim to avoid the simulation of Brownian motions in the sampling algorithm. Instead, we consider the evolution of particles $x_t\in\mathbb{R}^d$ governed by the probability flow ODE:  
\begin{equation}
\label{def_ODE}
dx_t = -\nabla V(x_t) dt - \beta^{-1} \nabla \log \rho_{t}(x_t) dt\,.
\end{equation}
Here, the diffusion is induced by the score function $\nabla\log\rho_t$. \blue{Although the individual particle trajectories governed by equation \eqref{def_ODE} differ from those of the stochastic dynamics in \eqref{def_Langevin}, the associated Liouville (or continuity) equation for \eqref{def_ODE} coincides with the Fokker–Planck equation \eqref{def_FP} \cite[Prop.,8.1.8]{ambrosio2008gradient}. This implies that simulating \eqref{def_ODE} also yields a particle system that can approximate the target distribution $\rho^*$.}

The primary challenge in discretizing the probability flow ODE \eqref{def_ODE} in time is the accurate approximation of the score function. For each discretized time point, since we can only access $N$ particles obtained from the previous iteration, kernel density estimation-based particle methods can be unstable and sensitive to the choice of bandwidth. To mitigate this, we consider a semi-implicit discretization of \eqref{def_ODE}, where the score function at the next time step is utilized. This results in the following iterative sampling scheme. 

Denote the time steps as \( t_k \) for \( k = 1, 2, \dots \), with a step size \( h = t_{k+1} - t_k > 0 \). Let \( x^k \) represent a particle at time \( t_k \), distributed according to the density \( \rho_k \), i.e., \( x^k \sim \rho_k \). Similarly, let \( x^{k+1} \sim \rho_{k+1} \), where \( \rho_{k+1} \) is the density at the next time step \( t_{k+1} = t_k + h \).
Then the semi-implicit discretization of the probability flow ODE in time is 
\begin{equation}
\label{def_semi_particle}
x^{k+1} = x^k - h \nabla V(x^k) - h\beta^{-1}\nabla\log\rho_{k+1}(x^k)\,.
\end{equation}
To compute $\rho_{k+1}$, one must approximate the evolution of the density function following the Fokker–Planck equation \eqref{def_FP}. 

\subsection{Regularized Wasserstein proximal operator}
A classical approach to approximate the Fokker-Planck equation is to consider the implicit JKO scheme \cite{jko1998}:  
\begin{equation}
\label{JKO}
   \rho_{k+1} = \argmin_{\rho\in \mathcal{P}_2(\mathbb{R}^d)}\,\,\beta^{-1}\KL(\rho\|\rho^*) + \frac{1}{2h}W^2_2(\rho,\rho_k)\,,
\end{equation}
where $\mathcal{P}_2(\mathbb{R}^d)$ is the set of probability measures in $\mathbb{R}^d$ with a finite second-order moment and  $\KL(\rho\|\rho^*)$ denotes the Kullback–Leibler (KL) divergence defined as \[
\KL(\rho\|\rho^*) := \int_{\mathbb{R}^d}\rho\log\frac{\rho}{\rho^*} dx\,. 
\]
Moreover, $W^2_2(\rho,\rho_k)$ represents the squared Wasserstein-2 distance, which can be expressed using Benamou-Brenier formula \cite{ambrosio2008gradient}:
\[
\frac{W^2_2(\rho_0,\rho_h)}{2h} := \inf_v \int_0^h\int_{\mathbb{R}^d}\frac{1}{2}\|v(t,x)\|_2^2\rho(t,x) dx dt\,,
\]
where the minimization is taken over vector fields $v\colon [0, h]\times \mathbb{R}^d\rightarrow \mathbb{R}^d$ subject to the continuity equation with fixed initial and terminal conditions:
\[
\frac{\partial \rho}{\partial t} + \nabla\cdot(\rho v)= 0\,,\quad \rho(0,x) = \rho_0(x)\,,\quad \rho(h,x) = \rho_h(x)\,.
\]
However, solving the JKO-type implicit scheme often requires high-dimensional optimization, which can be computationally expensive. We remark that many existing sampling algorithms exploit certain splitting of the JKO scheme \cite{liang2022proximal,salim_splitting,LMC_JKO_splitting} and employ the implicit gradient descent for the drift vector fields. The current work considers the implicit update regarding both the drift and the score functions simultaneously.

To derive a closed-form update to approximate the evolution of the Fokker–Planck equation, we start with the Wasserstein proximal operator with linear energy, as introduced in \cite{kernel_proximal}. By incorporating a Laplacian regularization term into the Wasserstein proximal operator and applying the Benamou–Brenier formula, we obtain the following regularized Wasserstein proximal operator (RWPO)
\begin{equation}
\label{Reg_Was}
  \WProx_V^{h,\beta}(\rho_k) := \argmin_{q\in \mathcal{P}_2(\mathbb{R}^d)}\inf_{v} \left\{ \int_{0}^{h} \int_{\mathbb{R}^d} \frac{1}{2} \|v(t, x)\|_2^2 \rho(t, x) \,dx\,dt + \int_{\mathbb{R}^d} V(x) q(x) \,dx\right\}\,,
\end{equation}  
where the minimization is taken over all vector fields \( v \) and terminal density \( q \), subject to the continuity equation with an additional Laplacian term and the initial condition:  
\begin{equation}\label{Reg_con_RWPO}
   \frac{\partial \rho}{\partial t}  + \nabla \cdot (\rho v) = \beta^{-1} \Delta \rho\,, \quad
    \rho(0, x) = \rho_k(x)\,, \quad \rho(h, x) = q(x)\,.
\end{equation}  

\blue{To figure out the connection between the JKO scheme and the RWPO operator, similar to \cite{chen2022improved_proximal}, we perform a change of variables by setting $u = v - \beta^{-1} \nabla \log \rho$. This leads to the reformulated problem
\begin{align*}
&\WProx_V^{h,\beta}(\rho_k)\\
= &\argmin_{q \in \mathcal{P}_2(\mathbb{R}^d)} \inf_{u} \Bigg\{\beta^{-1} \KL(q \| \rho^*) + \int_{0}^{h} \int_{\mathbb{R}^d} \frac{1}{2} \|u\|_2^2 \rho(t, x) \, dx \, dt + \beta^{-2} \int_0^h \int_{\mathbb{R}^d} \|\nabla \log \rho\|_2^2 \rho \, dx \, dt \Bigg\},
\end{align*}
subject to the continuity equation:
\begin{equation}\label{Reg_con}
\partial_t \rho + \nabla \cdot (\rho u) = 0, \quad
\rho(0, x) = \rho_k(x), \quad \rho(h, x) = q(x).
\end{equation}

We then define the Fisher-information regularized Wasserstein distance between $\rho_k$ and $q$ as
$$
\frac{W_{2}^{2,\beta}(\rho_k,q)}{2h} = \inf_{u} \left\{ \int_{0}^{h} \int_{\mathbb{R}^d} \frac{1}{2} \|u\|_2^2 \rho(t, x) \, dx \, dt + \beta^{-2} \int_0^h \int_{\mathbb{R}^d} \|\nabla \log \rho\|_2^2 \rho \, dx \, dt \right\},
$$
with $\rho^* = 1/Z\exp(-\beta V)$ and subject to the continuity constraint \eqref{Reg_con}. Consequently, the RWPO operator can be expressed as
\begin{equation}
\label{RWOP_regu_W2}
\WProx_V^{h,\beta}(\rho_k) = \argmin_{q \in \mathcal{P}_2(\mathbb{R}^d)} \beta^{-1}\KL(q \| \rho^*) + \frac{1}{2h} W_{2}^{2,\beta}(\rho_k, q).
\end{equation}

Comparing \eqref{RWOP_regu_W2} with the original JKO formulation \eqref{JKO}, we observe that the operator $\WProx_V^{h,\beta}$ corresponds to a Fisher-information regularized JKO scheme. The convergence properties of RWPO towards the original Fokker–Planck equation will be discussed in detail in Section~\ref{sec_analysis}.

To obtain an explicit form of the RWPO defined in \eqref{Reg_Was}, we introduce a Lagrange multiplier $\Phi: [0, h] \times \mathbb{R}^d \to \mathbb{R}$, leading to the following unconstrained optimization problem for the functional $\mathcal{L}$ 
\begin{align}
\mathcal{L}(q, \rho, v, \Phi) &= \int_0^h \int_{\mathbb{R}^d} \frac{1}{2} \|v(t, x)\|_2^2 \rho(t, x) \, dx \, dt + \int_{\mathbb{R}^d} V(x) q(x) \, dx \\
&\quad + \int_0^h \int_{\mathbb{R}^d} \Phi(t, x) \left( \partial_t \rho + \nabla \cdot (\rho v) - \beta^{-1} \Delta \rho \right) \, dx \, dt. \notag
\end{align}

The first-order optimality conditions yield a system of coupled PDEs comprising a forward Fokker–Planck equation and a backward Hamilton–Jacobi equation
\begin{align}
\label{regu_PDE}
\begin{cases}
\partial_t \rho + \nabla \cdot (\rho \nabla \Phi) = \beta^{-1} \Delta \rho, \\
\partial_t \Phi + \frac{1}{2} \|\nabla \Phi\|_2^2 = -\beta^{-1} \Delta \Phi, \\
\rho(0, x) = \rho_k(x), \quad \Phi(h, x) = -V(x)\,.
\end{cases}
\end{align}

Recalling the Hopf–Cole transformation, we define
$$
\eta(t, x) = \exp\left( \beta\frac{ \Phi(t, x)}{2} \right), \quad \eta(t, x) \hat{\eta}(t, x) = \rho(t, x) \,,
$$
and the system \eqref{regu_PDE} will be equivalent to a system of heat equations for $\eta$ and $\hat{\eta}$:
$$
\begin{cases}
\partial_t \hat{\eta} = \beta^{-1} \Delta \hat{\eta}, \\
\partial_t \eta = -\beta^{-1} \Delta \eta, \\
\eta(0, x) \hat{\eta}(0, x) = \rho_k(x), \\
\eta(h, x) = \exp\left( -\beta \frac{V(x)}{2} \right).
\end{cases}
$$

Since both $\eta$ and $\hat{\eta}$ satisfy forward and backward heat equations, their solutions can be expressed using the heat kernel. This leads to a closed-form expression for the RWPO operator:
\begin{equation}
\label{rho_T_BRWP}
\WProx_V^{h,\beta}(\rho_k)(x) = \int_{\mathbb{R}^d} \frac{\exp\left[ -\frac{\beta}{2} \left( V(x) + \frac{\|x - y\|_2^2}{2h} \right) \right]}{\int_{\mathbb{R}^d} \exp\left[ -\frac{\beta}{2} \left( V(z) + \frac{\|z - y\|_2^2}{2h} \right) \right] dz} \, \rho_k(y) \, dy =: K_V^{h,\beta} \rho_k(x),
\end{equation}
where the kernel $K_V^{h,\beta}$ is explicitly determined by the potential $V$ and time step size $h$.
} 

From \eqref{regu_PDE}, we observe that since \(\Phi(h,x) = -V(x)\) and \(\rho\) satisfies a Fokker–Planck equation with drift vector field \(\nabla \Phi\), the solution of RWPO approximates the evolution of the Fokker–Planck equation \eqref{def_FP} when \(h\) is small. Furthermore, \cite{han2024convergence} justifies that \(K_V^{h,\beta} \rho_k\) approximates \(\rho_{k+1}\) with an error of order \(\mathcal{O}(h^2)\) when \(V\) is smooth and convex. A justification will also be provided in section \ref{sec_analysis}. 
In summary, we use the kernel formula \eqref{rho_T_BRWP} to approximate the evolution of the Fokker–Planck equation \eqref{def_FP} with \(V = f + g\), which further approximates the score function required in \eqref{def_semi_particle}. 

\subsection{Derivation of the algorithm} 
\label{sec_splitting_derivation}
\blue{
Now, we consider sampling from a composite distribution of the form $\rho^* = \frac{1}{Z}\exp(-\beta(f+g))$. For notational purposes, we write
\begin{equation}
\rho_g^*(x) = \frac{1}{Z_g}\exp(-\beta(g(x)))\,,
\end{equation}
where the normalization constant is 
\[
Z_g = \int_{\mathbb{R}^d}\exp(-\beta g(y))\,dy.
\]

First, we recall that the sampling problem can be interpreted as an optimization problem to minimize the KL divergence, which can be expressed as
\begin{align}
\KL(\rho\|\rho^*) &= \int_{\mathbb{R}^d} \rho \log\frac{\rho}{\rho^*} \, dx = \beta\int_{\mathbb{R}^d} f\rho \, dx + \beta\int_{\mathbb{R}^d} g\rho \, dx + \int_{\mathbb{R}^d} \rho \log \rho \, dx \notag \\
&= \beta(\mathcal{E}_f(\rho) + \mathcal{E}_{g}(\rho) + \mathcal{H}(\rho))\,, \label{def_energy}
\end{align}
where $\mathcal{E}_f$ and $\mathcal{E}_{g}$ denote the potential energies associated with $f$ and $g$, respectively, and $\mathcal{H}$ is the entropy term.

We then split the optimization into two steps: the first applies gradient descent for the smooth term $\mathcal{E}_f(\rho)$, and the second uses an implicit Wasserstein proximal step (JKO scheme) to minimize $\mathcal{E}_{g}(\rho) + \mathcal{H}(\rho)$—a variant of the Passty algorithm \cite{salim_splitting}. This procedure generates a sequence of densities $\{\rho_k\}$ via
\begin{equation}
\label{eqn_Passty_density}
\begin{cases}
    \rho_{k+\frac{1}{2}} &= (I - h\nabla f)_{\#} \rho_k, \\
    \rho_{k+1}& =   \displaystyle \argmin_{\rho \in \mathcal{P}_2(\mathbb{R}^d)} \int g \rho dx + \beta^{-1}\int \rho\log \rho dx + \frac{1}{2h} W_2^2(\rho, \rho_{k+\frac{1}{2}}) \\
    &=\displaystyle \argmin_{\rho \in \mathcal{P}_2(\mathbb{R}^d)}\beta^{-1}\KL(\rho\|\rho_g^*) + \frac{1}{2h} W_2^2(\rho, \rho_{k+\frac{1}{2}}) \,.
\end{cases}
\end{equation}
In this formulation, the first step corresponds to an explicit forward Euler discretization, while the second step is an implicit backward update. Together, these steps give a forward-backward gradient flow in the Wasserstein space.

In terms of particles $\{x^{k+1}\}$, the scheme \eqref{eqn_Passty_density} can be rewritten as
\begin{equation}
\label{eqn_Passty_particle}
\begin{cases}
    x^{k+\frac{1}{2}} = x^k - h\nabla f(x^k), \\
    x^{k+1} = x^{k+\frac{1}{2}} - h\nabla g_h(x^{k+\frac{1}{2}}) - h\beta^{-1}\nabla \log \rho_{k+1}(x^{k+1})\,,
\end{cases}
\end{equation}
where the second step follows from the first-order optimality condition.

However, the second step in this scheme remains implicit, as it involves both $x^{k+1}$ and $\rho_{k+1}$ on the right-hand side. To address this, we consider a semi-implicit version by replacing $x^{k+1}$ with $x^{k+\frac{1}{2}}$, $g$ with $g_h$, and approximating $\rho_{k+1}$ by $K_g^{h,\beta} \rho_{k+\frac{1}{2}}$, resulting in a closed-form update:
\begin{equation}
\label{semi_implicit_g}
x^{k+1} = x^{k+\frac{1}{2}} - h\nabla g_h(x^{k+\frac{1}{2}}) - h\beta^{-1}\nabla \log K_g^{h,\beta} \rho_k(x^{k+\frac{1}{2}})\,.
\end{equation}

We remark that although this update rule is explicit in evaluation, it still constitutes a semi-implicit scheme for two reasons:
\begin{itemize}
    \item The score function is evaluated approximately at $\rho_{k+1}$, as the RWPO provides an approximation to the JKO step.
    \item The explicit gradient descent on the Moreau envelope $g_h$ can also be interpreted as an implicit gradient descent on the original non-smooth function $g$.
\end{itemize}

We will demonstrate the convergence of this splitting scheme under smoothness assumptions in Section~\ref{sec_analysis}.
}

\subsection{Algorithm}

To summarize the derivation in the previous section, the iterative formula for particles $\{x^{k+1}\}$ at the $k+1$ iteration is expressed as
\begin{equation}
\label{split_scheme}
\begin{cases}
    x^{k+\frac{1}{2}} =x^k - h \nabla f(x^k)\,, \\
    x^{k+1} = \operatorname{prox}_g^h( x^{k+\frac{1}{2}}) - h\beta^{-1}\nabla \log K_g^{h,\beta}\rho_{k+\frac{1}{2}}( x^{k+\frac{1}{2}})\,.
\end{cases}
\end{equation}

Next, we derive an explicit and computationally efficient formula for the second step in \eqref{split_scheme}. When \( g(x) = \lambda \|x\|_1 \), the proximal operator is given by the shrinkage (soft-thresholding) operator
\[
  S_{\lambda h}(x) := \prox_{\lambda\|x\|_1}^h(x) = \text{sign}(x) \max\{|x| - \lambda h, 0\}\,.
\]
We then simplify the expression for \( K_{g}^{h,\beta}\rho_{k+\frac{1}{2}} \) defined in \eqref{rho_T_BRWP}.

\blue{We recall the Laplace‑method approximation for a Lipschitz function $\phi$ admitting a unique minimizer (see Lemma C.1 or \cite{tibshirani2025laplace}):
\begin{equation}
\lim_{h\rightarrow 0 }\left|\frac{\int z \exp\left(-\frac{\phi(z)}{h}\right)dz}{\int   \exp\left(-\frac{\phi(z)}{h}\right)dz} -  z^* \right| = 0\,,
\end{equation}
where $z^* = \argmin_{z} \phi(z)$ and given that all integrals are finite. Taking
\[
\phi_y(z) = \frac{\beta}{2} \left( h\lambda \|z\|_1 + \frac{\|z - y\|_2^2}{2} \right),
\]
and noting that $z^* = S_{\lambda h}(y)$, we have
\begin{align*}
&\lim_{h\rightarrow 0 }\left|-\frac{2}{\beta}\frac{d}{dy}\left[\ln\int \exp\left(-\frac{\phi_y(z)}{h} \right)dz\right] -\frac{y-S_{\lambda h}(y)}{h}\right| \\
=&\lim_{h\rightarrow 0 }\left|- \frac{2}{\beta}\frac{d}{dy}\left[\ln\int  \exp\left(-\frac{\phi_y(z)}{h} \right)dz - \frac{\inf_z\phi_y(z)}{h}\right] \right|  = 0\,,
\end{align*}
where the derivative of the Moreau envelope can be computed with the implicit differentiation or explicit derivative of the shrinkage operator. 
Taking the exponential on both sides, we arrive at an approximation to the denominator in \eqref{rho_T_BRWP}
\begin{align}
 \label{Laplace_L1}
&0 = \lim_{h\rightarrow 0}\left|C\int\exp\left(-\frac{\phi_y(z)}{h}\right)dz -\exp\left(-\frac{\phi_y(S_{\lambda h}(y))}{h} \right)\right| \\
= &\lim_{h \rightarrow 0} \left|C\int_{\mathbb{R}^d} \exp\left[-\frac{\beta}{2} \left( \lambda \|z\|_1 + \frac{\|z - y\|_2^2}{2h} \right) \right] dz - \exp\left[ -\frac{\beta}{2} \left( \lambda \|S_{\lambda h}(y)\|_1 + \frac{\| S_{\lambda h}(y) - y \|_2^2}{2h} \right) \right] \right| ,\notag
\end{align}
for some constant $C=C(h,\beta)$ that does not depend on $y$.}



For the numerator in \( K_{g}^{h,\beta} \rho_{k+\frac{1}{2}} \), we assume \( \rho_{k+\frac{1}{2}}(x) \) is given by kernel density estimation based on delta measures at current particle locations
\[
\rho_{k+\frac{1}{2}}(x) = \frac{1}{N} \sum_{j=1}^N \delta_{x_j^{k+\frac{1}{2}}}(x)\,.
\]
Substituting this into \eqref{rho_T_BRWP} and using \eqref{Laplace_L1}, the approximated density at time \( t_{k+1} \) becomes
\begin{align}
\label{rhok_kernel}
\lim_{h \rightarrow 0} \left| K_g^{h,\beta} \rho_{k+\frac{1}{2}}(x)
-  C\exp\left( -\frac{\beta}{2} \lambda \|x\|_1 \right)  \sum_{j=1}^N \exp\left[ U(x, x_j^{k+\frac{1}{2}}) \right] \right| = 0\,,
\end{align}
where the function \( U \) is given by
\begin{equation}
\label{def_U}
U(x, x_j^{k+\frac{1}{2}}) := -\frac{\beta}{2} \left( \frac{ \|x - x_j^{k+\frac{1}{2}}\|_2^2 - \| S_{\lambda h}(x_j^{k+\frac{1}{2}}) - x_j^k \|_2^2 }{2h} - \lambda \| S_{\lambda h}(x_j^{k+\frac{1}{2}}) \|_1 \right).
\end{equation}

\blue{Recalling that \( \nabla \log K_g^{h,\beta} \rho_{k+\frac{1}{2}} = \nabla K_g^{h,\beta} \rho_{k+\frac{1}{2}} / K_g^{h,\beta} \rho_{k+\frac{1}{2}} \), and
\begin{equation}
\label{Subdifferential}
\frac{x - S_{\lambda h}(x)}{h} \in \partial(\lambda \|x\|_1)\,,
\end{equation}
where \( \partial g(x) \) denotes the subdifferential of \( g \) at \( x \)}, we conclude that as \( h \to 0 \),
\begin{align}
\label{approx_score}
\lim_{h \rightarrow 0} \left| \nabla \log K_g^{h,\beta} \rho_{k+\frac{1}{2}}(x)
+ \frac{\beta}{2} \left( \frac{x - S_{\lambda h}(x)}{h} + \frac{ \sum_{j=1}^N (x - x_j^{k+\frac{1}{2}}) \exp(U(x, x_j^{k+\frac{1}{2}})) }{ h \sum_{j=1}^N \exp(U(x, x_j^{k+\frac{1}{2}})) } \right) \right| = 0\,.
\end{align}

To further simplify the notation, we define the matrix operator $A_{i,j}$ and the normalized version $M_{i,j}$  as
\begin{equation}
\label{def_Aij_Mij}
    A_{i,j} =  \exp(U(x_i^{k+\frac{1}{2}},x_j^{k+\frac{1}{2}}))\,,\qquad M_{i,j} = \frac{A_{i,j}}{\sum_{j=1}^N A_{i,j}}\,.
\end{equation}
With this notation, the second step of the iterative scheme \eqref{split_scheme} can be rewritten as  
\begin{equation}
\label{xk_iteration_L1}
    x_i^{k+1} = x_i^{k+\frac{1}{2}} + \frac{1}{2} \left( S_{\lambda h}(x_i^{k+\frac{1}{2}}) - \sum_{j=1}^N M_{i,j} x^{k+\frac{1}{2}}_j\right)\,.
\end{equation}

The above derivation leads to a deterministic sampling algorithm for the composite density function  
\[
\rho^*(x) = \frac{1}{Z} \exp\big(-\beta (f(x) + \lambda\|x\|_1)\big)\,,
\]  
which is described below.

\begin{algorithm}[H]
\caption{Splitting Regularized Wasserstein Proximal Algorithm (BRWP-splitting)}
\begin{algorithmic}[1]
\Require Initial particles $\{ x^0_i\}_{i=1}^N$, step size $h$.
\For{iteration $k = 1,2,\dots$ and each particle $i = 1, \dots, N$}
\State \textbf{Step 1:} Compute the gradient descent with respect to smooth function $f$:
\[
x^{k+\frac{1}{2}}_i = x_i^k - h \nabla f(x_i^k)\,.
\]
\State \textbf{Step 2:} Perform the proximal update on $g$ with the score function
\[
    x_i^{k+1} = x^{k+\frac{1}{2}}_i  + \frac{1}{2} \left( S_{\lambda h}(x^{k+\frac{1}{2}}_i ) - \sum_{j=1}^N M_{i,j} x^{k+\frac{1}{2}}_j\right).
\]
Here, \( M_{i,j} \) is defined as in \eqref{def_Aij_Mij}, replacing \( x^k \) with \( x^{k+\frac{1}{2}} \).
\EndFor
\end{algorithmic}
\label{Alg:splitting}
\end{algorithm}


\blue{For a more general target density function $\rho^*$ as in \eqref{def_rho} containing a nonsmooth function \( g \), the Step 2 in Algorithm \ref{Alg:splitting} is replaced by
\begin{equation}
\label{particle_scheme_general}
    x_i^{k+1} = x_i^{k+\frac{1}{2}} + \frac{1}{2} \left(\prox_g^h(x_i^{k+\frac{1}{2}}) - \sum_{j=1}^N M_{i,j} x^{k+\frac{1}{2}}_j\right),
\end{equation}
where  
\[
A_{i,j} = \exp\left[-\frac{\beta}{2} \left(\frac{\|x_i^{k+\frac{1}{2}} - x_j^{k+\frac{1}{2}}\|_2^2-\|\prox_g^h(x_j^{k+\frac{1}{2}}) - x_j^{k+\frac{1}{2}}\|_2^2}{2h} -g(\prox_g^h(x_j^{k+\frac{1}{2}}))\right) \right]\,,
\]
and $M_{i,j}$ is defined as in \eqref{def_Aij_Mij}. 
Intuitively, looking at \eqref{particle_scheme_general}, the term $1/2 \prox_g^h(x_i^{k+\frac{1}{2}})$ corresponds to a half-step of gradient descent depending on $x_i^{k+\frac{1}{2}}$.}
The first exponent $\|x_i^{k+\frac{1}{2}}-x_j^{k+\frac{1}{2}}\|_2^2$ in \( A_{i,j} \) induces diffusion as a heat kernel, while the last exponent involves $g$, which performs the second half-step of gradient descent via a weighted average of \( x_j^{k+\frac{1}{2}} \). 
Hence, the proposed interacting particle dynamics ensures that the set of points will concentrate in a high-probability region of the target density and will not collapse to the local minimum of the log-density $f+g$.

\subsection{Different choices of kernels for particle interaction}

In this section, we explore alternative formulations of the matrix operator \( M_{i,j} \), previously defined in \eqref{def_Aij_Mij}, based on different density approximations of $\rho_k$ from particles. These alternatives may lead to improved numerical performance in high-dimensional sampling problems. To simplify notation.

\begin{prop}
\label{prop_gaussian_mij}
Suppose the density function $\rho$ is written as a summation of Gaussian kernels  as 
\[
\rho (x) = \frac{1}{N(2\pi\sigma^2)^{d/2}} \sum_{j=1}^{N} \exp\left(-\frac{\|x-x_j \|_2^2}{2\sigma^2}\right)\,,
\]
with bandwidth $\sigma>0$.
Then, for the particle update scheme given by \eqref{xk_iteration_L1}, let $c = 2h/(\sigma^2\beta)$ and $x_{i,\ell} $ be the $\ell$-th component of the particle $x_i$, the matrix operator \( A_{i,j} \) and \( M_{i,j}\) will be
\begin{equation}
    A_{i,j} = \exp\left(-\frac{\|x_j\|_2^2}{2\sigma^2}\right)\prod_{\ell=1}^d\left[S_1(x_{i,\ell},x_{j,\ell}) + S_2(x_{i,\ell}, x_{j,\ell}) + S_3(x_{i,\ell},x_{j,\ell})\right]\,,
\end{equation}
\begin{equation}
M_{i,j} =  \frac{A_{i,j}}{\sum_{j=1}^{N} \left\{ \exp\left(-\frac{\|x_j\|_2^2}{2\sigma^2}\right)\prod_{\ell=1}^d\left[T_1(x_{i,\ell},x_{j,\ell}) + T_2(x_{i,\ell},x_{j,\ell}) + T_3x_{i,\ell},(x_{j,\ell})\right]\right\}}\,,
\end{equation}
where the terms \( T_1 , T_2 , T_3  \) and \( S_1 , S_2 , S_3 \) are given by
{\footnotesize
\[
\begin{cases}
    T_1(x,z) = \sqrt{\frac{4h}{\beta(1+c)}}\int_{\sqrt{\frac{\beta(1+c)}{4h}}\left[\lambda h - \frac{x+cz+\lambda h}{1+c}\right]}^{\infty}\exp(-y^2)dy\exp\left(-\frac{\beta}{4h}\left(\lambda^2h^2-\frac{(x+cz+\lambda h)^2}{1+c}\right)\right),\\
    T_2(x,z)  = \sqrt{\frac{4h}{\beta(1+c)}}\int^{\sqrt{\frac{\beta(1+c)}{4h}}\left[-\lambda h - \frac{x+cz-\lambda h}{(1+c)}\right]}_{-\infty}\exp(-y^2)dy\exp\left(-\frac{\beta}{4h}\left(\lambda^2h^2-\frac{(x+cz-\lambda h)^2}{1+c}\right)\right),\\
    T_3(x,z) = \sqrt{\frac{4h}{c\beta }}\int_{\sqrt{\frac{c\beta}{4h}}\left[-\lambda h-\frac{(x+cz)}{c}\right]}^{\sqrt{\frac{c\beta}{4h}}\left[\lambda h-\frac{(x+cz)}{c}\right]} \exp(-y^2) dy  \exp\left(\frac{\beta}{4h} \frac{(x+cz)^2}{c}  \right),\\
    S_1(x,z) = \frac{\beta}{2}\frac{(x+cz+\lambda h)}{h(1+c)}T_1(x,z)+ \frac{1}{1+c}\exp\left(-\frac{\beta(1+c)}{4h}\left(\lambda h - \frac{x+cz+\lambda h}{1+c}\right)^2 \right)\exp\left(-\frac{\beta}{4h}\left(\lambda^2h^2-\frac{(x+cz+\lambda h)^2}{1+c}\right)\right),\\
     S_2(x,z) = \frac{\beta}{2}\frac{(x+cz-\lambda h)}{h(1+c)}T_2(x,z) -\frac{1}{1+c}\exp\left(-\frac{\beta(1+c)}{4h}\left(-\lambda h - \frac{x+cz-\lambda h}{1+c}\right)^2 \right)\exp\left(-\frac{\beta}{4h}\left(\lambda^2h^2-\frac{(x+cz-\lambda h)^2}{1+c}\right)\right),\\
     S_3(x,z) = \frac{\beta}{2} \frac{(x+cz)}{hc}T_3(x,z)-\frac{1}{c}\left[\exp\left(-\frac{c\beta}{4h}\left(\lambda h - \frac{(x+cz)}{c}\right)^2 \right)  - \exp\left(-\frac{c\beta}{4h}\left(-\lambda h - \frac{(x+cz)}{c}\right)^2 \right)\right]  \exp\left(\frac{\beta(x+cz)^2}{4hc} \right),
\end{cases}
\]
}

Here, the integral of $\exp(-y^2)$ can be obtained by the value of the error function 
\[
\erf(z) = \frac{2}{\sqrt{\pi}} \int_0^z \exp(-y^2) \, dy\,.
\]
\end{prop}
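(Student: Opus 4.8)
\emph{Proof proposal.} The plan is to specialize the RWPO kernel \eqref{rho_T_BRWP} to $V=g=\lambda\|\cdot\|_1$ and to evaluate the resulting integrals one coordinate at a time, exploiting that the shrinkage operator $S_{\lambda h}$ is piecewise affine. First, exactly as in the derivation leading to \eqref{rhok_kernel}, I would approximate the normalizing integral in the denominator of $K^h_g\rho_k$ by the Laplace method, replacing it with $C\exp[-\tfrac{\beta}{2}(\lambda\|S_{\lambda h}(y)\|_1+\tfrac{1}{2h}\|S_{\lambda h}(y)-y\|_2^2)]$; since $\|\cdot\|_1$ and $\|\cdot\|_2^2$ both split over coordinates, this weight is a product of one-dimensional factors. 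Inserting the Gaussian ansatz for $\rho_k$ then writes $K^h_g\rho_k(x)$ as a $j$-independent constant times $\exp(-\tfrac{\beta}{2}\lambda\|x\|_1)\sum_{j=1}^N\prod_{\ell=1}^d J_{j,\ell}(x_\ell)$, where
\[ J_{j,\ell}(x_\ell)=\int_{\mathbb{R}}\exp\!\Big[-\tfrac{\beta}{4h}(x_\ell-y)^2+\tfrac{\beta}{2}\big(\lambda|S_{\lambda h}(y)|+\tfrac{1}{2h}(S_{\lambda h}(y)-y)^2\big)-\tfrac{1}{2\sigma^2}(y-x^k_{j,\ell})^2\Big]\,dy. \]

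Next I would split $\mathbb{R}$ in each $J_{j,\ell}$ into $\{y>\lambda h\}$, $\{y<-\lambda h\}$ and $\{|y|\le\lambda h\}$, on which $S_{\lambda h}(y)$ equals $y-\lambda h$, $y+\lambda h$ and $0$ respectively. On each piece the integrand is a Gaussian in $y$, multiplied by an exponential-affine factor on the two outer pieces; with $a=\beta/4h$ and $c=2h/(\sigma^2\beta)$ the quadratic coefficient in $y$ is $a(1+c)$ on the outer pieces, whereas on the middle piece the $y^2$ coming from $-a(x_\ell-y)^2$ cancels against the $+ay^2$ from $(S_{\lambda h}(y)-y)^2=y^2$, leaving coefficient $ac$ and the recentering $(x_\ell+cx^k_{j,\ell})/c$ — which is precisely why the $T_3,S_3$ terms have a different shape. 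Completing the square and substituting $u=\sqrt{a(1+c)}\,(y-m_i)$ (resp.\ $u=\sqrt{ac}\,(y-m_3)$) turns each truncated Gaussian into a multiple of $\int e^{-u^2}\,du$, i.e.\ an error function, times the exponential prefactor produced by the completed square; adding the three contributions gives $J_{j,\ell}(x_\ell)=e^{-ax_\ell^2}e^{-(x^k_{j,\ell})^2/2\sigma^2}\,(T_1+T_2+T_3)(x_\ell,x^k_{j,\ell})$ in the notation of the statement. The factor $e^{-a\|x\|_2^2}=\prod_\ell e^{-ax_\ell^2}$ depends only on $i$ once we evaluate at $x=x^k_i$, and $C,N$ are constants, so all of these drop out of the normalized operator; keeping $e^{-\|x^k_j\|_2^2/2\sigma^2}$ explicitly then reproduces the denominator of $M_{i,j}$ asserted in the proposition.

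It then remains to differentiate, following the template of \eqref{approx_score}. One has $\nabla\log K^h_g\rho_k=-\tfrac{\beta}{2}\lambda\,\mathrm{sign}(x)+\nabla K^h_g\rho_k/K^h_g\rho_k$, where the first term, combined with half of the expansion of the logarithm, produces the $S_{\lambda h}(x^k_i)$ contribution in \eqref{xk_iteration_L1}, and the score ratio is built from the coordinatewise derivatives of the product $\prod_\ell(T_1+T_2+T_3)$. A direct computation shows $S_1+S_2+S_3=\partial_x(T_1+T_2+T_3)$: differentiating the exponential prefactors gives the $\tfrac{\beta}{2}\frac{x+cz\pm\lambda h}{h(1+c)}T_i$ and $\tfrac{\beta}{2}\frac{x+cz}{hc}T_3$ terms, while differentiating the $x$-dependent limits of the truncated Gaussians produces exactly the non-integral boundary terms of $S_1,S_2,S_3$ (here $x=x^k_{i,\ell}$, $z=x^k_{j,\ell}$). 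Assembling the weight $\prod_\ell(T_1+T_2+T_3)$ and its coordinatewise derivative $\prod_\ell(S_1+S_2+S_3)$ together with the explicit prefactor $e^{-\|x^k_j\|_2^2/2\sigma^2}$ into the form \eqref{def_Aij_Mij}–\eqref{xk_iteration_L1} then yields the stated $A_{i,j}$ and $M_{i,j}$ (and the cosmetic replacement $x^k\mapsto x^{k+1/2}$ of the statement).

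The bulk of the work, and the place where care is needed, is the second step: the three-way region split together with completing the square and the change of variables must be bookkept cleanly so that every $j$-independent constant — the Gaussian normalizations, $e^{-a\|x\|_2^2}$, $C$, $N$ — is seen to cancel in $M_{i,j}$, and so that the endpoints $\pm\lambda h$ shared by the three truncated Gaussians, where $S_{\lambda h}$ fails to be smooth, are handled consistently (in particular, when $|x^k_{i,\ell}|\le\lambda h$ the $\mathrm{sign}$-versus-shrinkage discrepancy is treated exactly as in the delta-measure derivation \eqref{approx_score}), and that the $d$-dimensional score is matched to a single $N\times N$ matrix $M_{i,j}$ through the product factorization of the weight. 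The Laplace approximation, the coordinatewise factorization, and the identification $S_i=\partial_x T_i$ are then routine.
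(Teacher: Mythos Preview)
Your proposal is correct and follows essentially the same route as the paper: apply the Laplace approximation to the denominator, factorize the resulting kernel over coordinates, split each one-dimensional integral into the three shrinkage regions $\{y>\lambda h\}$, $\{y<-\lambda h\}$, $\{|y|\le\lambda h\}$, complete the square (with the same observation that the $y^2$ coefficient drops from $a(1+c)$ to $ac$ on the middle piece) to obtain the $T_i$ as truncated Gaussians, and then differentiate in $x$ to get $S_i=\partial_x T_i$. The paper's proof is terser on the final assembly step --- it simply says ``Combining the above gives the desired result'' after computing the three integrals and their $x$-derivatives --- so your additional bookkeeping remarks about cancellation of $j$-independent constants and the product factorization are a bonus rather than a departure.
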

 The derivation of Proposition \ref{prop_gaussian_mij} can be found in supplementary material \ref{sec_appendix_derivation}.
The Gaussian kernel used in \cite{TT_BRWP} has been applied to eliminate asymptotic bias in the discretization of the probability flow ODE when the target distribution is Gaussian. Moreover, Gaussian kernels with adaptively computed bandwidths based on particle variance are also helpful for approximating density functions in high dimensions. For further discussion in this direction, see \cite{wang2022accelerated}.  

Next, by comparing the results in Proposition \ref{prop_gaussian_mij} with the expression in \eqref{def_Aij_Mij}, we observe that the matrix operator $M_{i,j}$ simplifies significantly when the kernel is approximated by delta measures. However, in high-dimensional settings, kernel density estimation with delta measures suffers from the curse of dimensionality, as the number of particles required to maintain a given level of accuracy grows exponentially \cite{emprical_rate}.
To address this issue, we propose an alternative and heuristic method for efficiently approximating the score function while maintaining its representation as a sum of delta measures. Specifically, we assume the density function is approximated bt an auxiliary set of points: 
\begin{equation}
\label{def_rho_k_aux}
    \rho(x) = \frac{1}{N^d}\sum_{j_1,\cdots,j_d=1}^{N}\delta_{\tilde{x}_{j_1,\cdots,j_d}}(x),\quad \tilde{x}_{j} = \tilde{x}_{j_1,\cdots,j_d} = [x_{j_1}(1),\cdots,x_{j_d}(d)]^T.
\end{equation}
Here, \( \rho_k \) can be regarded as approximated by a separable density function. Due to the separability of the \( L_1 \) norm and the shrinkage operator, the proposed matrix operator takes the following form, with its derivation provided in the supplementary material \ref{sec_appendix_derivation}.

\begin{prop}
\label{prop_delta_aux_mij}
Suppose the density function $\rho$ is written as equation \eqref{def_rho_k_aux}, then for the particle update scheme given by \eqref{xk_iteration_L1}, the operator $M_{i,j}$ will be
\begin{equation}
(M_{i,j})_{\ell} = \frac{\exp\left[-\frac{\beta}{2}\left(\frac{(x_{i,\ell}-x_{j,\ell})^2-( S_{\lambda h}(x_{j,\ell})-x_{j,\ell})^2}{2h}-\lambda| S_{\lambda h}(x_{j,\ell})|\right) \right]}{\sum_{j}\exp\left[-\frac{\beta}{2}\left(\frac{(x_{i,\ell}-x_{j,\ell})^2-( S_{\lambda h}(x_{j,\ell})-x_{j,\ell})^2}{2h}-\lambda| S_{\lambda h}(x_{j,\ell})|\right) \right]}\,,
\end{equation}
for $\ell = 1,\cdots, d$ where $x_{i,\ell}$ denotes the $\ell$-th component of the particle $x_i$.
The Step 2 in Algorithm \ref{Alg:splitting} now becomes
\begin{equation}
x_{i}^h = x_i + \frac{1}{2}\left( S_{\lambda h}(x_i)-\sum_{j=1}^N M_{i,j}\cdot x_j \right)\,,
\end{equation}
where $x_i^h$ is the particle at the next time point $h$.
\end{prop}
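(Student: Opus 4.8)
The plan is to exploit the separability of the $L_1$ norm and of the squared Euclidean norm, together with the product structure of the auxiliary point set \eqref{def_rho_k_aux}, to show that the regularized Wasserstein proximal kernel \eqref{rho_T_BRWP} evaluated on this density factorizes across coordinates. Once this is in place, the weighted average $\sum_j M_{i,j}x_j^k$ appearing in \eqref{xk_iteration_L1} decouples into an independent one-dimensional average in each coordinate, which is precisely the claimed formula.

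First I would substitute $\rho_k$ from \eqref{def_rho_k_aux} into \eqref{rho_T_BRWP} with $V=\lambda\|\cdot\|_1$. Since $\lambda\|z\|_1+\frac{\|z-y\|_2^2}{2h}=\sum_{\ell=1}^{d}\big(\lambda|z_\ell|+\frac{(z_\ell-y_\ell)^2}{2h}\big)$, the normalization integral over $\mathbb{R}^d$ factors exactly into a product of $d$ one-dimensional integrals. Applying the Laplace method coordinatewise, and using that the minimizer of $\lambda|z_\ell|+\frac{(z_\ell-y_\ell)^2}{2h}$ is the scalar shrinkage $S_{\lambda h}(y_\ell)$ (so the $d$-dimensional minimizer is $S_{\lambda h}(y)$ componentwise), the denominator becomes $C^d\prod_{\ell=1}^{d}\exp\big[-\frac{\beta}{2}\big(\lambda|S_{\lambda h}(y_\ell)|+\frac{(S_{\lambda h}(y_\ell)-y_\ell)^2}{2h}\big)\big]$, namely the coordinatewise version of the estimate behind \eqref{rhok_kernel}. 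Evaluating at $y=\tilde{x}^k_{j_1,\dots,j_d}$, whose $\ell$-th component is $x^k_{j_\ell,\ell}$, the summand in the numerator becomes a product $\prod_{\ell=1}^{d}\exp\big(u_\ell(x_\ell,x^k_{j_\ell,\ell})\big)$ with $u_\ell(a,b):=-\frac{\beta}{2}\big(\frac{(a-b)^2-(S_{\lambda h}(b)-b)^2}{2h}-\lambda|S_{\lambda h}(b)|\big)$; the key structural fact is that the $\ell$-th factor involves only the index $j_\ell$.

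Next I would carry out the sum over the multi-index $(j_1,\dots,j_d)$. Because the summand factors over $\ell$ and the $\ell$-th factor depends on $j_\ell$ alone, the $d$-fold sum splits as $\prod_{\ell=1}^{d}\big(\sum_{m=1}^{N}\exp(u_\ell(x_\ell,x^k_{m,\ell}))\big)$, so $\log K^h_g\rho_k(x)$ equals $-\frac{\beta}{2}\lambda\|x\|_1$ plus a sum of $d$ terms, the $\ell$-th depending only on $x_\ell$. Differentiating in $x_\ell$, using $\partial_a u_\ell(a,b)=-\frac{\beta}{2}\frac{a-b}{h}$ and the approximation $\lambda\operatorname{sign}(x_\ell)\approx\frac{x_\ell-S_{\lambda h}(x_\ell)}{h}$ used in \eqref{approx_score}, one finds that the $\ell$-th component of the particle increment \eqref{xk_iteration_L1} is $\tfrac12\big(S_{\lambda h}(x_{i,\ell}^k)-\sum_{m=1}^{N}(M_{i,m})_\ell\,x^k_{m,\ell}\big)$, where $(M_{i,m})_\ell$ is the normalization of $\exp(u_\ell(x^k_{i,\ell},x^k_{m,\ell}))$ over $m=1,\dots,N$: in forming this ratio the other-coordinate factors $\prod_{\ell'\neq\ell}\big(\sum_{m'}\exp(u_{\ell'}(x_{\ell'},x^k_{m',\ell'}))\big)$ occur identically in numerator and denominator and cancel. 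Substituting $u_\ell$ back in gives exactly the stated expression for $(M_{i,j})_\ell$, and assembling the $d$ coordinates yields the stated Step 2, with $M_{i,j}\cdot x_j^k$ read as the coordinatewise (Hadamard) product; replacing $x_i^k$ by $x_i^{k+1/2}$ throughout gives the version in the proposition.

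The main difficulty here is bookkeeping rather than anything conceptual: one has to track the multi-index sum carefully and verify the cancellation of the other-coordinate factors just described, so that a genuinely one-dimensional normalization remains. The one point that deserves care is the consistent use of the Laplace method, which must be applied to each one-dimensional integral separately; this is legitimate precisely because the exponent, and hence its minimizer, decouples across coordinates, so no cross terms are introduced and the $d$-dimensional constant is simply $C^d$.
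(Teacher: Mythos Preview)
Your proposal is correct and follows essentially the same route as the paper: both exploit the separability of $\lambda\|\cdot\|_1$, $\|\cdot\|_2^2$, and $S_{\lambda h}$ to factor the $N^d$-fold sum $\sum_{j_1,\dots,j_d}A_{i,(j_1,\dots,j_d)}$ into $\prod_{\ell=1}^d\sum_{m=1}^N\exp(u_\ell(x^k_{i,\ell},x^k_{m,\ell}))$, after which the coordinatewise normalization and the stated $(M_{i,j})_\ell$ follow immediately. The paper's proof is terser, recording only this factorization of the denominator sum, whereas you additionally spell out the coordinatewise Laplace step and the cancellation of the $\ell'\neq\ell$ factors in the ratio; these are exactly the bookkeeping details the paper leaves implicit.
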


Our numerical experiments in Section \ref{sec_NE} show that the kernel in Proposition \ref{prop_delta_aux_mij} usually has faster convergence and more accurate estimation of the model variance than the kernel in \eqref{def_Aij_Mij} in high-dimensional sampling problems. 

Moreover, we remark that for more general log-density functions \( g \) and other choices of kernels used to estimate \( \rho_k \) that are not separable, tensor train approaches can be employed. Once the density at time \( t_k \) and the target density \( \rho^* \) are approximated in tensor train format, an analog of Algorithm \eqref{Alg:splitting} remains computationally efficient. For further details, see \cite{TT_BRWP}.

\section{Convergence Analysis}
\label{sec_analysis}

In this section, we analyze the convergence of the proposed Algorithm \ref{Alg:splitting} for sampling from the target distribution. We will separate the verification into two cases. 

\blue{
Firstly, we verify the scenario in which the gradient of the Moreau envelope, \( \nabla g_h \), is Lipschitz continuous with constant \( L_{g_h} \), where \( L_{g_h} \) is either independent of \( h \) or of the order \( \mathcal{O}(h^{-\gamma}) \) for some \( \gamma < 1 \). This setting applies, for example, when \( g_h \) is replaced by \( g_{\epsilon} \) for a  regularization parameter \( \epsilon \) independent of $h$ or $\epsilon = \sqrt{h}$ in Algorithm~\ref{Alg:splitting}, or when \( g \) itself is $L$ smooth.
In this regime, we establish exponential convergence under the assumption that the target distribution satisfies a Poincaré inequality following the idea of proof in \cite{ula_wibisono}. This part serves to illustrate the effectiveness and theoretical soundness of the proposed algorithm under smooth conditions.

Secondly, we address the nonsmooth case where \( g \) may be nonsmooth e.g., \( g(x) = \|x\|_1 \) and the gradient of its Moreau envelope \( \nabla g_h \) can be \( 1/h \)-Lipschitz continuous. In this setting, we establish convergence using tools from convex optimization in Wasserstein space as in \cite{salim_splitting}. We establish the convergence with respect to the Wasserstein-2 distance. }

\subsection{Convergence under smooth assumption}
\label{sec_analysis_smooth}
In this section, we let $\rho_h^* = 1/Z \exp(-\beta (f + g_h))$ where $Z$ is the normalization constant. We also assume that the following assumptions hold
\begin{enumerate}[label=A.\arabic*:]
    \item The function \( f \) is convex and \( L_f\) smooth, meaning that \( \nabla f \) is \( L_f\) Lipschitz continuous. The function $g$ is convex.
    \item The function \(g_h\) is \(L_{g_h}\) smooth.
    \item $\rho_h^*$ satisfies the Poincaré inequality with constant $\alpha_d>0$, i.e., for any bounded smooth function $\psi$,  
    \[
\int_{\mathbb{R}^d} \psi^2 \rho_h^* \,dx - \left(\int_{\mathbb{R}^d} \psi \rho_h^* \,dx\right)^2 \leq \alpha_d \int_{\mathbb{R}^d} \|\nabla \psi\|^2\rho_h^* \,dx\,  .
\]
    \item The score function at time $t$, i.e., $\nabla \log\rho_t$ where $\rho_t$ satisfies the Fokker Plank equation at time $t$ is  \( \beta L_{\rho} \) Lipshitz continuous in $\mathbb{R}^d$. 
\end{enumerate}

We remark that the first and the second conditions ensure the proximal operator of $g$ is single-valued, and the Hessian of $g_h$ is bounded, which allows us to derive the asymptotic expression of the kernel formula \eqref{rho_T_BRWP}.
Regarding the Poincaré inequality in the third assumption, we note that it follows from both the log-Sobolev inequality and the Talagrand inequality. Furthermore, it remains valid even in cases where the log-Sobolev inequality does not apply, such as when \( g \) has a tail of the form \( \|x\|_1 \). Moreover, both the log-Sobolev and Poincaré inequalities are special cases of the Latała–Oleszkiewicz inequality for \( \alpha = 2 \) and \( \alpha = 1 \), respectively. These inequalities characterize concentration properties for densities of the form \( \exp(-\|x\|^{\alpha}) \), as discussed in \cite{latala2000between}. Finally, the last condition is an assumption to ensure the regularity of the score function at time $t$ which is employed in our iteration.

\begin{lemma}
\label{lemma_kernel}
For the approximation to the score function based on the kernel formula \eqref{rho_T_BRWP}, when $g_h$ is $L_{g_h}$-smooth and $h<\min\{1/(L_{g_h}d^2),1/(L_{\rho}d^2)\}$, we have
\begin{equation}
\label{eqn:K_g_FPK}
\nabla\log K_g^{h,\beta}\rho_{0}(x) =  \nabla\log\rho(x,h) + \mathcal{O}(h^2)
\end{equation}
where \( \rho(x,t) \) satisfies the Fokker–Planck equation with the initial $\rho_0$:
\[
\frac{\partial \rho}{\partial t}  = \nabla\cdot(\rho\nabla g_h) + \beta^{-1}\Delta \rho, \quad \rho(x,0) = \rho_{0}(x)\,,
\]
and the coefficient for $\mathcal{O}(h^2)$ terms depends on $L_{g_h}$ and $L_{\rho}$ given in assumptions A.2 and A.4.
 \end{lemma}

\blue{
For the case where $g_h$ is smooth, the analysis has been presented in \cite{han2024convergence}. We only illustrate the validity of the formula through the heat kernel expansion.

\noindent\textbf{Sketch of the proof:}
For smooth $g_h$, let $\rho_g = 1/Z_g \exp(-\beta g)$ and $G_h(x) = (4\pi h)^{-d/2}\exp(-\|x\|^2_2/(4h))$, using Taylor expansion and properties of the heat kernel, we have
\begin{align*}
&K_{g}^h\rho_{0} =  \rho_g^{1/2}\left(\frac{\rho_0}{\rho_g^{1/2}\ast G_{\beta^{-1}h}}\ast G_{\beta^{-1}h}\right) = \rho_g^{1/2}\left(\frac{\rho_0}{\rho_g^{1/2}+h\beta^{-1}\Delta \rho_g^{1/2}}\ast G_{\beta^{-1}h}\right)+\mathcal{O}(h^2)\\
=&\rho_g^{1/2}\left[\left(\rho_g^{-1/2}\rho_0(1-h\beta^{-1}\rho_g^{-1/2} \Delta\rho_g^{1/2})\right)\ast G_{\beta^{-1}h}\right] + \mathcal{O}(h^2)\\
=& \rho_k - h\beta^{-1}\rho_0\rho_g^{-1/2}\Delta\rho_g^{1/2} + h\beta^{-1}\rho_g^{1/2}\Delta\left[\rho_g^{-1/2}\rho_0\right] + \mathcal{O}(h^2)\\
=& \rho_0 + h\beta^{-1}\Delta\rho_0 + h\beta^{-1}\left(-\rho_0 \rho_g^{-1/2}\Delta\rho_g^{1/2} + \rho_0\rho_g^{1/2}\Delta\rho_g^{-1/2} + 2 \rho_g^{1/2}\nabla\rho_g^{-1/2}\cdot \nabla \rho_0 \right) + \mathcal{O}(h^2)\\
=& \rho_0+ h\beta^{-1}\Delta\rho_0 + h \nabla \cdot (\rho_0 \nabla g_h) + \mathcal{O}(h^2)\,. 
\end{align*}

Hence, by recalling that $\nabla\log K_g^{h,\beta}\rho_{0} = \nabla K_g^{h,\beta}\rho_0/K_g^{h,\beta}\rho_0$, we can arrive the desired results. 
}

Recalling the definition of the Moreau envelope of \( g \) in \eqref{def_moreau}, we next recall some key properties of the Moreau envelope \cite{durmus2018efficient}.
\begin{enumerate}
    \item For any \( x \in \mathbb{R}^d \),
    \[
    0 \leq g(x) - g_h(x) \leq L_g^2 h\,.
    \]
    \item \( g_h \) is convex, and the function \( \frac{1}{Z_{g_h}} \exp(-\beta g_h) \) defines a valid probability density function, where  
    \[
    Z_{g_h} = \int_{\mathbb{R}^d} \exp(-\beta g_h(y)) \, dy\,.
    \]
\end{enumerate}
The last statement follows from 
\[
\nabla g_h(x) \in \partial g(\prox_g^h(x))\,.
\]

Moreover, recalling the proposed particle evolution scheme in \eqref{split_scheme}, the first step consists of a gradient descent step with respect to \( f \). By applying the change of variable formula for the probability density function, we obtain
\[
\rho_{k+\frac{1}{2}} = \rho_k + h \nabla\cdot(\rho_k\nabla f) + \mathcal{O}(h^2)\,.
\]
Consequently, after applying the kernel \( K_g^{h,\beta} \) and using the result in Lemma \ref{lemma_kernel}, we have the approximation formula
\begin{equation}
\label{density_k}
K_g^{h,\beta}\rho_{k+\frac{1}{2}} = \rho_k + h\nabla\cdot(\rho_k\nabla (f+g_h)) + h\beta^{-1}\Delta\rho_{k} + \mathcal{O}(h^2)\,.
\end{equation}
Thus, the density function \( K_g^{h,\beta}\rho_{k+\frac{1}{2}} \) obtained from the kernel formula \eqref{rho_T_BRWP} provides a first-order approximation to the evolution of the Fokker–Planck equation with drift term \( \nabla(f+ g_h) \).

Next, the iterative sampling scheme in \eqref{split_scheme} can be rewritten more compactly as
\begin{align}
\label{appendix_particle}
x^{k+1} =& x^k - h\nabla f(x^k) -h\nabla g_h(x^k-h\nabla f(x^k)) - h\beta^{-1}\nabla\log K_g^{h,\beta}\rho_{k+\frac{1}{2}}(x^k-h\nabla f(x^k)) \,.
\end{align}

Our convergence analysis will examine the convergence of the density \(\rho_k\) to \(\rho^*\) in terms of the Rényi divergence \( R_q \) for \( q \in [2, \infty) \). The R\'enyi divergence is defined as
\[
R_q(\mu\|\nu) := \frac{1}{q-1} \log\left(F_q(\mu\|\nu)\right), \, \text{ where }  \,F_q(\mu\|\nu) = \int_{\mathbb{R}^d} \frac{\mu^q}{\nu^{q-1}} \,dx\,.
\]
We define the Rényi information \( G_q(\mu\|\nu) \) as the time derivative of \( F_q(\mu\|\nu) \)
\begin{equation}
\label{def_Renyi_infor}
G_q(\mu\|\nu) = \int_{\mathbb{R}^d} \left(\frac{\mu}{\nu} \right)^q \left\|\nabla\log\frac{\mu}{\nu}\right\|_2^2 \nu \,dx\,.
\end{equation}
A key consequence of the Poincaré inequality is the following relationship regarding the time derivative of the R\'enyi divergence along the Langevin dynamics.

\begin{lemma}[\cite{ula_wibisono}]
\label{lemma_information}
Suppose \( \rho_h^* \) satisfies the Poincaré inequality with constant \( \alpha_d > 0 \). Then, for any \( q \geq 2 \), we have
\begin{equation}
\frac{G_q(\rho\|\rho_h^*)}{F_q(\rho\|\rho_h^*)} \geq \frac{4\alpha_d}{q^2} \left[1 - \exp(-R_q(\rho\|\rho_h^*))\right]\,.
\end{equation}
\end{lemma}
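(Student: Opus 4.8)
The plan is to follow the variational argument of Vempala--Wibisono \cite{ula_wibisono}: the asserted inequality is a static consequence of the Poincaré inequality, obtained by applying it to a single well-chosen test function, so no dynamics need enter. Write $\nu=\rho_h^*$ and set $h=\rho/\nu$, which is a density ratio with $\int_{\mathbb{R}^d} h\,\nu\,dx=\int_{\mathbb{R}^d}\rho\,dx=1$, and take $\psi=h^{q/2}$. Two elementary identities drive everything. First, $\int_{\mathbb{R}^d}\psi^2\,\nu\,dx=\int_{\mathbb{R}^d}(\rho/\nu)^q\,\nu\,dx=F_q(\rho\|\nu)$. Second, by the chain rule $\nabla\psi=\tfrac{q}{2}\,h^{q/2}\nabla\log h$, hence
\[
\int_{\mathbb{R}^d}\|\nabla\psi\|_2^2\,\nu\,dx=\frac{q^2}{4}\int_{\mathbb{R}^d}\Big(\frac{\rho}{\nu}\Big)^q\Big\|\nabla\log\frac{\rho}{\nu}\Big\|_2^2\,\nu\,dx=\frac{q^2}{4}\,G_q(\rho\|\nu),
\]
which is exactly the Rényi information in \eqref{def_Renyi_infor}.

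Next I would apply the Poincaré inequality for $\rho_h^*$ to $\psi$. Because $\psi=h^{q/2}$ need not be bounded, this step should be justified by truncation, applying the inequality to $\psi_M=\min(h,M)^{q/2}$ and letting $M\to\infty$ by monotone convergence on both sides (the cases $F_q=\infty$ or $G_q=\infty$ being trivial). Together with the two identities this gives
\[
\frac{q^2}{4}\,G_q(\rho\|\nu)=\int_{\mathbb{R}^d}\|\nabla\psi\|_2^2\,\nu\,dx\;\geq\;\alpha_d\left(F_q(\rho\|\nu)-\Big(\int_{\mathbb{R}^d}\psi\,\nu\,dx\Big)^2\right),
\]
so it remains only to bound the mean $\int_{\mathbb{R}^d}\psi\,\nu\,dx=\mathbb{E}_\nu[h^{q/2}]$ from above by a suitable power of $F_q$.

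That mean bound is the one non-routine step. For $q\geq 2$, write $q/2=\theta q+(1-\theta)$ with $\theta=\tfrac{q-2}{2(q-1)}\in[0,1)$; then Hölder's inequality (equivalently, log-convexity of $L^p(\nu)$ norms) together with $\mathbb{E}_\nu[h]=1$ gives
\[
\mathbb{E}_\nu\big[h^{q/2}\big]\;\leq\;\big(\mathbb{E}_\nu[h^q]\big)^{\theta}\big(\mathbb{E}_\nu[h]\big)^{1-\theta}=F_q^{(q-2)/(2(q-1))},
\]
so $\big(\int_{\mathbb{R}^d}\psi\,\nu\,dx\big)^2\leq F_q^{(q-2)/(q-1)}=F_q\cdot F_q^{-1/(q-1)}=F_q\,\exp(-R_q(\rho\|\nu))$, using $R_q=\tfrac{1}{q-1}\log F_q$. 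Substituting this into the Poincaré estimate and dividing by $F_q$ yields $\tfrac{q^2}{4}\,G_q/F_q\geq\alpha_d\big(1-\exp(-R_q)\big)$, which is the claim. I expect the main obstacle to be essentially bookkeeping: choosing the interpolation exponent $\theta$ so that the squared mean collapses to precisely the factor $\exp(-R_q)$, and verifying the integrability that legitimizes invoking Poincaré for the unbounded test function $h^{q/2}$.
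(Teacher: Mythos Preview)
The paper does not supply its own proof of this lemma; it is quoted verbatim from Vempala--Wibisono \cite{ula_wibisono}. Your proposal is exactly the argument given there: apply Poincar\'e to the test function $\psi=(\rho/\rho_h^*)^{q/2}$, identify $\int\psi^2\,d\rho_h^*=F_q$ and $\int\|\nabla\psi\|^2\,d\rho_h^*=(q^2/4)G_q$, and bound the squared mean $\big(\mathbb{E}_{\rho_h^*}[\psi]\big)^2$ via H\"older interpolation between $L^1$ and $L^q$. Your choice $\theta=(q-2)/(2(q-1))$ is the correct one, and the resulting exponent $(q-2)/(q-1)$ collapses precisely to $F_q\exp(-R_q)$, as you note. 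The truncation remark for unbounded $\psi$ is a standard and sufficient justification.

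One bookkeeping caveat worth flagging: as written in the paper, the Poincar\'e inequality has $\alpha_d$ on the \emph{right} (variance $\leq\alpha_d\cdot$ energy), which would give $G_q/F_q\geq\tfrac{4}{\alpha_d q^2}(1-e^{-R_q})$ rather than $\tfrac{4\alpha_d}{q^2}(\cdots)$. Your displayed inequality $\int\|\nabla\psi\|^2\,\nu\geq\alpha_d\,\mathrm{Var}_\nu(\psi)$ uses the reciprocal convention, treating $\alpha_d$ as the spectral gap. That is the convention implicitly required for the lemma's conclusion and for its use in Theorem~\ref{thm_regu_density}, so your proof matches the intended statement; the discrepancy is a notational inconsistency internal to the paper, not an error on your part.
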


By employing the interpolation argument and establishing bounds for the discretization error, we can prove the convergence of the proposed sampling scheme to the target density as follows.
\begin{theorem}
\label{thm_regu_density}
Let $x^0 \sim \rho_0$ be initial particles and $L = L_f + L_{g_h} + L_\rho$. When $h \leq (\sqrt{2}-1)/L)$ and satisfies the condition in Lemma \ref{lemma_kernel}, we have the following convergence of Algorithm \ref{Alg:splitting} with respect to the Rényi divergence.
\begin{enumerate}
\blue{
    \item For the convergence towards the regularized target density \( \rho_h^* \):
    \begin{equation}
    \begin{aligned}
        R_q(\rho_k\|\rho_h^*) \leq
        \begin{cases}
            R_q(\rho_0\|\rho_h^*) 
            - kh \left( 
                \dfrac{\alpha_d}{q} 
                \left( 1 - \dfrac{2L^2 h^2}{(1 - hL)^2} \right)
                - q L_f^2 (L + L_f)^2 h^2 d 
            \right) + \mathcal{O}(h^3) \\
            \qquad \text{if } R_q(\rho_0\|\rho_h^*) \geq 1; \\[1.5ex]
            
            R_q(\rho_0\|\rho_h^*) 
            \exp\left[
                - kh \dfrac{\alpha_d}{q}
                \left( 1 - \dfrac{2L^2 h^2}{(1 - hL)^2} \right)
            \right] \\
            \quad + \dfrac{q^2 L_f^2 (L + L_f)^2 h^2 d}{\alpha_d} 
            + \mathcal{O}(h^3) \\
            \qquad \text{if } R_q(\rho_0\|\rho_h^*) < 1.
        \end{cases}
    \end{aligned}
    \end{equation}
}
    
    \item For the convergence towards the target density \( \rho^* \):
    \begin{equation}
    \begin{aligned}
        R_q(\rho_k\|\rho^*) \leq
        \begin{cases}
            R_{2q-1}(\rho_0\|\rho_h^*)  
            - t_k \Bigg( 
                \dfrac{\alpha_d}{2q-1}
                \left( 1 - \dfrac{2L^2 h^2}{(1 - hL)^2} \right) \\
                \quad - (2q - 1) L_f^2 (L + L_f)^2 h^2 d 
            \Bigg) 
            + c(q) L_g^2 h + \mathcal{O}(h^3) \\
            \qquad \text{if } R_q(\rho_0\|\rho_h^*) \geq 1; \\[1.5ex]
            
            R_{2q-1}(\rho_0\|\rho_h^*) 
            \exp\left[ 
                - t_k \dfrac{\alpha_d}{2q - 1}
                \left( 1 - \dfrac{2L^2 h^2}{(1 - hL)^2} \right)
            \right] \\
            \quad + \dfrac{(2q - 1)^2 L_f^2 (L + L_f)^2 h^2 d}{\alpha_d}
            + c(q) L_g^2 h + \mathcal{O}(h^3) \\
            \qquad \text{if } R_q(\rho_0\|\rho_h^*) < 1.
        \end{cases}
    \end{aligned}
    \end{equation}
    where \( c(q) = \dfrac{q(2q - 1)}{(2q - 1)^2} \).
\end{enumerate}

\end{theorem}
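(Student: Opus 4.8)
The plan is to follow the standard interpolation-argument machinery for Rényi divergence along a discretized diffusion, adapted to our deterministic splitting scheme. The key observation is that, by Lemma~\ref{lemma_kernel} together with the density identity~\eqref{density_k}, the one-step map~\eqref{appendix_particle} is, up to $\mathcal{O}(h^2)$, the time-$h$ flow of the Fokker--Planck equation with potential $f + g_h$, i.e.\ of the Langevin dynamics whose stationary measure is exactly $\rho_h^*$. So I would first set up a continuous interpolation $\rho_{k,s}$ for $s\in[0,h]$ connecting $\rho_k$ to $\rho_{k+1}$, built from the probability-flow ODE~\eqref{def_ODE} with the drift frozen at the start of the step (the usual "one-step interpolation" in the ULA-with-Rényi analyses of \cite{ula_wibisono,ula_vempala}). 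Differentiating $F_q(\rho_{k,s}\|\rho_h^*)$ in $s$, the exact Fokker--Planck part contributes the negative Rényi-information term $-G_q(\rho_{k,s}\|\rho_h^*)$, and the mismatch between the frozen drift and the true drift produces an error term controlled by $\|\nabla\log\rho_{k,s} - \nabla\log(\text{true flow})\|$ and the $L$-Lipschitz constants of $\nabla f$, $\nabla g_h$, and the score $\nabla\log\rho_t$; this is where the composite Lipschitz constant $L = L_f + L_{g_h} + L_\rho$ and the step-size restriction $h\le(\sqrt2-1)/L$ enter, to keep $(1-hL)$ bounded away from $0$ and to absorb the $\tfrac{2L^2h^2}{(1-hL)^2}$ factor.

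Next I would combine this differential inequality with Lemma~\ref{lemma_information}: $G_q/F_q \ge \tfrac{4\alpha_d}{q^2}[1-\exp(-R_q)]$. Writing everything in terms of $R_q(\rho_{k,s}\|\rho_h^*) = \tfrac{1}{q-1}\log F_q$, one gets $\tfrac{d}{ds}R_q \le -\tfrac{c\alpha_d}{q}(1-e^{-R_q}) + (\text{discretization error})$, where the $(1-hL)^{-2}$ correction multiplies the leading coefficient. Integrating over one step and then summing over $k$, the dichotomy in the statement appears naturally: when $R_q\ge 1$ the factor $1-e^{-R_q}$ is bounded below by a constant, yielding the linear-decay (subtractive) bound; when $R_q<1$ one linearizes $1-e^{-R_q}\approx R_q$ and solves the resulting linear recursion, producing the exponential-contraction bound plus the stationary error floor $\tfrac{q^2 L^2(L+L_f)^2 h^2 d}{\alpha_d}$. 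The dimension factor $d$ and the structure $L^2(L+L_f)^2 h^2 d$ of the bias term come from bounding the second moment of the one-step displacement and of the discretization error of the frozen-drift approximation; the $\mathcal{O}(h^3)$ collects all higher-order Taylor remainders, including those from Lemma~\ref{lemma_kernel} and from $\rho_{k+1/2} = \rho_k + h\nabla\cdot(\rho_k\nabla f)+\mathcal{O}(h^2)$.

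For part (2), convergence to the true target $\rho^*$, I would invoke Lemma~\ref{lemma_moreau}(1): $0\le g(x)-g_h(x)\le L_g^2 h$, hence $R_q(\rho\|\rho^*) \le R_{2q-1}(\rho\|\rho_h^*) + (\text{const})\cdot L_g^2 h$ by the standard weak-triangle / change-of-measure inequality for Rényi divergence (the shift from $q$ to $2q-1$ is exactly the price of the Hölder step in that inequality, and it is why the index is inflated and why $c(q)=\tfrac{q(2q-1)}{(2q-1)^2}$ appears). Applying part~(1) at order $2q-1$ and adding the $c(q)L_g^2 h$ bias gives the claimed estimate.

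The main obstacle I expect is making the one-step interpolation error rigorous in the nonsmooth regularized setting: one must control $\nabla\log\rho_{k,s}$ along the interpolation uniformly, which requires the convexity and Lipschitz-score assumptions to propagate regularity of $\rho_{k,s}$, and one must carefully track that the kernel approximation $\nabla\log K_g^h\rho_{k+1/2}$ only equals the true score $\nabla\log\rho(\cdot,t_k+h)$ up to $\mathcal{O}(h^2)$ \emph{almost everywhere} (Lemma~\ref{lemma_kernel}), so the error terms involving $g_h$ and its Hessian must be handled under the $h<1/(L_{g_h}d^2)$ restriction to ensure the Laplace-method expansion underlying the kernel is valid. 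Keeping the constants in the discretization error in the precise form $L^2(L+L_f)^2 h^2 d$ (rather than a looser $\mathrm{poly}(L)h^2 d$) is the delicate bookkeeping part, and it hinges on splitting the error into the drift-freezing error (scaling like $L\cdot(L+L_f)h$ per unit displacement, displacement itself $\sim(L+L_f)h$) and the score-approximation error from the kernel.
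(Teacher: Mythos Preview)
Your proposal is correct and follows essentially the same route as the paper: the paper likewise builds a continuous-time interpolation $x_t$ on $[kh,(k+1)h]$, derives the differential inequality $\partial_t R_q \le -\tfrac{q}{2}\tfrac{G_q}{F_q} + \tfrac{q}{2F_q}\int\|\Lambda\|^2(\rho_t/\rho_h^*)^q\rho_h^*\,dx_t$ (Lemma~\ref{lemma_time_derivative}), bounds the discretization error $\Lambda$ via the Lipschitz constants to produce the $\tfrac{2L^2h^2}{(1-hL)^2}G_q$ and $L^2(L+L_f)^2h^2 d\,F_q$ terms (Lemma~\ref{lemma_dis_error}), invokes Lemma~\ref{lemma_information} and splits on $R_q\gtrless 1$ exactly as you describe, and for part~(2) uses Lemma~\ref{lemma_moreau} together with the weak-triangle inequality $R_q(\rho_k\|\rho^*)\le \tfrac{q-1/2}{q-1}R_{2q}(\rho^*\|\rho_h^*)+R_{2q-1}(\rho_k\|\rho_h^*)$ to pass from $\rho_h^*$ to $\rho^*$.
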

The proof is provided in the supplementary material \ref{appendix_convergence}. We remark that for the convergence to \( \rho_h^* \), when \( R_q(\rho_0\|\rho_h^*) < 1 \), the asymptotic bias induced by the discretization is of order \( \mathcal{O}(h^2) \), which is smaller than that of sampling methods with Brownian motion, where the bias is of order \( \mathcal{O}(h) \).


\subsection{Convergence without smooth assumption}
\label{sec_analysis_nonsmooth}
\blue{In this section, we establish convergence results without relying on the smoothness assumption (A.2), the Poincaré inequality (A.3), or the smoothness of the score function stated in assumption (A.4) used in the previous section.

To proceed, in addition to assumption (A.1), we introduce the following two additional assumptions:

\begin{enumerate}[label=B.\arabic*:]
    \item The nonsmooth points of \( g \) are contained in \( B(0, R) \), and \( g \) is \( L \)-smooth outside \( B(0, R) \).
    \item The functions \( \rho_k(x) \), \( 1/\rho_k(x) \), \( \|\nabla \log \rho_k(x)\|_2 \), \( 1/\rho_{g_h}^{1/2}(x) \), and \( \|\beta \nabla g_h(x)\|_2 \) are all bounded by a constant \( C_R \) within the region \( B(0, 3R) \).
\end{enumerate}

The second assumption is easily verified for many practical nonsmooth distributions. For instance, when \( g(x) = \|x\|_1 \), we have \( \|\nabla g_h\|_2 \leq 1 \), and
\begin{align}
\int \exp\left(-\frac{g_h(x)}{2}\right) dx &= \left[\int_{-h}^h \exp\left(-\frac{x^2}{4h}\right) dx + 2 \int_h^{\infty} \exp\left(-\frac{x}{2} + \frac{h}{4}\right) dx \right]^d \\
&= \left[\int_{-h}^h \exp\left(-\frac{x^2}{4h}\right) dx + 4 \exp\left(-\frac{h}{4} \right) \right]^d = 4 + \mathcal{O}(h)\,,\notag
\end{align}
which implies that \( 1/\rho_{g_h}^{1/2}(y) \leq \exp(3/(2\sqrt{d}) R)/4 + \mathcal{O}(h) \) for \( y \in B(0, 3R) \), and is therefore bounded. We note that for the case of the \( L_1 \) norm, we can choose \( R = 1 \), so that \( C_R \geq \exp(3/(2\sqrt{d}))/4 \), which is independent of \( h \).

Next, the following lemma establishes the approximation accuracy of the kernel formula $K_g^{h,\beta} \rho_k$ introduced in \eqref{rho_T_BRWP} for nonsmooth functions. The proof is based on a quantitative Laplace approximation for Lipschitz functions, as shown in Lemma \ref{Lemma_laplacian} in the Appendix.

\begin{lemma}
\label{lemma_kernel_nonsmooth}
Under Assumptions (B1) and (B2), for $0 < \delta < 1/2$, $h\leq R^{2/(1-2\delta)}$, and for
\begin{equation}
    h \leq \left[ \frac{4}{\beta} \min\left\{ \left|\ln\left(\frac{2C_R^2 \omega_d \exp(\beta/4)}{1 + C_R^2}\right)\right|,\,\left|\ln\left(\frac{(C_R+2RC_R^2) \omega_d \exp(\beta/4)}{1 + C_R^2(1+2C_{g,1})}\right)\right|,\, (1/2 + d - \delta) \right\} \right]^{\frac{1}{2\delta}}\,,
\end{equation}
where $\omega_d$ is the volume of the unit ball in $\mathbb{R}^d$, $C_{g_1}$ is the first moment of $\rho_g$, we have
\begin{equation}
\left\| \nabla \log K_g^{h,\beta}\rho_{k}(x) - \nabla  \log \rho_{k}(x) \right\|_2 \leq C_K h^{1/2 - \delta} \,,
\end{equation}
with constant $C_K = 4\sqrt{d}(C_R^3+2RC_R^4) + 4C_R^4$.
\end{lemma}

We remark that, for the result of the above lemma, the approximation in Lemma~\ref{lemma_kernel} remains valid for points \( x \) that are sufficiently away from the nonsmooth region, for instance \(\|x\|_2 \geq R + h^{1/2 - \delta} \). In this case, the difference between $K_g^{h,\beta}\rho_k$ and the right-hand side of \eqref{eqn:K_g_FPK} is controlled and remains of order \( \exp(-h^{-2\delta}) \), as can be seen in equation~\eqref{eqn:concentration_heat}, due to the concentration of the heat kernel mass within the ball \( B(0, h^{1/2 - \delta}) \).
Therefore, in regions away from the nonsmooth point, the approximation given by \( K_g^{h,\beta} \) can still be reliably used for the evolution of the Fokker-Planck equation.

For notational simplicity, we rewrite the proposed algorithm as follows
\begin{equation}
\label{particle_x_k}
\begin{cases}
x^{k+\frac{1}{3}} = x^k - h \nabla f(x^k), \\
x^{k+\frac{2}{3}} = x^{k+\frac{1}{3}} - h \nabla g_h(x^{k+\frac{1}{3}})\,,\\
x^{k+1} = x^{k+\frac{2}{3}} - h \nabla \log K \rho_{k+\frac{1}{3}}(x^{k+\frac{1}{3}})\,,
\end{cases}
\end{equation}
where \(x^k \sim \rho_k\). This update corresponds exactly to the evolution in the proposed BRWP-splitting Algorithm \ref{Alg:splitting}, with the update separated into three substeps.

We now analyze the convergence of the sequence $\{x^k\}$ by quantifying the one-step energy decay in the scheme \eqref{particle_x_k}. The analysis follows ideas from \cite{salim_splitting,durmus2019analysis} whose proof can be found in Appendix \ref{appendix_nonsmooth_proof}.

\begin{lemma}
\label{lemma_w2_trhok}
Assuming assumptions A.1, B.1, and B.2. Suppose $h$ satisfies the condition in Lemma \ref{lemma_kernel_nonsmooth}, and that \(hL_f < 1\). Then,
\begin{align}
\label{eqn:lemma_w2}
 2h\beta^{-1}\KL(\rho_{k+\frac{2}{3}}\|\rho^*) 
\leq & (1-h\alpha)W_2^2(\rho_k,\rho^*)-W_2^2(\rho_{k+1},\rho^*)\\
&+  h^{3/2-\delta}\beta^{-1}C_K^2 + h^{3/2-\delta}\beta^{-1} W^2_2(\rho_{k+\frac{2}{3}},\rho^*) + \mathcal{O}(h^2)\,.\notag
\end{align}
\end{lemma}

Based on Lemma \ref{lemma_w2_trhok}, by summing up both sides over index $k$, we can have the following result where the first part follows from the convexity of the KL divergence, and the second part follows from the Talagrand inequality where the proof can be found in Appendix \ref{appendix_nonsmooth_proof}.
\begin{theorem}
\label{thm_KL_nonsmooth}
With the same assumption in Lemma \ref{lemma_w2_trhok}, we have the following.
\begin{enumerate}
    \item 
 Define the averaged distribution \(\trho_k\) as
\[
\trho_k = \frac{1}{k+1} \sum_{j=0}^k \rho_j\,.
\]
Then the KL divergence between the averaged distribution and the target distribution will satisfy
\begin{equation}
\KL(\trho_k\|\rho^*)\leq \frac{\alpha\beta}{\alpha\beta -h^{1/2-\delta}}\left[\frac{\beta}{2t_k}W_2^2(\rho_0,\rho^*)+h^{1/2-\delta}C_K^2\right] + \mathcal{O}(h)
\end{equation}
\item 
When we have the additional assumption that $h\leq (4/(\alpha\beta))^{2/(1-2\delta)}$, 
we have 
\begin{equation}
W_2^2(\rho_{k},\rho^*)\leq (1-h\alpha)^{k}W_2^2(\rho_0,\rho^*) + h^{1/2-\delta}\frac{C_K^2}{\alpha\beta}+ \mathcal{O}(h)\,.
\end{equation}
 \end{enumerate}
Here,  $0<\delta < 1/2$, where $C_K$ is given in Lemma \ref{lemma_kernel_nonsmooth}.
\end{theorem}
 }
 
\section{Generalization to Sampling with TV Regularization}
\label{sec_TV_imaging}

An important practical application of \( L_1 \)-norm regularization is its combination with total variation (TV) regularization for image denoising and restoration \cite{TV_1992}. In this context, we consider sampling from the distribution  
\begin{equation}  
\label{def_V_TV}  
\rho^*(u) = \frac{1}{Z} \exp(-V(u))\,,\quad V(u) = \|\phi - Fu\|_2^2 + \lambda \|D u\|_1\,,  
\end{equation}  
where \( u \in \mathbb{R}^d \) represents the image or signal, \( \phi \in \mathbb{R}^m \) is the noisy observation, and \( F \in \mathbb{R}^{d\times m} \) is a known forward operator with \( m \leq d \). The matrix \( D \in \mathbb{R}^{d\times 2d} \) denotes the discretized gradient operator for two-dimensional images.  
This formulation extends naturally to the more general setting where \( V(u) = f(u) + \|Ku\|_1 \) for an arbitrary function \( f \) and a linear operator \( K \). For clarity, we focus on sampling from \eqref{def_V_TV}. Compared to direct optimization of \( V(u) \), sampling-based algorithms provide a means to quantify uncertainty in the recovered image and facilitate Bayesian inference, which will be demonstrated in Section \ref{sec_NE}.

A common approach to sampling from \( \rho^* \) in \eqref{def_V_TV} is to compute the proximal operator of the TV norm using Chambolle's algorithm \cite{chambolle2004algorithm}, as in \cite{durmus2018efficient}. However, this requires solving an optimization problem at each iteration. Instead, we seek a more deterministic method by combining the BRWP-splitting scheme with proximal splitting techniques in convex optimization \cite{review_splitting}.

\blue{Since the proximal operator of the TV norm lacks a closed-form expression, we introduce an auxiliary variable \( p = D u \in \mathbb{R}^{2d} \) and reformulate the log-density as  
\begin{equation}  
V(u, p) = \|\phi - Fu\|_2^2 + \lambda \|p\|_1 + \gamma \|p - D u\|_1\,, 
\end{equation}  
where \( \gamma > 0 \) is a large regularization parameter enforcing \( p \) to be close to \( D u \). This transforms the sampling problem in \( u \) into a simultaneous sampling task over \( u \) and \( p \).  

We now recall the Condat-Vu primal-dual splitting algorithm \cite{condat_Vu} for optimization problems involving the sum of a smooth term and a nonsmooth term with a linear transformation.
Introduce the linear operator \( L = [I, -D]^T \) for notational convenience and note that the convex conjugate of the \( \ell_1 \)-norm is the indicator function \( \delta_{\|\cdot\|_{\infty} \leq 1} \). Inspired by the method proposed in \cite{habring2024subgradient}, we consider the following stochastic variant of the Condat-Vu scheme:
\begin{equation}
\label{PD_scheme_random}
\begin{cases}
    U^{k+1} = U^{k} - h \nabla \|\phi - FU^k\|_2^2 - h\gamma Y^k + \sqrt{2\beta^{-1}h}\zeta^k_{d}\\
    P^{k+1} = \prox_{\lambda \|\cdot \|_1}(P^k - h \gamma (-D)^T Y^k)+ \sqrt{2\beta^{-1}h}\zeta^k_{2d} \\
    Y^{k+1} = \prox_{\delta\|\cdot \|_{\infty}\leq 1}(Y^k + \tau \gamma L (2 [U^{k+1},P^{k+1}]^T - [U^k, P^k]^T))\,, 
\end{cases}
\end{equation}
where \( \zeta^k_d \) is a \( d \)-dimensional standard Gaussian noise term added to the primal update, and \( h, \tau > 0 \) are the primal and dual step sizes, respectively. We remark that for the standard Condat-Vu scheme, the first two updates are written as one substep, while we decompose them into two, as the \( L_1 \)-norm term only depends on \( p \).

Next, we consider the deterministic version of \eqref{PD_scheme_random} with the help of the score function. Replacing the Brownian motion term with the score function yields the deterministic update:
\begin{equation}
\label{PD_scheme_score}  
\begin{cases}
    u^{k+1} = u^{k} - h \nabla \|\phi - Fu^k\|_2^2 - h\gamma y^k - h\beta^{-1}\nabla \log \rho_{k+1}^u(u^k)\\
    p^{k+1} = \prox_{\lambda \|\cdot \|_1}(p^k - h \gamma (-D)^T y^k) - h\beta^{-1}\nabla \log \rho_{k+1}^p(p^k)  \\
    y^{k+1} = \prox_{\delta\|\cdot \|_{\infty}\leq 1}(y^k + \tau \gamma L (2 [u^{k+1},p^{k+1}]^T - [u^k, p^k]^T))\,, 
\end{cases}
\end{equation}
Here, \( u^{k+1} \sim \rho^u_{k+1} \) and \( p^{k+1} \sim \rho^p_{k+1} \).

Finally, as in Section \ref{sec_algo}, we apply the two-step splitting strategy to update the primal variables and apply the RWPO in \eqref{rho_T_BRWP} to approximate $\rho_{k+1}$ to arrive
\begin{align}  
&\begin{cases}
    u^{k+\frac{1}{2}} = u^k - h\gamma y^k, \\
    u^{k+1} = u^{k+\frac{1}{2}} - h\nabla\|\phi - F u^{k+\frac{1}{2}}\|_2^2 - h\beta^{-1} \nabla\log K_{\|\phi - F\cdot\|_2^2}^h \rho^u_{k+\frac{1}{2}}(u^{k+\frac{1}{2}})\,.
\end{cases} \\
&\begin{cases}
    p^{k+\frac{1}{2}} = p^k - h\gamma (-D y^k), \\
    p^{k+1} = S_{\lambda h}(p^{k+\frac{1}{2}}) - h\beta^{-1} \nabla\log K_{\lambda\|\cdot\|_1}^{h,\beta} \rho^p_{k+\frac{1}{2}}(p^{k+\frac{1}{2}})\,.
\end{cases}
\end{align} 

Moreover, the approximated score functions \( \nabla \log K_{\|\phi - F\cdot\|_2^2}^h \) and \( \nabla \log K_{\lambda\|\cdot\|_1}^{h,\beta} \) are defined analogously to \eqref{approx_score}. The proximal operator \( \prox_{\|\phi - F\cdot\|_2^2}^h \) can be computed explicitly as
\begin{equation}
 \label{L_2_prox}   
    \prox_{\|\phi - F\cdot\|_2^2}^h(v) = (I + hF^T F)^{-1} (v + hF^T\phi) = (I - hF^T F)(v + hF^T\phi) + \mathcal{O}(h^2)\,.
\end{equation}
This splitting scheme decomposes the primal update into two sequential steps:  
(i) a gradient descent step involving the dual variable \( y \), and  
(ii) a gradient descent step for the smooth part and a proximal step for the nonsmooth part with explicit score functions.

The full algorithm, incorporating the dual update and primal splitting, is summarized in Algorithm \ref{alg_TV_sampling}. Numerical experiments are presented in Section \ref{sec_NE}.  
}

\begin{algorithm}
\caption{Sampling Algorithm for Posterior Distribution with TV Regularization}
\label{alg_TV_sampling}
\begin{algorithmic}[1]
\Require Initial particles $\{u^0_i, \,p^0_i, \,y^0_i\}_{i=1}^N$, step size $h$, $\tau$, parameters $\gamma, \lambda$
\For{iteration $k = 1,2,\dots$ and each particle $i = 1, \dots, N$}
        \State {Gradient descent for the inner product term:}
        \[
        {u}^{k+\frac{1}{2}}_i = u^k_i + h\gamma D^T y^k_i\,, \quad
        {p}^{k+\frac{1}{2}}_i = p^k_i - h\gamma y^k_i\,.
        \]

        \State {Semi-implicit discretization of the probability flow ODE for the data fitting term:}
        \[
        u^{k+1}_i = {u}^{k+\frac{1}{2}}_i + \frac{1}{2}\left(u_i^{k+\frac{1}{2}}- hF^T(F{u}^{k+\frac{1}{2}}_i - g) -\sum_{j=1}^Nu_j^{k+\frac{1}{2}}M_{i,j}^u \right)\,,
        \]
        where $M^u_{i,j}$ is defined in \eqref{particle_scheme_general} with $g(v) = \|\phi-Fv\|_2^2$,  $\prox_g^h$ given in \eqref{L_2_prox}, and $x^{k+\frac{1}{2}}$ replaced by $u^{k+\frac{1}{2}}$. 

        \State {Semi-implicit discretization of the probability flow ODE for $L_1$ norm:}
        \[
        p^{k+1}_i = {p}^{k+\frac{1}{2}}_i + \frac{1}{2}\left(S_{h\lambda}(p^{k+\frac{1}{2}}_i) -  \sum_{j=1}^N p^{k+\frac{1}{2}}_j M^p_{i,j} \right)\,,
        \]
     where $M^u_{i,j}$ is defined in \eqref{particle_scheme_general} is defined in \eqref{def_Aij_Mij} with $x^{k+\frac{1}{2}}$ replaced by $p^{k+\frac{1}{2}}$.
 
         \State {Gradient ascent for the dual variable:}
        \[
        y^{k+1}_i = P_{\|\cdot\|_{\infty}\leq 1}\left\{y^k_i + \tau \gamma[I,-D] \begin{bmatrix} 2p^{k+1}_i - p^k_i \\ 2u^{k+1}_i - u^k_i \end{bmatrix}\right\};
        \]
        where $P_{\|\cdot\|_{\infty}\leq 1}$ is the projection to the $L_{\infty}$ ball defined as
        \[
       P_{\|\cdot\|_{\infty}\leq 1}(x_j) = \frac{x_j}{\max\{|x_j|,1\}}\,.
        \]
    \EndFor
\end{algorithmic}
\end{algorithm}


\medskip
\noindent\textbf{Remark:}
\blue{Before we move on to the numerical experiments, we made a remark regarding the connection between the proposed scheme and the transformer structure, which renders a deeper understanding and allows further application of the proposed approach.}
We now recall the interacting particle system formulation for transformers, as discussed in \cite{castin2025unified,geshkovski2023mathematical, sander2022sinkformers}. In a transformer, each data point, represented as a vector, namely a token, is processed iteratively through a series of layers with attention functions. A key component of each layer is the self-attention mechanism, which enables interactions among all tokens.
More specifically, in the simplified single-headed softmax self-attention mechanism, define  \( V \in \mathbb{R}^{d\times d}\) (value), \( Q\in \mathbb{R}^{m\times d} \) (query), and \( K\in \mathbb{R}^{m\times d} \) (key) as learnable matrices, and define the softmax function for $\omega\in\mathbb{R}^N$ as 
\[
\textrm{softmax}(\omega) = \left(\frac{\exp(\omega_i)}{\sum_{\ell=1}^N \exp(\omega_{\ell})}\right)_{1\leq j \leq N}\,.
\]  
The residual transformer structure can be written as a dynamic system\[
\frac{d}{dt}x^{k}  = \sum_{j=1}^N \text{softmax}((Q x_i^k \cdot K x_j^k)_j) V x^k_j\,,
\]
where the softmax function is evaluated at index $j$, which becomes
\[
x^{k+i}  =  x^{k} + h\sum_{j=1}^N \text{softmax}((Q x_i^k \cdot K x_j^k)_j) V x^k_j\,,
\]
 This formulation naturally represents the transformer as an interacting particle system, where the interaction kernel is given by \( Q x_i^k \cdot K x_j^k \). 

Leveraging this perspective, we rewrite the proposed iterative sampling scheme in \eqref{particle_scheme_general} as  
\begin{align}
\label{particle_softmax}
    x_i^{k+1} &= x_i^{k+\frac{1}{2}}+ \frac{1}{2} \left(\prox_g^h(x_i^{k+\frac{1}{2}}) - \sum_{j=1}^N\text{softmax}(U(i,j)) x_j^{k+\frac{1}{2}}\right),\\
   U(i,j) &= -\frac{\beta}{2} \left( \frac{\|x_i^{k+\frac{1}{2}} - x_j^{k+\frac{1}{2}}\|_2^2 - \|\prox_g^h(x_j^{k+\frac{1}{2}}) - x_j^{k+\frac{1}{2}}\|_2^2}{2h} - g(\prox_g^h(x_j^{k+\frac{1}{2}})) \right)\,,\notag
\end{align}
where \( x_i^{k+\frac{1}{2}} = x_i^k - h\nabla f(x_i^k) \).  
Here, the interaction kernel is modified by the new matrix operator \( U \), while the value matrix is replaced by gradient descent updates regarding \( f \). Additionally, the proximal term can be approximated as gradient descent of the regularization term $g$, which integrates the prior information into the dynamics, allowing convergence to the target distribution. Especially, when $g=\lambda\|x\|_1$, the shrinkage operator automatically promotes the sparsity of the variables. Since particle interactions are computed via the softmax function, the system \eqref{particle_softmax} can be efficiently implemented on modern GPUs, making it well-suited for high-dimensional sampling applications.

\section{Numerical Experiments}
\label{sec_NE}

In this section, we numerically verify the performance of the proposed sampling algorithm based on the splitting of the regularized Wasserstein proximal operator (BRWP-splitting, or BRWP for short). Specifically, we use the matrix operator constructed in Proposition \ref{prop_delta_aux_mij} for the first four examples, and the one defined in \eqref{def_Aij_Mij} for the last example to achieve better numerical performance. Numerical experiments include examples of sampling from a mixture distribution, Bayesian logistic regression, image restoration with $L_{1-2}$ TV regularization, uncertainty quantification with Bayesian inference, and Bayesian neural network training. \blue{In particular, the performance of the proposed algorithm will be compared with the Moreau-Yosida Unadjusted Langevin Algorithm (MYULA) \cite{durmus2018efficient} with the regularization parameter $\lambda = 2h$, and the Metropolis-adjusted Proximal Algorithm (PRGO) \cite{RGO_L1} where the appeared restricted Gaussian oracle is sampled by the accelerated gradient method employed in \cite{liang2022proximal}. 
\footnote{The code is in GitHub with the link \url{https://github.com/fq-han/BRWP-splitting}.}  
For the comparison between single-particle-system-based algorithms (such as MYULA and PRGO) and interacting-particle-system-based algorithms (such as BRWP-splitting), we assume that the same number of samples \( N \) is used in each iteration. In other words, each experiment involving MYULA is run with \( N \) independent Markov chains.
}

\subsection{Example 1}
We consider the sampling from a mixture of Gaussian distribution and Laplace distribution, where
\[
\rho^*(x) = \frac{1}{Z}\exp(-(f(x)+\lambda \|x\|_1))\,,\quad \exp(-f(x)) = \sum_{n=1}^M\exp\left(-\frac{(x-y_n)^2}{2\sigma^2}\right)\,,
\]
with $\sigma = 4$ and centers $y_n$ randomly distributed in $[-10,10]^d$. 
To quantify the performance of sampling algorithms, we consider the decay of KL divergence of the one-dimensional marginal distribution, i.e., we plot $\KL(\rho_j\|\rho^*_j)$  where
\[
 \rho_j(x_j) = \int_{\mathbb{R}^{d-1}}\rho(x)dx_1\cdots dx_{j-1}dx_{j+1}\cdots d x_d\,.
\]
The explicit marginal distribution is detailed in the supplementary material. 

We conduct numerical experiments for sampling from the mixture distribution in $d = 20$ and $50$, $\lambda = 0.1$, and $M=4$. Results of the BRWP-splitting are compared with MYULA and PRGO. In Fig.\,\ref{fig_mix_1} and Fig.\,\ref{fig_mix_2}, the decay of KL divergence of the marginal distribution when $j = 1$ and $d$, and the kernel density estimation using a Gaussian kernel with bandwidth $H = 0.1$ from generated samples are plotted where the density is approximated by
$$
\rho(x) = \sum_{j=1}^N \exp\left(-\frac{\|x - x_j\|_2^2}{2H}\right)\,.
$$
\begin{figure}
    \centering
\includegraphics[width=0.24\linewidth]{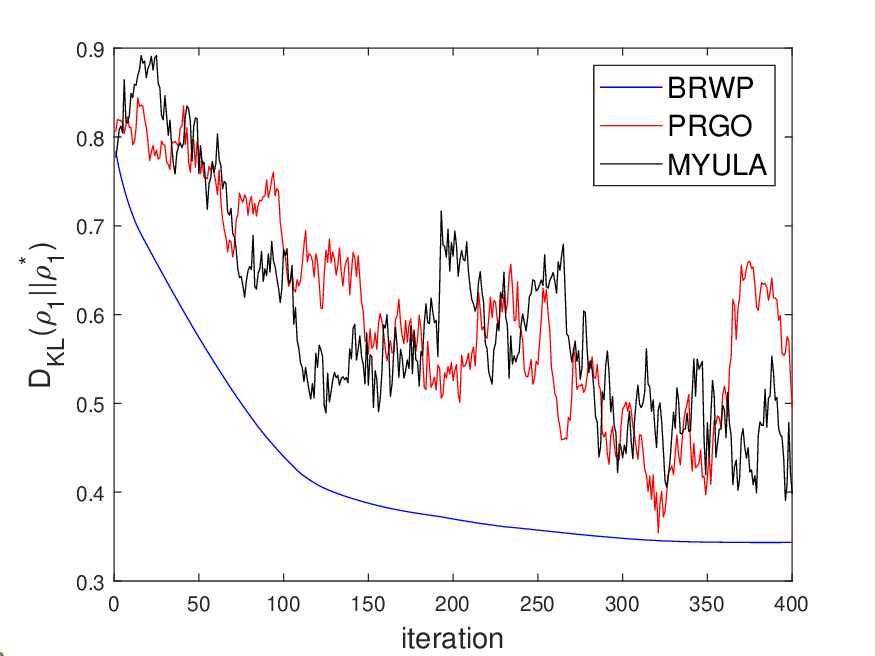}
\includegraphics[width=0.24\linewidth]
{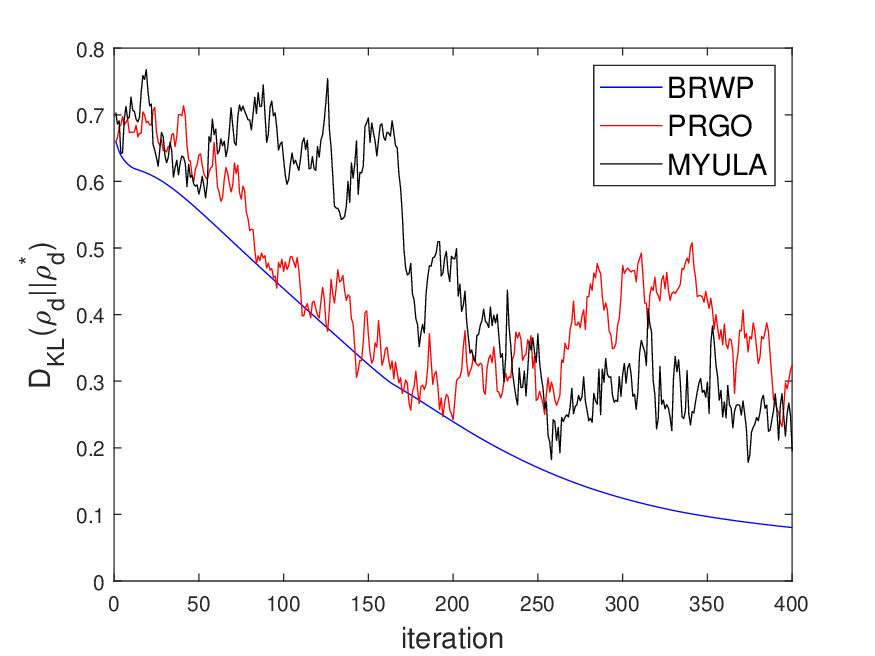}
\includegraphics[width=0.24\linewidth]{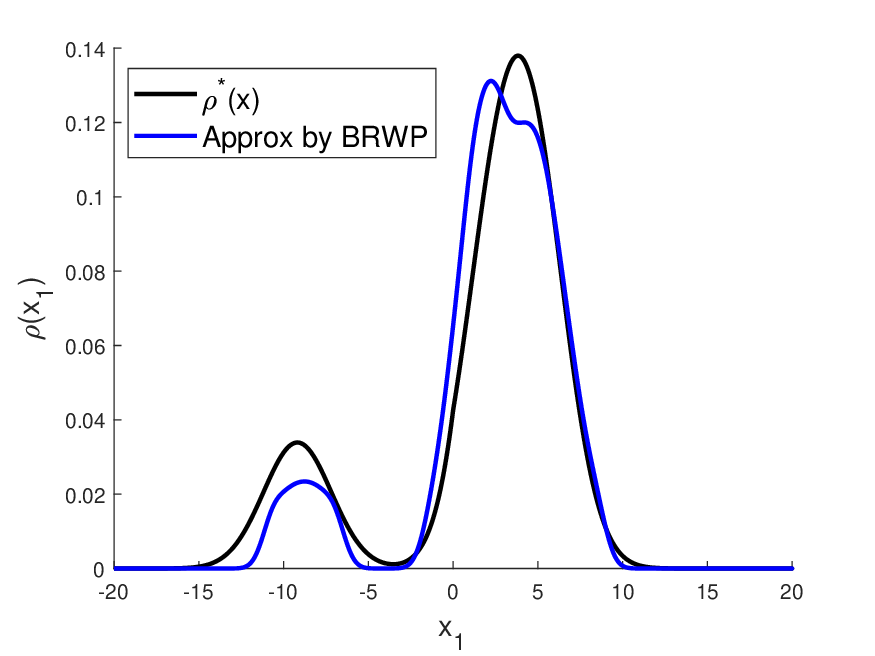}
\includegraphics[width=0.24\linewidth]{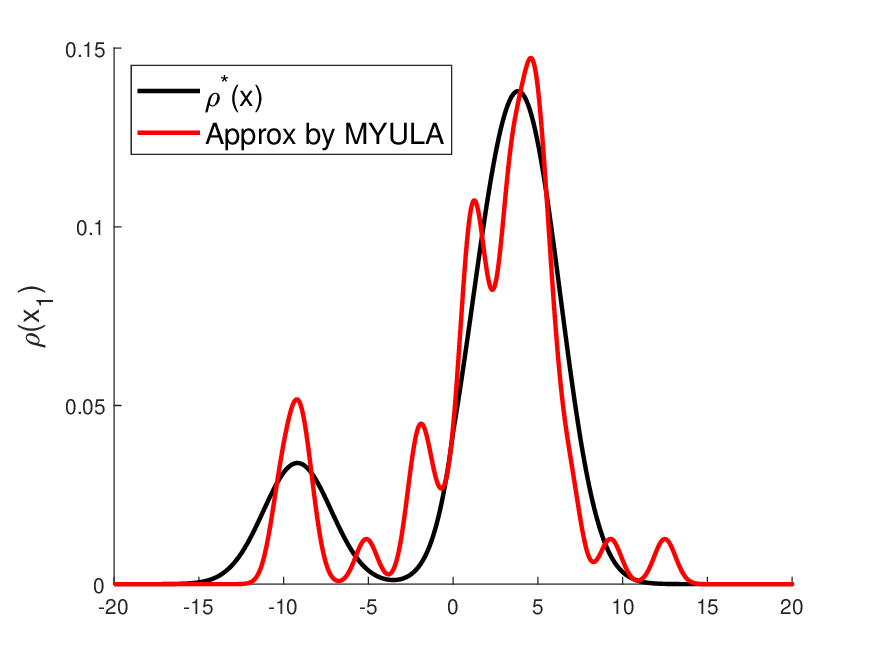}
    \caption{Example 1: Results in $d = 20$, step size $h = 0.02$, and $50$ particles. From left to right: the decay of KL divergence in the first and the last dimension, density approximated by particles generated by BRWP-splitting and MYULA in the first spatial variable.}
    \label{fig_mix_1}
\end{figure}
\begin{figure}
    \centering
\includegraphics[width=0.24\linewidth]{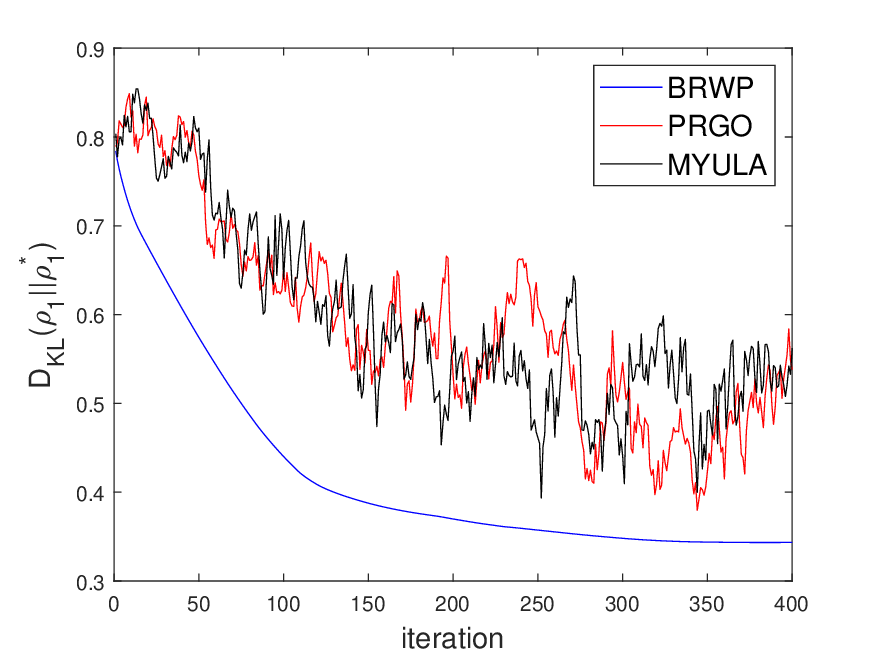}
\includegraphics[width=0.24\linewidth]
{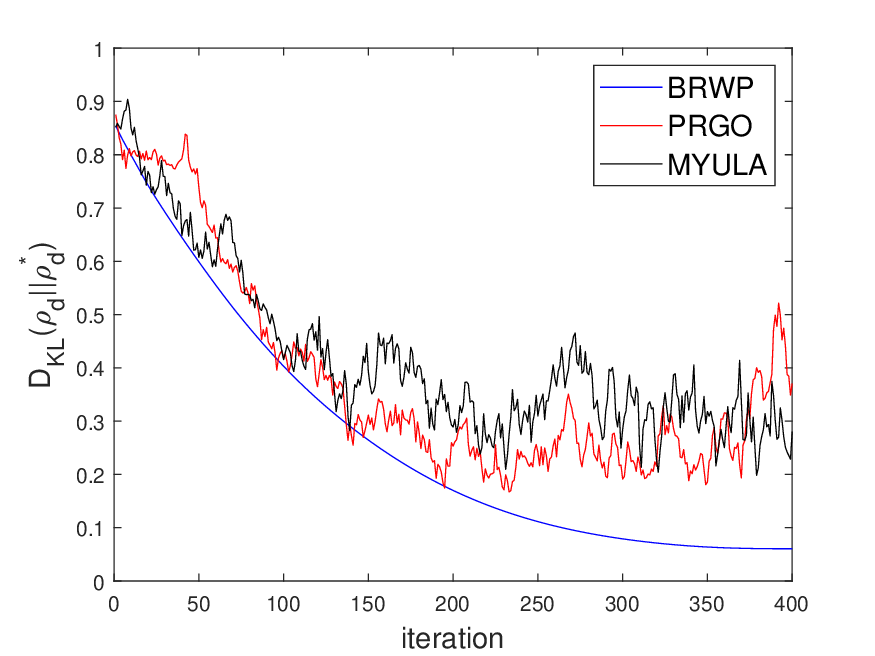}
\includegraphics[width=0.24\linewidth]{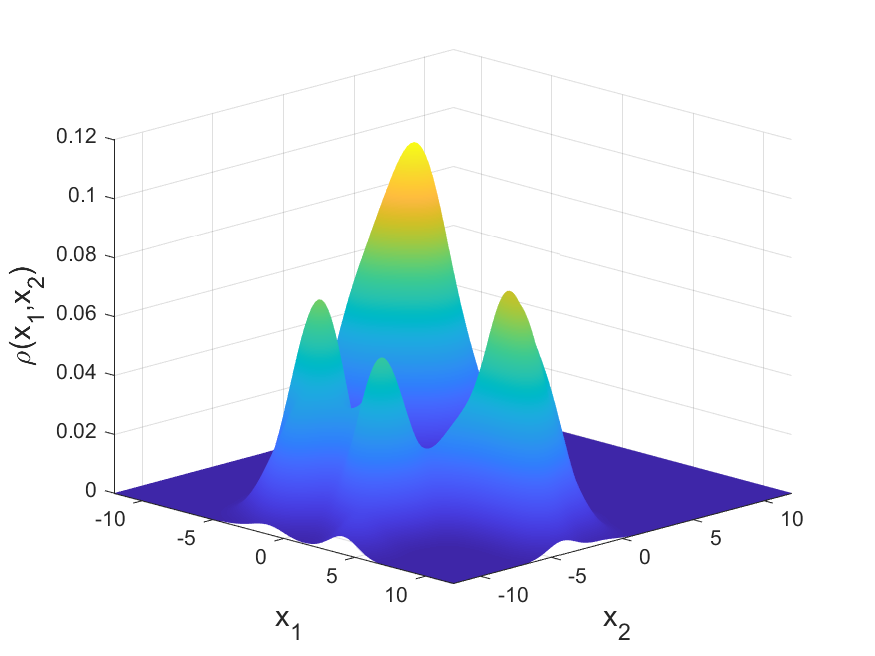}
\includegraphics[width=0.24\linewidth]{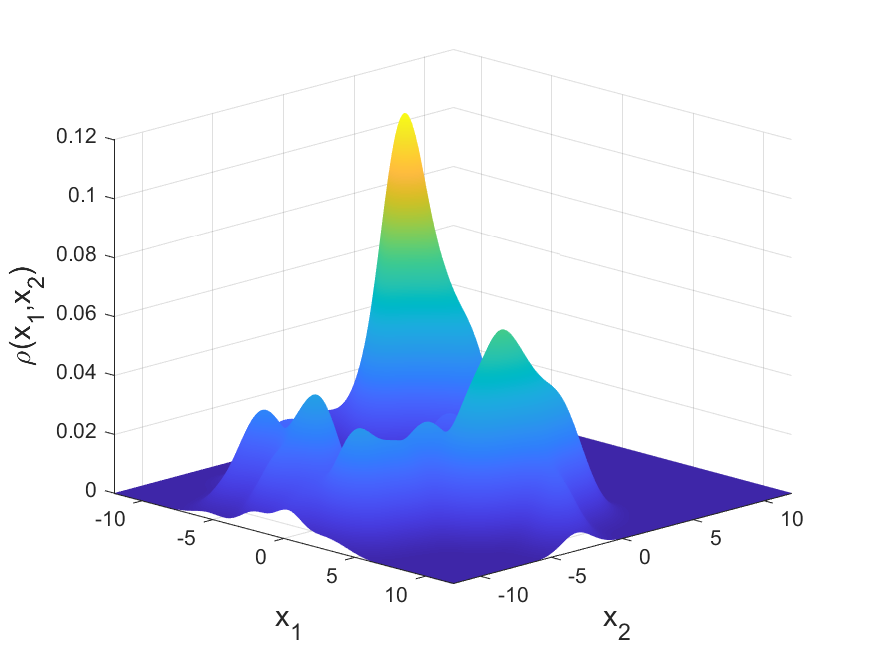}
    \caption{Example 1: Results in $d = 50$, step size $h = 0.02$, and $100$ particles. From left to right: the decay of KL divergence in the first and the last dimension, density approximated by particles generated by BRWP-splitting and MYULA in the first two spatial variables.}
    \label{fig_mix_2}
\end{figure}

Both experiments in Fig.\,\ref{fig_mix_1} and Fig.\,\ref{fig_mix_2} showed that the proposed BRWP-splitting scheme provides a more accurate approximation to the target distribution in terms of KL divergence and the density obtained from kernel density estimation.

\subsection{Example 2}  
The next experiment concerns the Bayesian logistic regression motivated by \cite{dalalyan2017theoretical}. The task is to estimate the unknown parameter $\theta \in \mathbb{R}^d$.
Given binary variable (label) $y = \{0,1\}$ under features (covariate) $x\in \mathbb{R}^n$, the logistic model for $y$ given $x$ can be modeled as
\begin{equation}
    \label{log_model}
    p(y=1|\theta,x) =\frac{\exp(\theta^T x)}{1+\exp(\theta^T x)}\,,
\end{equation}
for some parameter $\theta$ that we try to estimate. 

Suppose now we obtain a set of data pairs $\{(x_i,y_i)\}_{i=1}^n$ where each $y_i$ conditioned on $x_i$ is drawn from a logistic distribution with parameters $\theta^*$. Then using the Bayes rule, we can construct the posterior distribution of parameter $\theta$ in terms of data $\{y_i\}$. 
Denoting $Y = [y_1,\cdots,y_n]$, $X = [x_1,\cdots,x_n]$ and writing $\pi_0(x) = \exp(-\lambda\|x\|_1)$ to be the prior distribution, then the posterior distribution for parameters $\theta$ is computed as
\[
p(\theta|y) = p(y|\theta,x)p_0(\theta) = \frac{1}{Z}\exp\left(Y^TX\theta - \sum_{i=1}^N\log(1+\exp(\theta^T x_i)) - \lambda\|\theta\|_1\right)\,.
\]

For our numerical experiments, $x_i$ is normalized where each component is sampled from Rademacher distribution, i.e., which takes the values $\pm 1$ with probability $\frac{1}{2}$. Given $x_i$, we then draw $y_i$ from the logistic model \eqref{log_model} with $\theta = \theta^*$. The parameter $\theta^*\in\mathbb{R}^d$ contains only $d/4$ non-zero components with value $1$. 
We examine the performance of the algorithm by computing the $L_1$ distance between sample mean $\overline{\theta}$ and the true parameter $\theta^*$
\[
\frac{1}{d}\|\overline{\theta}-\theta^*\|_1\,.
\]
The regularization parameter is chosen as $\lambda = 3d/(2\pi^2)$, and the results are presented in Fig.\,\ref{fig_Bay_log}.
 
\begin{figure}
    \centering
    \includegraphics[width=0.48\linewidth]{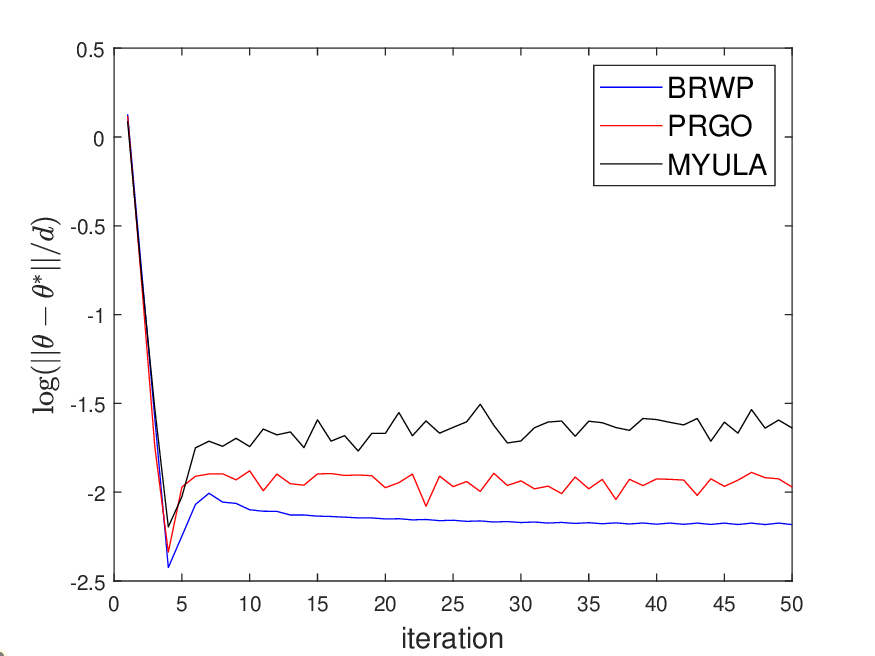}
    \includegraphics[width=0.48\linewidth]{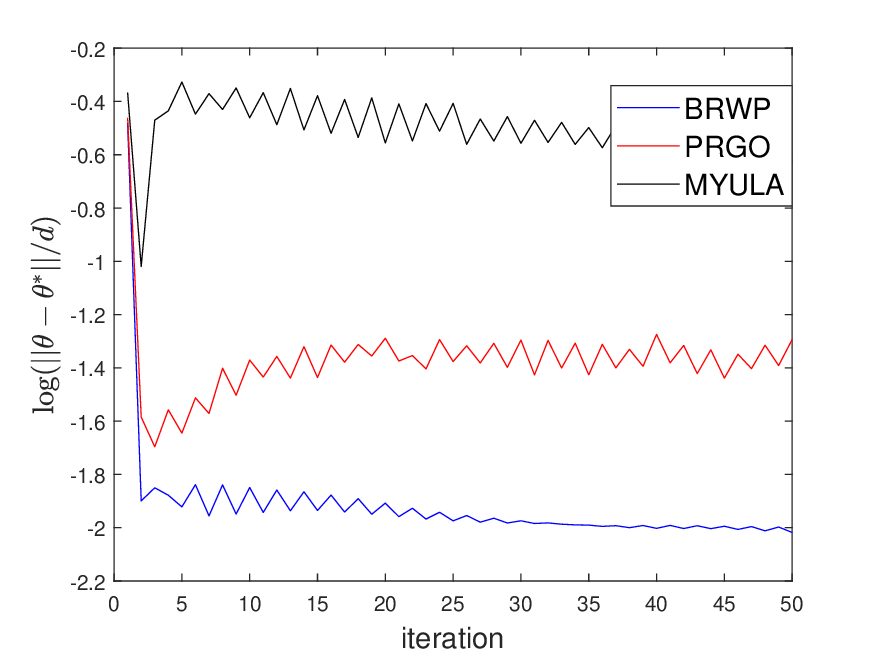}
\label{Bay_log}
\caption{Example 2: Logarithm of relative $L_1$ error $\log\left(\|\overline{\theta}-\theta^*\|_1/d\right)$ in Bayesian logistic regression for 100 particles and $h = 0.05$ with $d=20$ (left) and $d = 50$ (right). }
\label{fig_Bay_log}
\end{figure}

From Fig.\,\ref{fig_Bay_log}, it is clear that the proposed BRWP-splitting method provides a more accurate estimate of the mean parameter in this Bayesian logistic regression.

\subsection{Example 3}
In this example, we apply the proposed sampling algorithm in image denoising with $L_{1-2}$ regularization as proposed in \cite{l12_cs}.

The posterior distribution under consideration is 
\begin{equation}
\rho^*(u) = \frac{1}{Z}\exp\left(-\left(\frac{1}{2}\|Au-y\|_2^2 + \lambda (\|D u\|_1 - \|D u\|_2)\right)\right)\,,
\end{equation}
the first term in the exponent is a data-fitting term and the second term is the difference between $L_1$ and $L_2$ norm with the discrete gradient operator defined in section \ref{sec_TV_imaging} which promotes the sparsity of the image variation. Here, each $u$ corresponds to one single image. 
To tackle this, the log-density is split as
\begin{equation}
f = \|Au-y\|_2^2 - \lambda \|D u\|_2\,,\quad g = \lambda \|D u\|_1\,.
\end{equation}
To handle the second terms with $L_1$-TV norm, we apply the algorithm proposed in Algorithm \ref{alg_TV_sampling}.
We consider the case that $A$ is a noisy measurement operator such that
\[
A = I + \epsilon\,,
\]
where $\epsilon$ is a sparse Gaussian noise with mean $0$, variance $0.1$, that has $3d$ non-zero entries. For the exact image $z_{ex}$, the noisy image $z$ is taken as $ Az_{ex}  + \eta$ where  $\eta$ is a Gaussian noise with mean $0$ and variance $0.2$. The results obtained with $20$ samples and $h = 0.1$ are plotted in Fig.\,\ref{fig_TV_1} and Fig.\,\ref{fig_TV_2}.

\begin{figure}
    \centering
\includegraphics[width=0.24\linewidth]{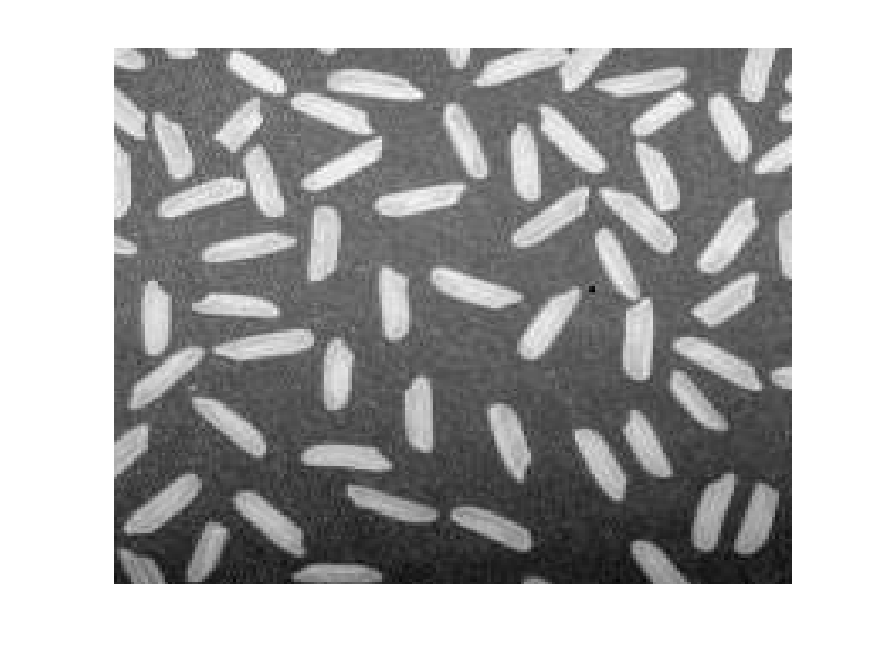}
   \includegraphics[width=0.24\linewidth]{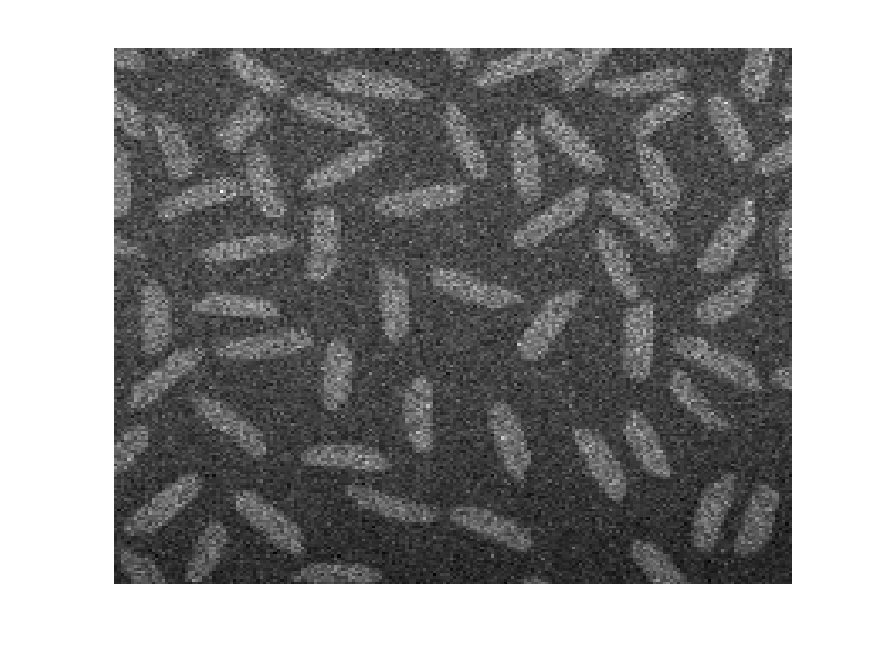}
  \includegraphics[width=0.24\linewidth]{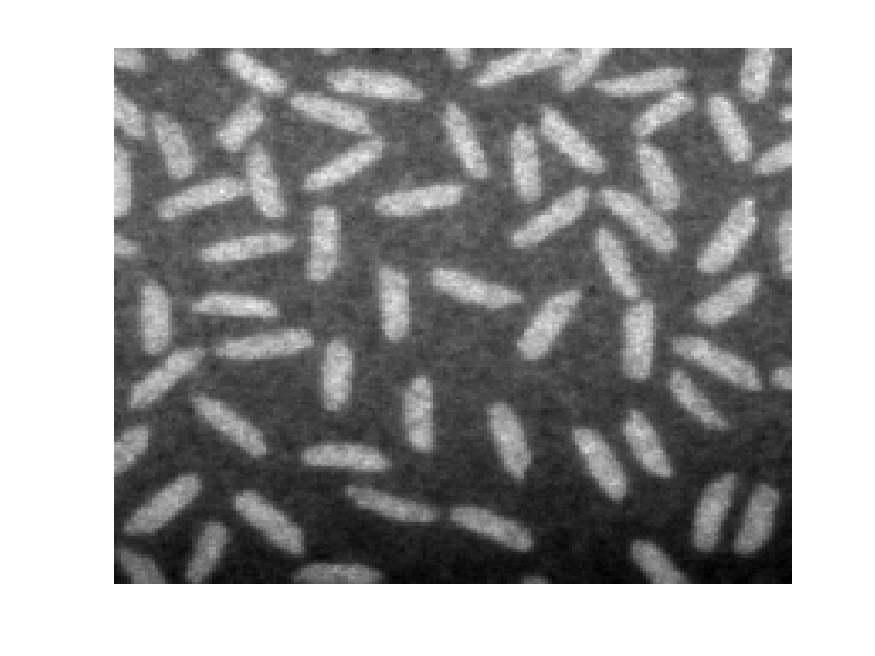}
\includegraphics[width=0.24\linewidth]{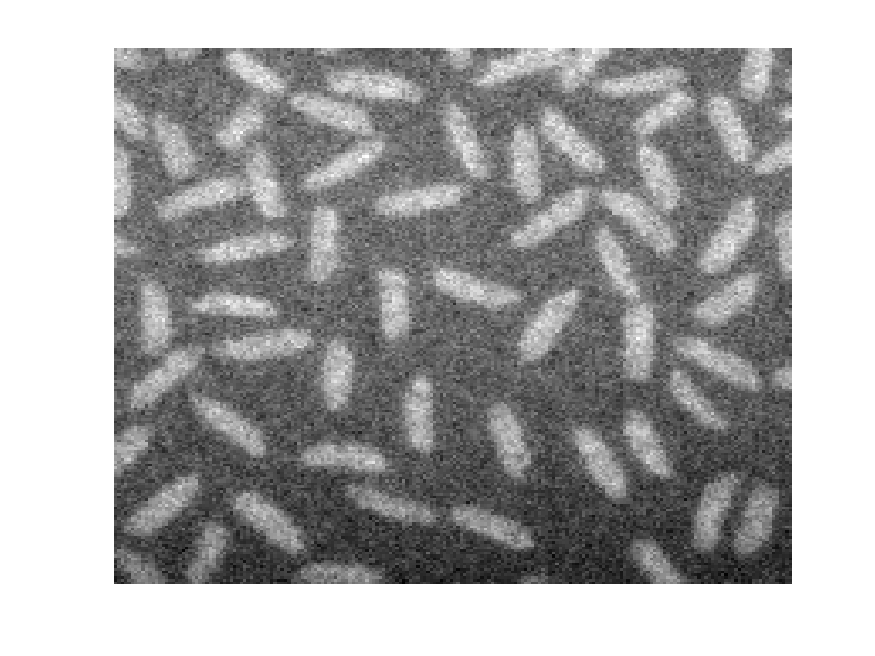}
    \caption{Example 3: Left to right: exact image, noisy image, mean of all samples after 100 iterations by BRWP-splitting and MYULA.}
\label{fig_TV_1}
\end{figure}
\begin{figure}
    \centering
\includegraphics[width=0.24\linewidth]{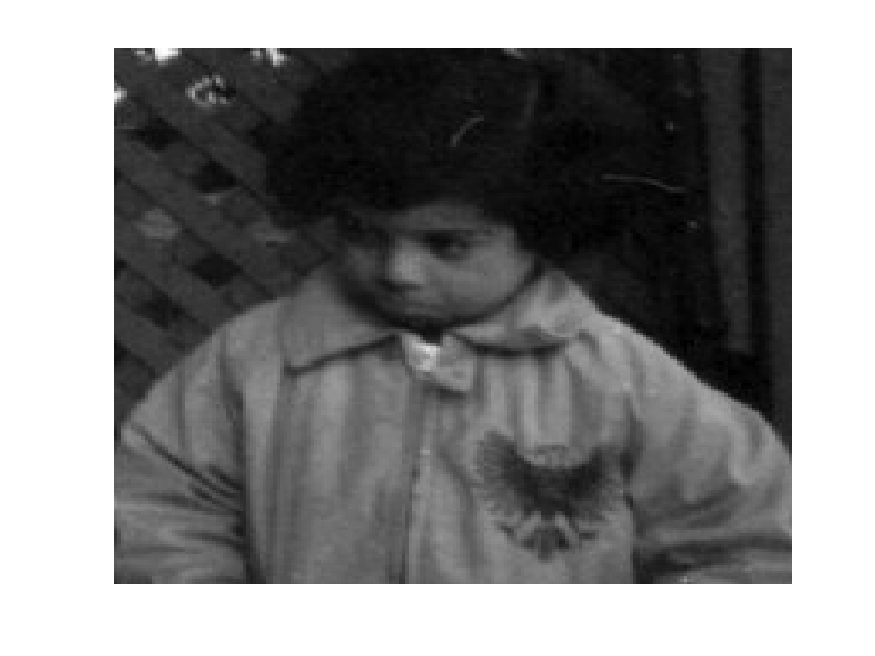}
   \includegraphics[width=0.24\linewidth]{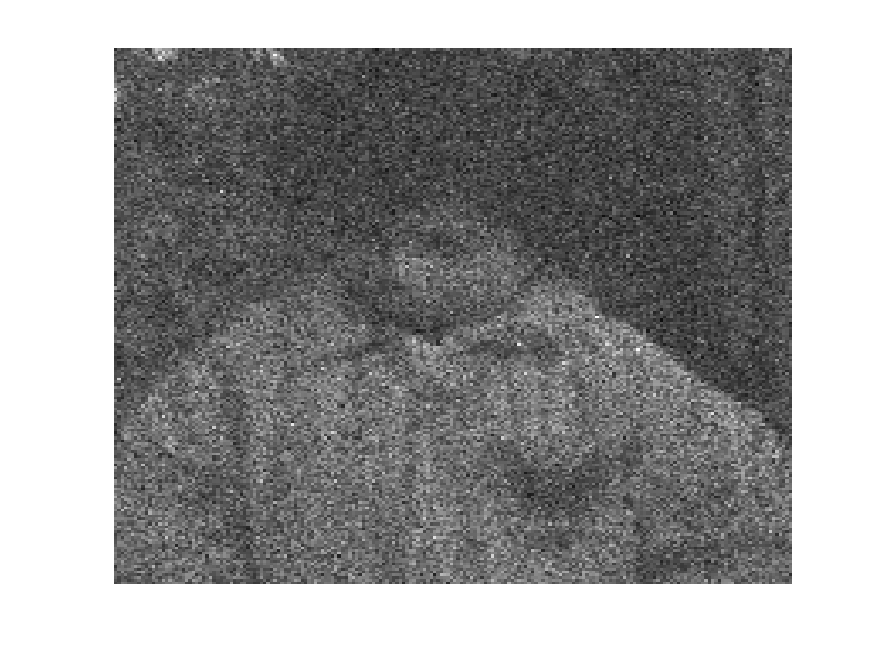}
  \includegraphics[width=0.24\linewidth]{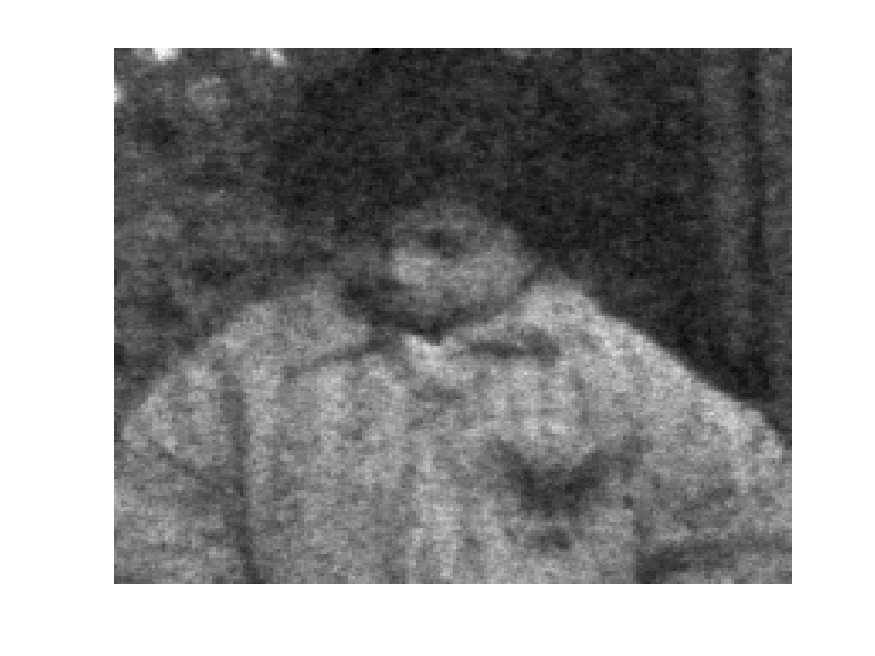}
\includegraphics[width=0.24\linewidth]{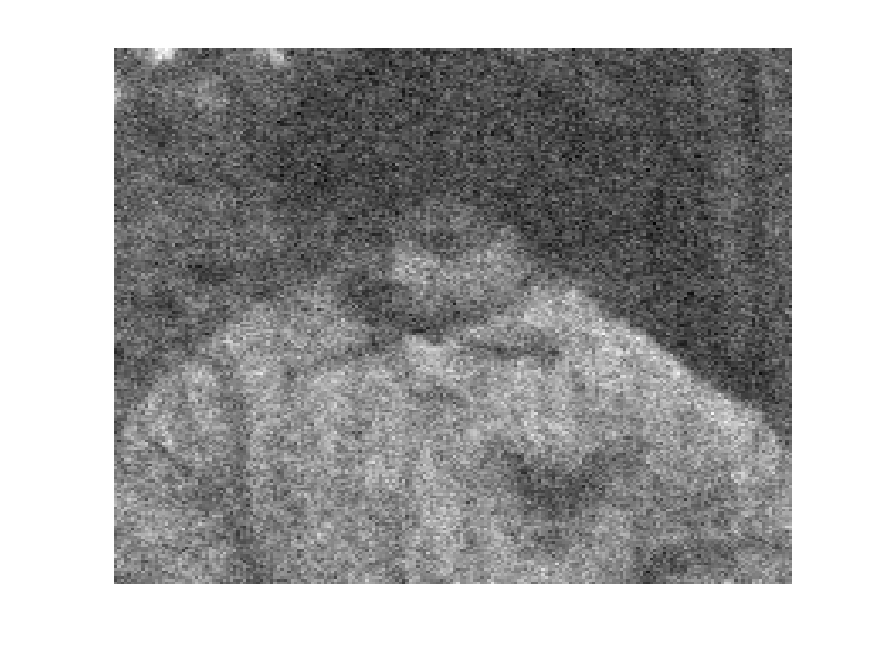}
    \caption{Example 3: Left to right: exact image, noisy image, mean of all samples after 100 iterations by BRWP-splitting and MYULA.}
    \label{fig_TV_2}
\end{figure}
From both Fig.\,\ref{fig_TV_1} and Fig.\,\ref{fig_TV_2}, the proposed sampling method recovers the original image from noisy data properly with $L_{1-2}$ TV regularization. 
 
\subsection{Example 4} In the next example, we examine the application of the proposed sampling algorithm for a compressive sensing application with $L_1$ regularization. The target function for this problem is defined as
\begin{equation}
    \rho^*(x) = \frac{1}{Z}\exp(-\left(\|Ax - z\|_2^2 + \lambda\|x\|_1\right))\,,
\end{equation}
where $x\in \mathbb{R}^d$, \( A \) is a \(m \times d\) circulant blurring matrix with $m = d/4$.

To quantify the uncertainty in the measurement data, we consider the concept of the highest posterior density (HPD). For a given confidence level \(\alpha \in [0, 1]\), the HPD region \(C_\alpha\) is defined as
\[
\int_{C_\alpha} \rho(x) \, dx = 1 - \alpha \,, \quad C_\alpha := \{x\in\mathbb{R}^d : V(x) \leq \eta_\alpha\} \,,
\]
where \(\eta_\alpha\) is a threshold corresponding to the confidence level. The integral can be numerically approximated using samples we get from the BRWP-splitting algorithm. For an arbitrary test image \( \tilde{x} \), by comparing \( V(\tilde{x}) \) with \( \eta_{\alpha} \) for various \( \alpha \), we can assess the confidence that \( \tilde{x} \) belongs to the high-probability region of the posterior distribution. In particular, with the set of particles generated from the BRWP-splitting scheme, the integral is computed numerically as
\[
\int_{C_\alpha} \rho(x) \, dx \approx \frac{\sum_j \mathcal{\chi}_{V(x_j) < \eta_\alpha}}{N} \,,
\]
where \(N\) is the total number of samples, and \(\mathcal{\chi}_{V(x_j) < \eta_\alpha}\) is the indicator function equals to 1 if \(V(x_j) < \eta_\alpha\) and 0 otherwise.

We test the algorithm on a brain MRI image of size \( d = 128^2 \). The measurement model is assumed to be \( Ax + \epsilon \), where \( \epsilon\) represents Gaussian noise with mean 0 and variance 0.2. The reconstruction is estimated using a step size \( h = 0.02 \), with 100 samples and 100 iterations. Additionally, we compute the HPD region threshold and plot the graph of \( \eta_{\alpha} \) versus \( \alpha \), which is estimated using 1000 samples.
\begin{figure}
    \centering
    \includegraphics[width=0.24\linewidth]{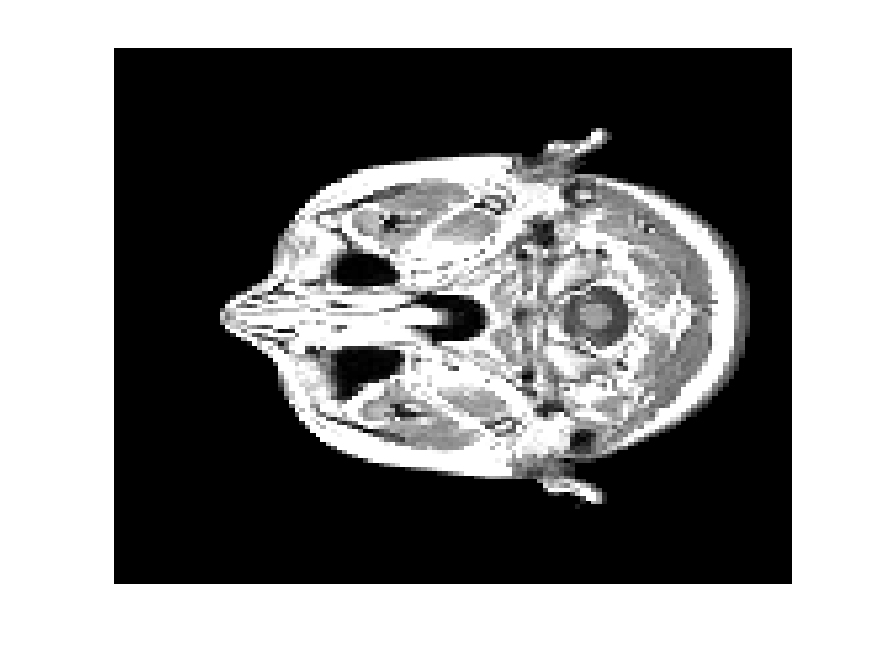}
     \includegraphics[width=0.24\linewidth]{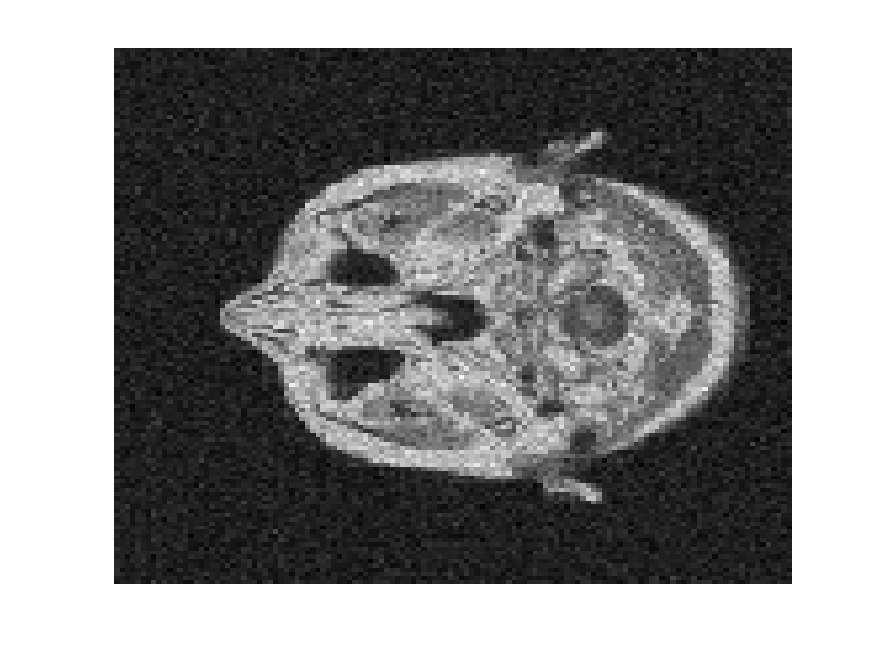}
    \includegraphics[width=0.24\linewidth]{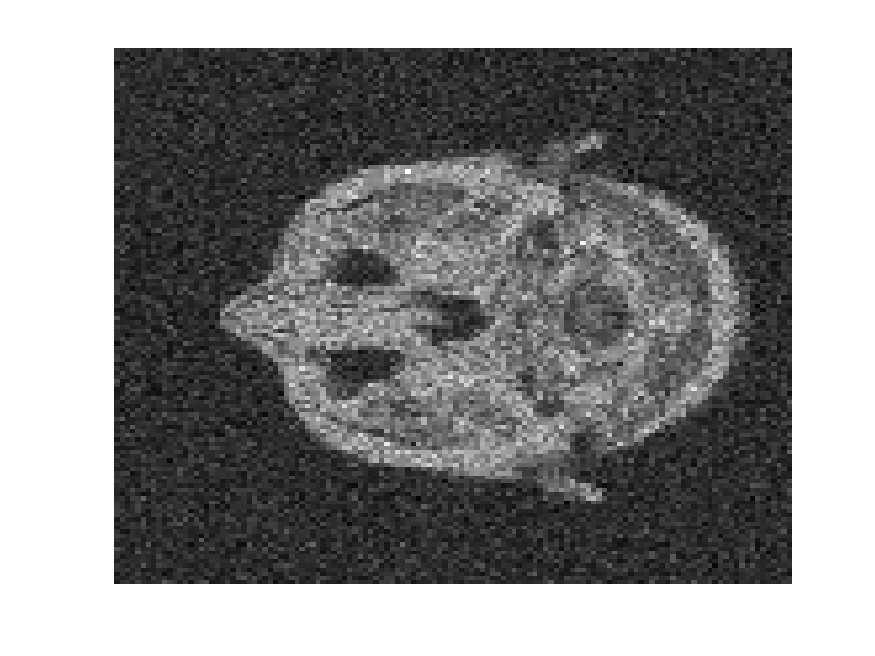}
\includegraphics[width=0.24\linewidth]{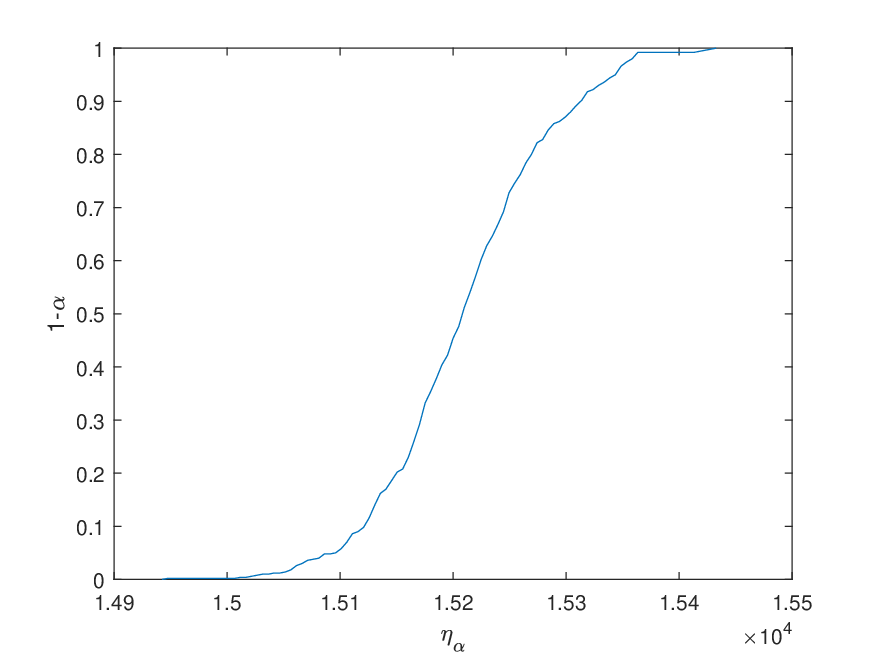}
    \caption{Example 4: From left to right: exact MRI image, reconstructed MRI image with BRWP-splitting, reconstructed MRI image with MYULA, HDP region thresholds $\eta_{\alpha}$.}
    \label{fig_Bay_inf}
\end{figure}
From Fig.\ref{fig_Bay_inf}, we observe that the proposed algorithm yields a better reconstruction compared with MYULA. Furthermore, the sampling approach allows us to compute the HPD region threshold, facilitating practical Bayesian inference analysis.
 
 \subsection{Example 5}In this example, we apply the proposed method to Bayesian neural network training. Specifically, the likelihood function is modeled as an isotropic Gaussian, and the prior distribution is a Laplace prior. 
We consider a two-layer neural network, where each layer consists of 50 hidden units with a ReLU activation function. For each dataset, $90\%$ of the data is used for training, while the remaining $10\%$ is reserved for testing. Each algorithm is simulated using 200 particles over 500 iterations.

We compare the BRWP-splitting against MYULA, the original BRWP (non-splitting, without proximal computation), and SVGD (Stein variational gradient descent). The step size for each method is selected via grid search to achieve the best performance, and it remains consistent across all experiments.

\begin{table}[]
    \centering
    \begin{tabular}{c|c|c|c|c}
        Dataset & BRWP-splitting & BRWP & MYULA & SVGD \\
        \hline
        Boston   & $\mathbf{3.78_{\pm 1.93\times 10^{-1}}}$ & $4.27_{\pm 2.09\times 10^{-2}}$ & $6.29_{\pm 6.00\times 10^{-3}}$ & $4.05_{\pm 6.93\times 10^{-2}}$ \\
        Wine     & $\mathbf{0.53_{\pm 2.54\times 10^{-1}}}$ & $0.61_{\pm 2.47\times 10^{-1}}$ & $0.72_{\pm 1.13\times 10^{-1}}$ & $0.54_{\pm 3.64\times 10^{-1}}$ \\
        Concrete & $\mathbf{3.25_{\pm 1.37\times 10^{-1}}}$ & $4.11_{\pm 1.02\times 10^{-1}}$ & $4.71_{\pm 3.14\times 10^{-1}}$ & $3.32_{\pm 1.47\times 10^{-1}}$ \\
        Kin8nm   & ${0.093_{\pm 7.99\times 10^{-4}}}$ & $0.135_{\pm 2.15\times 10^{-3}}$  & $0.294_{\pm 1.56\times 10^{-3}}$ & $\mathbf{0.092_{\pm 7.93\times 10^{-4}}}$ \\
        Power    & $\mathbf{4.13_{\pm 3.21\times 10^{-2}}}$  &  $5.25_{\pm 8.42\times 10^{-2}}$& $8.49_{\pm 2.87\times 10^{-1}}$ & $4.15_{\pm 1.63\times 10^{-2}}$ \\
        Protein  & $\mathbf{4.23_{\pm 2.17\times 10^{-2}}}$  & $4.74_{\pm 4.32\times 10^{-2}}$ & $5.12_{\pm 7.32\times 10^{-2}}$  & $4.61_{\pm 1.93\times 10^{-2}}$   \\
        Energy   &  $\mathbf{1.54_{\pm 2.37\times 10^{-2}}}$& $3.06_{\pm 6.06\times 10^{-2}}$ &$4.52_{\pm 2.42}$  & $2.00_{\pm 4.13\times 10^{-2}}$  \\
    \end{tabular}
    \vspace{0.2cm}
    \caption{Example 5: Root-mean-square error for different datasets in Bayesian neural network training with $\lambda = 1/d$.}
    \label{table_bayesian}
\end{table}
From Table \ref{table_bayesian}, we observe that, for most datasets tested, the proposed BRWP-splitting approach achieves a lower root-mean-square error compared to the other methods. 
\section{Discussions}

In this work, we propose a sampling algorithm based on splitting methods and regularized Wasserstein proximal operators for sampling from nonsmooth distributions. When the log-density of the prior distribution is the \(L_1\) norm, the scheme is formulated as an interacting particle system incorporating shrinkage operators and the softmax function. The resulting iterative sampling scheme is simple to implement and naturally promotes sparsity. Theoretical convergence of the proposed scheme is established under suitable conditions and the algorithm's efficiency is demonstrated through extensive numerical experiments.

For future directions, we aim to extend our theoretical analysis to investigate the algorithm’s convergence in the finite-particle approximation and explore its applicability beyond log-concave sampling. On the computational side, we seek to enhance efficiency through GPU-based parallel implementations and examine the impact of different kernel choices on the performance. Additionally, as discussed at the end of Section \ref{sec_TV_imaging}, regularized Wasserstein proximal operators share a close structural connection with transformer architectures, motivating our interest in analyzing the self-attention mechanism through the lens of interacting particle systems. More importantly, building on the proposed algorithm, we plan to develop tailored transformer models for learning sparse data distributions, which are only known by samples.

\noindent\textbf{Acknowledgement}: 
F. Han's work is partially supported by AFOSR YIP award No. FA9550 -23-10087. F. Han and S. Osher's work is partially supported by ONR N00014-20-1-2787, NSF-2208272, STROBE NSF-1554564, and NSF 2345256.
W. Li's work is supported by AFOSR YIP award No. FA9550-23-10087, NSF RTG: 2038080, and NSF DMS-2245097.
\begin{appendix}
\section{Derivation in Section \ref{sec_algo}}
\label{sec_appendix_derivation}
\begin{proof}[Proof of proposition \ref{prop_gaussian_mij}]
   For the case $\rho$ is approximated with Gaussian kernel, writing $x_{\ell}$ as the $\ell$-th component of $x\in \mathbb{R}^d$, we note \eqref{rho_T_BRWP} becomes
\begin{align*}
K_g^{h,\beta}\rho(x) &= \frac{1}{N(2\pi\sigma^2)^{d/2}}\exp\left(-\frac{\beta}{2}\lambda\|x\|_1\right)\cdot\\
&\sum_{j=1}^N\prod_{\ell=1}^d\int_{\mathbb{R}}\exp\left[-\frac{\beta}{2}\left(\frac{(x_{\ell}-y_{\ell})^2-(y_{\ell}- S_{\lambda h}(y_{\ell}))^2}{2h} - \lambda| S_{\lambda h}(y_{\ell})|\right)-\frac{(y_{\ell}-x_{j,\ell})^2}{2\sigma^2} \right] dy_{\ell}\,.
\end{align*}

Hence, obtaining the closed-form formula reduces to evaluating a one-dimensional exponential integral. This integral can be decomposed into three parts: \( [\lambda h, \infty) \), \( (-\lambda h, \lambda h) \), and \( (-\infty, -\lambda h] \), following the definition of the shrinking operator \( S_{\lambda h}(y) \).  Defining \( c = 2h / (\sigma^2 \beta) \), denoting $\psi(x,x_j) = \exp\left(-\frac{\beta}{2}\left(\frac{x^2+cx_j^2}{2h}\right)\right)$ to simplify notation, and omitting the index \( \ell \) for simplicity, the integral over \( [\lambda h, \infty) \) is given by  
\begin{align*}
    &\psi(x,x_j)\int_{\lambda h}^{\infty}\exp\left(-\frac{\beta}{4h}\left[(1+c)y^2 - 2y(x+cx_j+\lambda h) \right]\right) dy\exp\left(-\frac{\beta\lambda^2 h}{4} \right)\\
    =&\psi(x,x_j)\sqrt{\frac{4h}{\beta(1+c)}} \\&\int_{\sqrt{\frac{\beta(1+c)}{4h}}\left[\lambda h - \frac{x+cx_j+\lambda h}{1+c}\right]}^{\infty}\exp(-y^2)dy\exp\left(-\frac{\beta}{4h}\left(\lambda^2h^2-\frac{(x+cx_j+\lambda h)^2}{1+c}\right)\right)\,.
\end{align*}
Similarly, the integral on $(-\infty,-\lambda h]$ will be
\begin{align*}
    &\psi(x,x_j)\int_{-\infty}^{-\lambda h}\exp\left(-\frac{\beta}{4h}\left[(1+c)y^2 - 2y(x+cx_j-\lambda h) \right]\right) dy\exp\left(-\frac{\beta\lambda^2 h}{4} \right)\\
    =&\psi(x,x_j)\sqrt{\frac{4h}{\beta(1+c)}}\\&\int^{\sqrt{\frac{\beta(1+c)}{4h}}\left[-\lambda h - \frac{x+cx_j-\lambda h}{(1+c)}\right]}_{-\infty}\exp(-y^2)dy\exp\left(-\frac{\beta}{4h}\left(\lambda^2h^2-\frac{(x+cx_j-\lambda h)^2}{1+c}\right)\right)\,.
\end{align*}

Finally the integral on $[-\lambda h, \lambda h]$ can be computed as
\begin{align*}
    &\psi(x,x_j)\int_{-\lambda h}^{\lambda h}\exp\left(-\frac{\beta}{4h}\left(cy^2 - 2y(x+cx_j)\right)\right)dy\\
    = &\psi(x,x_j)\int_{-\lambda h}^{\lambda h}\exp\left(-\frac{c\beta}{4h}\left[y-\frac{(x+cx_j)}{c}\right]^2 \right) dy \exp\left(\frac{\beta}{4h} \frac{(x+cx_j)^2}{c}  \right)\\
    =&\psi(x,x_j)\sqrt{\frac{4h}{c\beta }}\int_{\sqrt{\frac{c\beta}{4h}}\left[-\lambda h-\frac{(x+cx_j)}{c}\right]}^{\sqrt{\frac{c\beta}{4h}}\left[\lambda h-\frac{(x+cx_j)}{c}\right]} \exp(-y^2) dy \exp\left(\frac{\beta}{4h} \frac{(x+cx_j)^2}{c}  \right)\,.
\end{align*}

Next, to compute the score function, we need to evaluate $\nabla K_g^{h,\beta}\rho_{k}$ on each sub-integral. For the integral on $[\lambda h,\infty)$, omitting the $\psi$ term, direct computation implies the following
\begin{align*}
   & \nabla\left\{\sqrt{\frac{4h}{\beta(1+c)}}\int_{\sqrt{\frac{\beta(1+c)}{4h}}\left[\lambda h - \frac{x+cx_j+\lambda h}{1+c}\right]}^{\infty}\exp(-y^2)dy\exp\left(-\frac{\beta}{4h}\left(\lambda^2h^2-\frac{(x+cx_j+\lambda h)^2}{1+c}\right)\right)\right\}\\
    =&\Bigg\{\sqrt{\frac{\beta}{h(1+c)}} (x+cx_j+\lambda h)\int_{\sqrt{\frac{\beta(1+c)}{4h}}\left[\lambda h - \frac{x+cx_j+\lambda h}{1+c}\right]}^{\infty}\exp(-y^2)dy\\
    &+\exp\left[-\frac{\beta(1+c)}{4h}\left(\lambda h - \frac{x+cx_j+\lambda h}{1+c}\right)^2 \right]\Bigg\}\cdot\frac{\exp\left(-\frac{\beta}{4h}\left(\lambda^2h^2-\frac{(x+cx_j+\lambda h)^2}{1+c}\right)\right)}{1+c}\,.
    \end{align*}
The gradient for the integral on $(-\infty,-\lambda h]$ can be evaluated similarly to the above by replacing $x+cx_j+\lambda h$ with $x+c_j-\lambda h$ and change of signs.
 Finally, the gradient for the integral on $(-\lambda h, \lambda h)$ can be evaluated as
 \begin{align*}
   & \nabla\left\{\sqrt{\frac{4h}{c\beta }}\int_{\sqrt{\frac{c\beta}{4h}}\left[-\lambda h-\frac{(x+cx_j)}{c}\right]}^{\sqrt{\frac{c\beta}{4h}}\left[\lambda h-\frac{(x+cx_j)}{c}\right]} \exp(-y^2) dy \exp\left(\frac{\beta}{4h} \frac{(x+cx_j)^2}{c}  \right)\right\}\\
    =&\exp\left(\frac{\beta}{4h}\frac{(x+cx_j)^2}{c} \right)\cdot \Bigg\{\sqrt{\frac{4h}{c\beta}}\frac{\beta}{2h} \frac{(x+cx_j)}{c}\int_{\sqrt{\frac{c\beta}{4h}}\left[-\lambda h-\frac{(x+cx_j)}{c}\right]}^{\sqrt{\frac{c\beta}{4h}}\left[\lambda h-\frac{(x+cx_j)}{c}\right]} \exp(-y^2) dy  \\
    & -\frac{1}{c}\left[\exp\left(-\frac{c\beta}{4h}\left(\lambda h - \frac{x+cx_j}{c}\right)^2 \right)  - \exp\left(-\frac{c\beta}{4h}\left(-\lambda h - \frac{x+cx_j}{c}\right)^2 \right)\right]\Bigg\}   \,.
 \end{align*}
Combining the above gives the desired result. 
\end{proof}

\begin{proof}[Proof of proposition \ref{prop_delta_aux_mij}]
  For the sum of $A_{i,j}$ defined in \eqref{def_Aij_Mij}, we have
\begin{align*}
&\sum_{j=1}^{N^d} A_{i,j} =  \sum_{j=1}^{N^d}\exp\left(-\frac{\beta}{2}\left(\frac{\|x_i-\tx_j\|_2^2}{2h}\right)\right)\exp\left(\frac{\beta}{2}\left(\frac{\| S_{\lambda h}(\tx_j)-\tx_j\|_2^2}{2h}+\lambda\| S_{\lambda h}(\tx_j)\|_1\right) \right)\\
=& \sum_{j=1}^{N^d}\prod_{\ell=1}^d \exp\left(-\frac{\beta}{2}\left(\frac{(x_{i,\ell}-\tx_{j,\ell})^2}{2h}\right)\right)\exp\left(\frac{\beta}{2}\left(\frac{( S_{\lambda h}(\tx_{j,\ell})-\tx_{j,\ell})^2}{2h}+\lambda| S_{\lambda h}(\tx_{j,\ell})|\right) \right)\\
=& \sum_{j_1=1}^{N}\cdots \sum_{j_{d-1}=1}^N\prod_{\ell=1}^{d-1}\sum_{j_d=1}^N \exp\left(-\frac{\beta}{2}\left(\frac{(x_{i,\ell}-x_{j_{\ell},\ell}}{2h}\right)\right)\cdot \\&\hspace{6cm}\exp\left(\frac{\beta}{2}\left(\frac{( S_{\lambda h}(x_{j_{\ell},\ell})-x_{j_{\ell},\ell})^2}{2h}+\lambda| S_{\lambda h}(x_{j_{\ell},\ell})|\right) \right) \,.
\end{align*}
Then, we can rearrange the terms to get
\[
\sum_{j=1}^{N^d} A_{i,j}= \prod_{\ell=1}^d\sum_{j=1}^N \exp\left(-\frac{\beta}{2}\left(\frac{(x_{i,\ell}-x_{j,\ell})^2}{2h}\right)\right)\exp\left(\frac{\beta}{2}\left(\frac{( S_{\lambda h}(x_{j,\ell})-x_{j,\ell})^2}{2h}+\lambda| S_{\lambda h}(x_{j,\ell})|\right) \right)\,,
\]
which is the desired formula in the proposition.
\end{proof}

\section{Postponed Proof in Section \ref{sec_analysis_smooth}}
\label{appendix_convergence}

Our proof of the convergence of R\'enyi divergence will rely on the interpolation argument by considering the continuity equation of \eqref{appendix_particle}  in time \( t \in [kh, (k+1)h]\). The particle at time $t$ is written as
\begin{align}
\label{x_t_particle}
 &x_{t} - x_{kh} \\
 = & - (t-kh)\left[\nabla f(x_{kh}) +  \nabla g_h(x_{kh}-  h\nabla f(x_{kh}))+\beta^{-1}\nabla\log K_g^{ (t-kh)}\rho_{k+\frac{1}{2}}(x_{kh}-h\nabla f(x_{kh}))\right]\notag\\
=& - (t-kh)\left[\nabla f(x_t) +\nabla g_h(x_t) +\beta^{-1}\nabla\log\rho_t(x_t) +\Lambda(x_{t},x_{kh})\right]\,,\notag
\end{align}
where
\begin{align*}
\Lambda(x_{t},x_{kh}) := & -\beta^{-1}\nabla \log \frac{\rho_t}{\rho_h^*}(x_t) + \beta^{-1}\nabla \log \frac{\rho_t}{\rho_h^*}(x_{kh} - h \nabla f(x_{kh})) \\
 & + \nabla f(x_{kh}) - \nabla f(x_{kh} - h \nabla f(x_{kh})) \\
 &  - \beta^{-1}\nabla \log \rho_t(x_{kh} - h\nabla f(x_{kh})) + \beta^{-1}\nabla \log  K_g^{t-kh}\rho_{k+\frac{1}{2}}(x_{kh} - h\nabla f(x_{kh}))\,.
\end{align*}
We note that when $t = (k+1)h$, we have $x_t = x^{k+1}$, i.e., the location of the particle in the next time step. 

Then the Fokker Planck equation corresponds to \eqref{x_t_particle} for $t\in [kh,(k+1)h]$ will be
\begin{equation}
\label{FPE_interpolation}
    \frac{\partial \rho_t }{\partial t} (x_t) = \beta^{-1}\nabla \cdot \left( \rho_t(x_t) \nabla \log \frac{\rho_t}{\rho_h^*}(x_t)  \right) + \nabla \cdot \left( \rho_t(x_t) \Lambda(x_{t},x_{kh})  \right).
\end{equation}

We now state the following lemma on the time derivative of R\'enyi divergence along \eqref{FPE_interpolation}.

\begin{lemma}
\label{lemma_time_derivative}
For $t\in[kh,(k+1)h]$, the time derivative of the R{\'e}nyi divergence between $\rho_t$ along \eqref{FPE_interpolation} and $\rho_h^*$ satisfies
\begin{equation}
    \frac{\partial}{\partial t} R_q(\rho_t \| \rho_h^*) \leq -\frac{q}{2} \frac{G_q(\rho_t \| \rho_h^*)}{F_q(\rho_t \| \rho_h^*)} + \frac{q}{2 F_q(\rho_t \| \rho_h^*)} \int_{\mathbb{R}^d} \|\Lambda(x_{t},x_{kh})\|_2^2 \left( \frac{\rho_t}{\rho_h^*} \right)^q \rho_h^* \, dx_t\,.
\end{equation}
\end{lemma}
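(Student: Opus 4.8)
The plan is to differentiate the Rényi divergence directly along the perturbed Fokker--Planck flow \eqref{FPE_interpolation}, following the standard computation for Langevin dynamics (as in \cite{ula_wibisono}) but keeping the extra drift $\Lambda$ as an error term rather than controlling it at this stage. Since $R_q(\rho_t\|\rho_h^*)=\frac{1}{q-1}\log F_q(\rho_t\|\rho_h^*)$, one has $\frac{\partial}{\partial t}R_q(\rho_t\|\rho_h^*)=\frac{\partial_t F_q(\rho_t\|\rho_h^*)}{(q-1)F_q(\rho_t\|\rho_h^*)}$, so it suffices to estimate $\partial_t F_q$. Writing $u=\rho_t/\rho_h^*$ and differentiating under the integral sign, $\partial_t F_q = q\int_{\mathbb{R}^d} u^{q-1}\,\partial_t\rho_t\,dx$, into which I would substitute the right-hand side of \eqref{FPE_interpolation}.

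Next I integrate by parts in each of the two resulting terms, using $\nabla(u^{q-1})=(q-1)u^{q-2}\nabla u$ and the identity $\rho_t\nabla\log u=\rho_h^*\nabla u$, which follows from $\rho_t=u\,\rho_h^*$. The diffusion term produces $-\beta^{-1}q(q-1)\int_{\mathbb{R}^d} u^{q-2}\|\nabla u\|_2^2\,\rho_h^*\,dx = -\beta^{-1}q(q-1)\,G_q(\rho_t\|\rho_h^*)$, once the integrand is recognized through $\nabla\log u=\nabla u/u$ and the definition \eqref{def_Renyi_infor}. The drift term, after rewriting $u^{q-1}\nabla u = u^q\nabla\log u$, becomes $-q(q-1)\int_{\mathbb{R}^d} u^q\,\big(\nabla\log u\cdot\Lambda(x_t,x_{kh})\big)\rho_h^*\,dx$, and I bound it by Young's inequality $-\nabla\log u\cdot\Lambda\le\tfrac12\|\nabla\log u\|_2^2+\tfrac12\|\Lambda\|_2^2$, with the free weight chosen so that the first contribution absorbs exactly the prescribed fraction of the available $G_q$ dissipation. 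Collecting the $G_q$ terms then yields $\partial_t F_q\le -\tfrac12 q(q-1)\,G_q(\rho_t\|\rho_h^*)+\tfrac12 q(q-1)\int_{\mathbb{R}^d}\|\Lambda(x_t,x_{kh})\|_2^2\,u^q\,\rho_h^*\,dx$, and dividing by $(q-1)F_q(\rho_t\|\rho_h^*)$ gives the claimed inequality.

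I do not expect a genuine obstacle in this lemma: it is essentially the one-line calculation above, and the real difficulty — bounding $\int_{\mathbb{R}^d}\|\Lambda(x_t,x_{kh})\|_2^2(\rho_t/\rho_h^*)^q\rho_h^*\,dx$ through the Lipschitz/smoothness hypotheses and the short-step interpolation — is deferred to the subsequent estimates leading to Theorem \ref{thm_regu_density}. What does require care is the rigorous justification of the formal steps: differentiation of $F_q$ under the integral sign, vanishing of the boundary terms in the integration by parts, and finiteness of $F_q(\rho_t\|\rho_h^*)$ and $G_q(\rho_t\|\rho_h^*)$ for $t\in[kh,(k+1)h]$; these follow from the standing assumptions (convex $L$-smooth $f$ and $g_h$, Lipschitz score, Poincaré inequality for $\rho_h^*$) together with the decay estimates of \cite{ula_wibisono}, and from the fact that the interpolation runs only over an interval of length $h$. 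The remaining bookkeeping point is to track the diffusion coefficient $\beta^{-1}$ and fix the Young weight so that the dissipation and error coefficients in the statement both come out as $q/2$.
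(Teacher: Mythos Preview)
Your proposal is correct and follows essentially the same route as the paper: differentiate $R_q$ via $\partial_t F_q$, substitute \eqref{FPE_interpolation}, integrate by parts against $\nabla(u^{q-1})=(q-1)u^{q-2}\nabla u$, identify the diffusion term as $-q(q-1)G_q$, and bound the cross term $\int u^{q-1}\nabla u\cdot\Lambda\,\rho_h^*\,dx$ by Young's inequality to split off $\tfrac12 G_q$ and $\tfrac12\int\|\Lambda\|^2 u^q\rho_h^*\,dx$. Your observation about the $\beta^{-1}$ bookkeeping is apt: the paper's own computation silently drops $\beta^{-1}$ when passing from \eqref{FPE_interpolation} to the integrated form, so the stated coefficients $q/2$ are exactly what one gets with $\beta=1$ (or after absorbing $\beta$ into the constants), and your plan to adjust the Young weight if needed is the right fix.
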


\begin{proof}
By the definition of the R{\'e}nyi divergence, we have
\begin{align*}
&\frac{\partial}{\partial t} R_q(\rho_t \| \rho_h^*) = \frac{q}{q-1} \frac{\int_{\mathbb{R}^d} \left( \frac{\rho_t}{\rho_h^*} \right)^{q-1} \partial_t \rho_t \, dx_t}{F_q(\rho_t \| \rho_h^*)} \\
= &\frac{q}{(q-1) F_q(\rho_t \| \rho_h^*)} \int_{\mathbb{R}^d} \left( \frac{\rho_t}{\rho_h^*} \right)^{q-1} \nabla \cdot \left[ \left( \rho_t\nabla\log\frac{\rho_t}{\rho_h^*}  \right) + \Lambda(x_{t},x_{kh}) \rho_t \right] dx_t \\
=& -\frac{q}{F_q(\rho_t \| \rho_h^*)} \int_{\mathbb{R}^d} \left( \frac{\rho_t}{\rho_h^*} \right)^{q-2} \nabla \frac{\rho_t}{\rho_h^*} \cdot \left[ \left( \rho_t\nabla\log\frac{\rho_t}{\rho_h^*}  \right) + \Lambda(x_{t},x_{kh}) \rho_t \right] dx_t \\
= &-\frac{q}{F_q(\rho_t \| \rho_h^*)} \left[\int_{\mathbb{R}^d} \left\| \nabla \frac{\rho_t}{\rho_h^*} \right\|_2^2 \left( \frac{\rho_t}{\rho_h^*} \right)^{q-2} \rho_h^*  \, dx_t + \int_{\mathbb{R}^d} \left( \frac{\rho_t}{\rho_h^*} \right)^{q-1} \nabla \frac{\rho_t}{\rho_h^*}\cdot \Lambda(x_{t},x_{kh}) \rho_h^*  \, dx_t\right]\,.
\end{align*}
The first term is precisely the R{\'e}nyi information term defined in \eqref{def_Renyi_infor}, and the second term represents the discretization error, which we need to bound. The second term can be further simplified as follows:
\begin{align*}
&\int_{\mathbb{R}^d} \left( \frac{\rho_t}{\rho_h^*} \right)^{q-1} \nabla \frac{\rho_t}{\rho_h^*} \cdot\Lambda(x_{t},x_{kh}) \rho_h^* \, dx_t = \int_{\mathbb{R}^d}   \nabla \frac{\rho_t}{\rho_h^*} \cdot\left[\Lambda(x_{t},x_{kh})\frac{\rho_t}{\rho_h^*}\right]\left( \frac{\rho_t}{\rho_h^*} \right)^{q-2} \rho_h^* \, dx_t\\
 \geq &-\frac{1}{2} \int_{\mathbb{R}^d} \left\| \nabla \frac{\rho_t}{\rho_h^*} \right\|_2^2 \left( \frac{\rho_t}{\rho_h^*} \right)^{q-2} \rho_h^* \, dx_t - \frac{1}{2} \int_{\mathbb{R}^d} \|\Lambda(x_{t},x_{kh})\|_2^2 \left( \frac{\rho_t}{\rho_h^*} \right)^{q} \rho_h^* \, dx_t\,.
\end{align*}
The final result is obtained by combining the above relations.
\end{proof}

Next, we will bound the discretization error using the Lipschitz continuity of the score function and \((f + g_h)\), i.e., assumptions A.1, A.2, A.4.

\begin{lemma}
\label{lemma_dis_error}
The discretization error term $\Lambda(x_t,x_{kh})$ satisfies
\begin{equation}
\int_{\mathbb{R}^d}\|\Lambda(x_{t},x_{kh})\|_2^2\left(\frac{\rho_t}{\rho_h^*}\right)^{q}\rho_h^*dx_t\leq \frac{2L^2h^2}{(1-hL)^2}G_q(\rho_t\|\rho_h^*) + 2L_f^2(L+L_f)^2h^2 d F_q(\rho_t\|\rho_h^*) + \mathcal{O}(h^3)\,,
\end{equation}
where $L = L_f + L_{g_h} + L_{\rho}$. 
\end{lemma}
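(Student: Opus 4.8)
The plan is to split $\Lambda(x_t,x_{kh})$ according to its three sources of error and bound each. Writing $y_{kh}:=x_{kh}-h\nabla f(x_{kh})$, group the six terms as $\Lambda=\Lambda_1+\Lambda_2+\Lambda_3$ with
\[
\Lambda_1=\beta^{-1}\Big(\nabla\log\tfrac{\rho_t}{\rho_h^*}(y_{kh})-\nabla\log\tfrac{\rho_t}{\rho_h^*}(x_t)\Big),\qquad
\Lambda_2=\nabla f(x_{kh})-\nabla f(y_{kh}),
\]
and $\Lambda_3=\beta^{-1}\big(\nabla\log K_g^{t-kh}\rho_{k+\frac12}(y_{kh})-\nabla\log\rho_t(y_{kh})\big)$. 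The point for $\Lambda_1$ is that $\beta^{-1}\nabla\log\frac{\rho_t}{\rho_h^*}=\beta^{-1}\nabla\log\rho_t+\nabla(f+g_h)$ is $L$-Lipschitz with $L=L_f+L_{g_h}+L_\rho$ under the standing assumptions, so $\|\Lambda_1\|\le L\|x_t-y_{kh}\|$. For $\Lambda_2$, Lipschitz continuity of $\nabla f$ gives $\|\Lambda_2\|\le L_f\|x_{kh}-y_{kh}\|=L_f h\|\nabla f(x_{kh})\|$. For $\Lambda_3$, Lemma~\ref{lemma_kernel} together with the expansion \eqref{density_k} shows that it is $\mathcal{O}(h)$ in drift magnitude, of the same $\nabla f$-controlled type as $\Lambda_2$ (the Laplace-method remainder being governed by $\nabla f$ and the a.e.-bounded Hessians of $f$ and $g_h$), and hence contributes only to the $\mathcal{O}(h^2)$-level bias.

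Next I would turn this into a self-referential estimate for $\|\Lambda\|$. From the interpolation identity \eqref{x_t_particle} and $\nabla(f+g_h)=-\beta^{-1}\nabla\log\rho_h^*$ one has $x_t-x_{kh}=-(t-kh)\big(\beta^{-1}\nabla\log\tfrac{\rho_t}{\rho_h^*}(x_t)+\Lambda\big)$, hence, using $t-kh\le h$,
\[
\|x_t-y_{kh}\|\le h\,\big\|\beta^{-1}\nabla\log\tfrac{\rho_t}{\rho_h^*}(x_t)\big\|+h\|\Lambda\|+h\|\nabla f(x_{kh})\|.
\]
Combining the three bounds yields $\|\Lambda\|\le hL\|\beta^{-1}\nabla\log\tfrac{\rho_t}{\rho_h^*}(x_t)\|+hL\|\Lambda\|+h(L+L_f)\|\nabla f(x_{kh})\|+\mathcal{O}(h)$, the last term being the $\Lambda_3$ contribution. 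Absorbing $\|\Lambda\|$ (legitimate since $hL<1$), squaring via $(a+b+c)^2\le 2a^2+2(b+c)^2$ (which puts the factor $2$ on the relative-score square), and integrating against $(\rho_t/\rho_h^*)^q\rho_h^*$, the relative-score contribution becomes exactly $\tfrac{2L^2h^2}{(1-hL)^2}G_q(\rho_t\|\rho_h^*)$ by the definition \eqref{def_Renyi_infor} of R\'enyi information, while the cross terms and higher Taylor remainders are $\mathcal{O}(h^3)$.

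It then remains to bound the leftover integral $\int_{\mathbb{R}^d}\|\nabla f(x_{kh})\|^2(\rho_t/\rho_h^*)^q\rho_h^*\,dx_t$ by a constant multiple of $d\,F_q(\rho_t\|\rho_h^*)$ up to $\mathcal{O}(h)$, which yields the bias $2L_f^2(L+L_f)^2h^2 d\,F_q$ and is the main obstacle. I would transfer the integral onto the reference density $\rho_{k+\frac12}$: since $x_{kh}\mapsto x_t$ is the time-$(t-kh)$ flow of an $\mathcal{O}(h)$-bounded velocity, its Jacobian determinant is $1+\mathcal{O}(h)$ and its displacement $\mathcal{O}(h)$, so the integral equals $\int\|\nabla f\|^2(\rho_{k+\frac12}/\rho_h^*)^q\rho_h^*\,dx+\mathcal{O}(h)$ — with the care that any $\nabla\log\rho_h^*$ entering the error appears multiplied by $h$ and is tamed by the $F_q$-moments. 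The remaining quantity I would bound using the convexity and $L_f$-smoothness of $f$, which give $0\le\Delta f\le L_f d$, together with an integration-by-parts / Gaussian-type second-moment estimate against the tilted weight $(\rho_{k+\frac12}/\rho_h^*)^q\rho_h^*$; this is where the dimension $d$ and the factor $L_f^2$ enter. The genuinely hard steps are (i) showing $\Lambda_3$ is of the clean $\mathcal{O}(h)\|\nabla f\|$ type uniformly over all intermediate $t\in[kh,(k+1)h]$, not only at the endpoint $t=(k+1)h$ where $K_g^h\rho_{k+\frac12}$ matches $\rho_{k+1}$ to $\mathcal{O}(h^2)$ by \eqref{density_k}, which forces use of the a.e.-bounded Hessians of $f,g_h$ and the restriction $hL_{g_h}d^2<1$ from Lemma~\ref{lemma_kernel}; and (ii) the dimension-dependent second-moment bound for $\|\nabla f\|^2$ against the unnormalized tilted measure uniformly in $k$, for which one needs either a uniform-in-$k$ moment bound on the iterates plus co-coercivity of $\nabla f$, or a direct integration-by-parts bound against the weight. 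The accompanying bookkeeping — checking that every discarded term is $\mathcal{O}(h^3)$ rather than $\mathcal{O}(h^2)$, and that expanding $(1-hL)^{-2}=1+\mathcal{O}(h)$ on the $\|\nabla f\|^2$ term is harmless — is the last delicate piece.
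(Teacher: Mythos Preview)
Your decomposition of $\Lambda$, the Lipschitz bounds on each piece, and the self-referential absorption are exactly the paper's strategy. The one substantive difference is in how the $\|\nabla f(x_{kh})\|$ contribution is handled. You treat the bound
\[
\int_{\mathbb{R}^d}\|\nabla f(x_{kh})\|^2\Big(\frac{\rho_t}{\rho_h^*}\Big)^{q}\rho_h^*\,dx_t\ \lesssim\ d\,F_q(\rho_t\|\rho_h^*)
\]
as ``the main obstacle'' and propose a change-of-variables plus integration-by-parts / second-moment argument to obtain it. The paper bypasses this entirely: it invokes the \emph{pointwise} bound $\|\nabla f(x_{kh})\|_2\le L_f\sqrt{d}$ directly (line \eqref{Lambda_error}), which converts $h(L+L_f)\|\nabla f(x_{kh})\|$ into the constant $h(L+L_f)L_f\sqrt{d}$ before any integration. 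After squaring with $(a+b)^2\le 2a^2+2b^2$ and integrating, this immediately produces the $2L_f^2(L+L_f)^2h^2d\,F_q$ term. A side effect is that the paper runs the self-referential inequality on $\|x_t-x_{kh}\|$ rather than on $\|\Lambda\|$: since the $\nabla f$ piece is already a constant, only the relative-score term survives the absorption and carries the factor $(1-hL)^{-1}$, matching the stated coefficient exactly. Your concerns about the intermediate-$t$ behavior of $\Lambda_3$ and about uniform-in-$k$ weighted second moments of $\|\nabla f\|^2$ therefore simply do not arise in the paper's argument. Whether the pointwise bound $\|\nabla f\|\le L_f\sqrt{d}$ is itself fully justified from the stated hypotheses is a separate question, but that is the shortcut the paper takes.
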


\begin{proof}
Firstly, using the gradient Lipschitz condition on $f$, $g_h$, and $\log\rho_t$, and also the approximation result in Lemma \ref{lemma_kernel}, we can bound the discretization error as
\begin{align}
\label{Lambda_error}
\|\Lambda(x_{t},x_{kh})\|_2 \leq &\, L\|x_t-x_{kh}+h\nabla f(x_{kh})\|_2 + L_fh\|\nabla f(x_{kh}) \|_2 + \mathcal{O}(h^2) \notag\\
\leq &\, L \|x_t-x_{kh}\|_2 + (L+L_f)h\|\nabla f(x_{kh}) \|_2 + \mathcal{O}(h^2) \\
\leq &\, L \|x_t-x_{kh}\|_2 + h(L+L_f)L_f \sqrt{d} + \mathcal{O}(h^2)\,.\notag
\end{align}

For the first term of \eqref{Lambda_error}, by the formula for $x_t$ in \eqref{x_t_particle}, we obtain
\begin{align*}
\left\|x_t-x_{kh}\right\|_2 \leq &\, h\left\|\nabla\log\frac{\rho_t}{\rho_h^*}(x_{kh})\right\|_2 + \mathcal{O}(h^2) \\
\leq &\, h\left\|\nabla\log\frac{\rho_t}{\rho_h^*}(x_{t})\right\|_2 + h\left\|\nabla\log\frac{\rho_t}{\rho_h^*}(x_{t})-\nabla\log\frac{\rho_t}{\rho_h^*}(x_{kh})\right\|_2 + \mathcal{O}(h^2) \\
\leq &\, h\left\|\nabla\log\frac{\rho_t}{\rho_h^*}(x_{t})\right\|_2 + Lh\left\|x_t-x_{kh}\right\|_2 + \mathcal{O}(h^2)\,.
\end{align*}
The above leads to 
\[
\|x_t-x_{kh}\|_2 \leq \frac{h}{1-hL}\left\|\nabla\log\frac{\rho_t}{\rho_h^*}(x_t)\right\| + \mathcal{O}(h^2)\,.
\]
Substituting this back into \(\Lambda(x_{t},x_{kh})\), we get
\begin{align*}
&\int_{\mathbb{R}^d} \|\Lambda(x_{t},x_{kh})\|_2^2\left(\frac{\rho_t}{\rho_h^*}\right)^{q}\rho_h^* dx_t \\
\leq &\, \frac{2L^2h^2}{(1-hL)^2}\int_{\mathbb{R}^d}\left\|\nabla\log\frac{\rho_t}{\rho_h^*}\right\|_2^2\left(\frac{\rho_t}{\rho_h^*}\right)^{q}\rho_h^*dx_t   + 2L_f^2(L+L_f)^2h^2d\int_{\mathbb{R}^d} \left(\frac{\rho_t}{\rho_h^*}\right)^{q}\rho_h^*dx_t + \mathcal{O}(h^3)\\
= &\, \frac{2L^2h^2}{(1-hL)^2}G_q(\rho_t\|\rho_h^*) + 2L_f^2(L+L_f)^2h^2 d F_q(\rho_t\|\rho_h^*) + \mathcal{O}(h^3)\,.
\end{align*}
\end{proof}

Next, we are ready to prove Theorem \ref{thm_regu_density}.

\begin{proof}[Proof of Theorem \ref{thm_regu_density} part (1)]
By combining Lemma \ref{lemma_time_derivative} and \ref{lemma_dis_error}, we have
\begin{align*}
    &\frac{\partial }{\partial t} R_q(\rho_t\|\rho_h^*) =  \, -\frac{q}{2}\frac{G_q(\rho_t\|\rho_h^*)}{F_q(\rho_t\|\rho_h^*)} + \frac{q}{2F_q(\rho_t\|\rho_h^*)}\int_{\mathbb{R}^d}\|\Lambda(x_{t},x_{kh})\|_2^2\left(\frac{\rho_t}{\rho_h^*}\right)^{q}\rho_h^*dx_t \\
    \leq &\, \frac{q}{2}\left(-1+\frac{2L^2h^2}{(1-hL)^2}\right)\frac{G_q(\rho_t\|\rho_h^*)}{F_q(\rho_t\|\rho_h^*)} + q L_f^2(L+L_f)^2h^2d + \mathcal{O}(h^3)\,.
\end{align*}
Using the result in Lemma \ref{lemma_information}, i.e., when \(\rho_h^*\) satisfies the Poincaré inequality with constant \(\alpha_d\) (assumption A.3), we have
\[
\frac{G_q(\rho\|\rho_h^*)}{F_q(\rho\|\rho_h^*)} \geq \frac{4\alpha_d}{q^2} \left(1 - \exp(-R_q(\rho\|\rho_h^*))\right)\,.
\]
Hence, we arrive at
\[
  \frac{\partial }{\partial t} R_q(\rho_t\|\rho_h^*) \leq \frac{2\alpha_d}{q}\left(1 - \exp(-R_q(\rho_t\|\rho_h^*))\right)\left(-1+\frac{2L^2h^2}{(1-hL)^2}\right) + q L_f^2(L+L_f)^2h^2d + \mathcal{O}(h^3)\,,
\]
when \(h \leq  (\sqrt{2}-1)/L\).

Writing \(\rho_{k} = \rho_{kh}\). Then when \(R_q(\rho_0\|\rho_h^*) \geq 1\), it follows \(1 - \exp(-R_q(\rho_k\|\rho_h^*))\geq \frac{1}{2}\). In this case, we can derive the linear convergence given by
\[
R_q(\rho_k\|\rho_h^*) \leq R_q(\rho_0\|\rho_h^*) - kh\left(\frac{ \alpha_d}{q}\left(1-\frac{2L^2h^2}{(1-hL)^2}\right) - q L_f^2(L+L_f)^2h^2d\right) + \mathcal{O}(h^3)\,.
\]
For the case \(R_q(\rho_0\|\rho_h^*) < 1\),  we note that
\[
1 - \exp(-R_q(\rho_0\|\rho_h^*)) \geq R_q(\rho_0\|\rho_h^*) - \frac{R_q(\rho_0\|\rho_h^*)^2}{2} \geq \frac{1}{2}R_q(\rho_0\|\rho_h^*)\,.
\]
In this scenario, by integration with respect to $t$ from $0$ to $kh$, we have
\[
R_q(\rho_k\|\rho_h^*) \leq R_q(\rho_0\|\rho_h^*)\exp\left[-kh\frac{\alpha_d}{q} \left(1- \frac{2L^2h^2}{(1-hL)^2}\right)\right] + \frac{q^2L_f^2(L+L_f)^2h^2 d}{\alpha_d} + \mathcal{O}(h^3)\,.
\]
\end{proof}

\begin{proof}[Proof of Theorem \ref{thm_regu_density} part (2)]
The bound of  R\'enyi divergence between $\rho^*$ and $\rho_h^*$ can be derived using properties of Moreau envelope and Taylor expansion of $\log$ function, which leads to
\[
R_q(\rho^*\|\rho_h^*) = \frac{1}{q-1} \log \left( \int_{\mathbb{R}^d} \left( \frac{\rho^*}{\rho_h^*} \right)^q \rho_h^* \, dx \right) \leq \frac{q L_g^2 h}{q-1} + \mathcal{O}(h^2)\,.
\]
Additionally, we recall the following decomposition theorem for R\'enyi divergence  
\[
R_q(\rho_k\|\rho^*) \leq \left( \frac{q - \frac{1}{2}}{q-1} \right) R_{2q}(\rho^*\|\rho_h^*) + R_{2q-1}(\rho_k\|\rho_h^*)\,.
\]
Plug in the above two relations into part (1) of Theorem \ref{thm_regu_density}, and the desired result can be proved.
\end{proof}

\section{Postponed Proof in Section \ref{sec_analysis_nonsmooth}}
\label{appendix_nonsmooth_proof}

In the following, we establish the convergence of the sampling algorithm for nonsmooth convex functions $g$ under the assumptions B.1, B.2. Specifically, we provide a proof for  Lemma~\ref{lemma_kernel_nonsmooth}. To do so, we rely on the following lemma, which offers a quantitative Laplace approximation for Lipschitz functions. This result is a slight generalization of Proposition 4.5 in \cite{cbo_convergence}.

\begin{lemma}
\label{Lemma_laplacian}
Let $h, r, \tilde{r} > 0$. Let $\phi$ be a convex function with minimizer $y^*$. Suppose $f$ is an $L$-Lipschitz function in $B_r(y^*)$ for $r>0$. For a probability density function $\rho$, we have
\begin{align}
\label{eqn_laplacian_Lip}
 &\left| f(y^*) - \int f(y)\frac{\exp(-\phi(y)/h) \rho(y)}{\int \exp(-\phi(z)/h)\rho(z) dz} dy\right|
 \\ \leq &Lr + \frac{\exp(-(\phi(y_r^c)-\phi(y_{\tilde{r}}))/h)}{\int_{B_{\tilde{r}}(y^*)}\rho(z)dz} \int |f(y)-f(y^*)|\rho(y) dy\,,\notag
\end{align}
with $\phi(y_{\tilde{r}}) = \sup_{y\in B_{\tilde{r}}(y^*)} \phi(y)$ and $\phi(y_r^c) = \inf_{y\in B_r(y^*)^c} \phi(y)$, where $B_r(y^*)^c$ denotes the complement of $B_r(y^*)$ in $\mathbb{R}^d$.
\end{lemma}

\begin{proof}
For fixed $r$, we separate the integration into two parts as
\begin{align}
&\left|f(y^*) -  \int f(y)\frac{\exp(-\phi(y)/h) \rho(y)}{\int \exp(-\phi(z)/h)\rho(z) dz} dy\right|\\
= &\int_{B_r(y^*)} |f(y^*)-f(y)|\frac{\exp(-\phi(y)/h) \rho(y)}{\int \exp(-\phi(z)/h)\rho(z) dz} dy\notag\\
&+ \int_{B_r(v^*)^c} |f(y^*)-f(y)|\frac{\exp(-\phi(y)/h) \rho(y)}{\int \exp(-\phi(z)/h)\rho(z) dz } dy\,.\notag
\end{align}

Then for the first term, we can use the Lipsthiz condition on $f$, which implies it will be bounded above by $Lr$. 

The upper bound for the second term can be derived from Markov's inequality that
\[
\int_{\mathbb{R}^d} \exp\left(-\frac{\phi(z)}{h}\right)\rho(z) dz \geq \exp\left(-\frac{\phi(y_{\tilde{r}})}{h}\right) \int_{B_{\tilde{r}}(y^*)}\rho(z) dz\,,
\]
where $\phi(y_{\tilde{r}}) = \sup_{y\in B_{\tilde{r}}(y^*)}\phi(y)$.  

Hence, denote $\phi(y_r^c) = \inf_{y\in B_r(y^*)^c}\phi(y)$ and plug in the above Markov's inequality, we can have
\begin{align}
&\int_{B_r(y^*)^c} |f(y^*)-f(y)|\frac{\exp(-\phi(y)/h) \rho(y)}{\int \exp(-\phi(z)/h)\rho(z) dz } dy \\
\leq& \frac{\int_{B_r(y*)^c}|f(y^*)-f(y)|\rho(y)dy\exp(-(\phi(y_r^c)-\phi(y_{\tilde{r}}))/h)}{\int_{B_{\tilde{r}}(y^*)}\rho(z)dz}\notag\\
\leq & \frac{\exp(-(\phi(y_r^c)-\phi(y_{\tilde{r}}))/h)}{\int_{B_{\tilde{r}}(y^*)}\rho(z)dz}\int_{B_r(y^*)^c}|f(y)-f(y^*)|\rho(y)dy\,.\notag
\end{align}
\end{proof}

Next, we are ready to prove the Lemma \ref{lemma_kernel_nonsmooth}.

\medskip

\noindent
\textbf{Proof of Lemma \ref{lemma_kernel_nonsmooth}}
By the definition in \eqref{Reg_Was}, we would like to show
\begin{equation}
\left\|\frac{\nabla K_g^{h,\beta}\rho_k}{K_g^{h,\beta}\rho_k}(x) - \nabla \log \rho_k(x) \right\|\leq C_Kh^{1/2-\delta}\,,
\end{equation}
for small $0<\delta<1/2$ and $C_K>0$ that is independent of $h$ and only depends on $d$ and $C_R$ in the assumption B.2.

Firstly,  we consider $h > 0$ that satisfies
$$
h^{1/2 - \delta}  < R\,,
$$
which implies $h < R^{2/(1-2\delta)}$.

Recall that by a change of variable
\begin{align}
\label{eqn:concentration_heat}
&\frac{1}{(4\pi h/\beta)^{d/2}} \int_{B(0,\, h^{1/2 - \delta})^c} \exp\left(-\beta\frac{\|x\|_2^2}{4h}\right)\, dx = \int_{|z|\geq \sqrt{\beta/4}h^{-2\delta}} \exp(-\|z\|_2^2)dz\\
\leq& \pi^{d/2}\exp\left(-\beta\frac{h^{-2\delta}}{4}\right)\,,\notag
\end{align}
which decays to zero exponentially fast as $h \to 0$. This implies that the mass of the heat kernel is concentrated inside the ball $B(0,\, h^{1/2 - \delta})$.

In particular, when $x \in B(0, 2R)^c$, we have
$$
B(x,\, h^{1/2 - \delta}) \cap B(0, R ) = \emptyset.
$$
As a result, $\rho_{g}$ is smooth in $B(x,\, h^{1/2 - \delta})$, which ensures that the expansion carried out in the smooth case remains valid in this region with an error term of order $\exp(-\beta h^{-2\delta}/4)$.

Next, when $ x \in B(0,2R)$, for some constants $C_1$ $C_2$ that are independent of $h$, we would like to show
\begin{equation}
\label{C_1_def}
 \left|\rho_{g_h}^{1/2}(x)\int \frac{\rho_k(y)\exp(-\beta\|x-y\|_2^2/(4 h) )}{\int \rho_{g_h}^{1/2}(z)\exp(-\beta\|y-z\|_2^2/(4h)dz} dy -  \rho_k(x)\right|  \leq   C_1h^{1/2-\delta}\,,
\end{equation}
and 
\begin{equation}
\left\|\nabla\rho_{g_h}^{1/2}(x)\int \frac{\rho_k(y)\exp(-\beta\|x-y\|_2^2/(4 h) )}{\int \rho_{g_h}^{1/2}(z)\exp(-\beta\|y-z\|_2^2/(4h)dz} dy -\nabla \rho_k(x) \right\| \leq  C_2h^{1/2-\delta}\,.
\end{equation}

From Lemma \ref{Lemma_laplacian},  we take $f(y) = \rho_k(y)/\rho_g^{1/2}(y)$, $\phi(y) = C_h+\beta\|x-y\|_2^2/4$ where $C_h = d/2\log(4\pi h/\beta)$ as the normalization constant for the heat kernel that will be concealed later, $\rho(y) = \rho_{g_h}^{1/2}(y)$, $\tilde{r} = \sqrt{h}$, and $r = h^{1/2-\delta}$. We see immediately that $y^* = x$, $\phi(y_r^c) -\phi(y_{\tilde{r}})=  \beta h^{1-2\delta}- \beta h$. 

Then we note that the Lipchitz constant $L$ of $\rho_k/\rho_g^{1/2}$ satisfies
\[
L\leq \sup_{y \in B(0,3R)}\left\|\nabla
\frac{\rho_k}{\rho_{g_h}^{1/2}}(y)\right\|\leq \sup_{y\in B(0,3R)}\left\|\frac{\rho_k}{\rho_{g_h}^{1/2}}(y)\left(-\nabla\log\rho_k(y)+\frac{\beta}{2} \nabla g_h(y)\right)\right\| \,,
\]
where this term is bounded by $ 2C_R^3$ by assumption B.2.

Moreover, we know the numerator of the second term in \eqref{eqn_laplacian_Lip} is of constant order since
\begin{equation}
\label{Lemma_rho_K_f}
    \int |f(y)-f(y^*)|\rho_{g_h}^{1/2}(y)dy = 
\int \left|\frac{\rho(y)}{\rho_{g_h}^{1/2}(y)} -\frac{\rho(x)}{\rho_{g_h}^{1/2}(x)} \right|\rho_{g_h}^{1/2}(y) dy \leq1 + \rho(x)/\rho^{1/2}(x) \leq 1+C_R^2\,.
\end{equation}

Next, by assumption B.2 on the bound of $1/\rho_{g_h}$, we know that 
\[
\int_{B_{\tilde{r}}(y^*)}\rho_{g_h}^{1/2}(z)dz \geq 1/C_R h^d\omega_d\,,
\]
where $\omega_d$ is the volume of $d$ dimensional ball.
And finally, we arrive
\[
\frac{\exp(-(\phi(y_r^c)-\phi(y_{\tilde{r}}))/h)}{\int_{B_{\tilde{r}}(y^*)}\rho_{g_h}^{1/2}(z)dz} \int |f(y)-f(y^*)|\rho_{g_h}^{1/2}(y) dy \leq \exp\left(-\beta\frac{h^{-2\delta}}{4}\right)\exp\left(\frac{\beta}{4}\right)\frac{C_R(1+C_R^2)}{h^d \omega_d} \,.
\]
 The above helps us to conclude
\[
 \left|\rho_{g_h}^{1/2}(x)\int \frac{\rho_k(y)\exp(-\beta\|x-y\|_2^2/(4 h) )}{\int \rho_{g_h}^{1/2}(z)\exp(-\beta\|y-z\|_2^2/(4h)dz} dy- \rho_k(x)\right| \leq  2C_R^3 h^{1/2-\delta}+ C_3\frac{\exp\left(-\beta \frac{h^{-2\delta}}{4}\right)}{h^d}\,,
\]
where $C_3 = \ \exp\left(\frac{\beta}{4}\right)\frac{C_R(1+C_R^2)}{\omega_d} $. Then, when we have $h^{-1/2+\delta -d} \exp(-\beta h^{-2\delta}/4)\leq 2C_R^3/C_3$, we will have $C_1 = 4C_R^3$ in \eqref{C_1_def}. One relaxed and explicit upper bound for $h$ is  
\[h \leq \left[4/\beta\min\{|\ln(2C_R^3/C_3)|,\,(1/2+d-\delta)\}\right]^{1/(2\delta)}\,,
\]
which is obtained by taking the logarithm on both side of the above condition.

Finally, for the gradient term $\nabla K_{g_h}^{h}\rho_{k+\frac{1}{2}}$, compared with the previous case, the only difference is to pick $f(y) = (x-y)\rho_k(y)/\rho_{g_h}^{1/2}(y)$. By the chain rule, the Lipschitz constant of $f$ in this case will be $C_R^2+2RC_R^3$. Hence, we have
\begin{equation}
\left\| \nabla K_g^{h,\beta}\rho_{k} - \nabla  \rho_{k}(x)\right\|_2\leq  4\sqrt{d} (C_R^2+2RC_R^3)\,,
\end{equation}
for the $h$ that satisfies the previous condition. Combining the above, we have $C_K = 4\sqrt{d}(C_R^3+2RC_R^4) + 4C_R^4$ as desired. 
  

\medskip

\noindent\textbf{Proof of Lemma \ref{lemma_w2_trhok}}
In this proof, we first recall the definitions of the entropy term and the energy term from \eqref{def_energy}, and note that the KL divergence between \(\rho\) and \(\rho^*\) can be expressed as
\[
\beta^{-1}\KL(\rho\|\rho^*) = \mathcal{H}(\rho) + \mathcal{E}_{f}(\rho) + \mathcal{E}_g(\rho) \,.
\]

\textbf{Step 1:}
We would like to derive
\begin{equation}
2h\left[ \mathcal{E}_f(\rho_{k+\frac{1}{3}})-\mathcal{E}_f(\rho^*)\right]\leq (1-h\alpha)W_2^2(\rho_k,\rho^*)-W_2^2(\rho_{k+\frac{1}{3}},\rho^*)\,.
\end{equation}

By the definition of $x^{k+\frac{1}{3}}$, we have
\begin{align}
\label{step1_2norm}
&\|x^{k+\frac{1}{3}}-x^*\|_2^2 = \|x^{k}-x^*\|_2^2 + h^2\|\nabla f(x^k)\|_2^2 + 2h\langle \nabla f(x^k),  x^*-x^k\rangle \\
\leq & (1-h\alpha)\|x^{k}-x^*\|_2^2 + h^2\|\nabla f(x^k)\|_2^2  + 2h(f(x^*)-f(x^k))
\,.\notag
\end{align}
where we used that for an $\alpha$ strongly convex function
\[
f(x^*)-f(x^k)\geq \langle \nabla f(x^k), x^*-x^k\rangle + \frac{\alpha}{2}\|x^k-x^*\|_2^2\,.
\]

Then, recalling the $L-$smoothness assumption of $f$ and noting that
\[
f(x^{k+\frac{1}{3}})\leq f(x^{k}) - h\|\nabla f(x^k)\|_2^2  + \frac{h^2L_f}{2}\|\nabla f(x^k)\|_2^2 = f(x^k) - h\left( 1- \frac{hL_f}{2}\right)\|\nabla f(x^k)\|_2^2\,,
\]
which implies
\[
 h\|\nabla f(x^k)\|_2^2 \leq  2(f(x^{k})- f(x^{k+\frac{1}{3}}))
\]
when $h < 1/L_f$. 

Substituting this into \eqref{step1_2norm}, we arrive
\[
\|x^{k+\frac{1}{3}}-x^*\|_2^2  \leq (1-h\alpha)\|x^k-x^*\|_2^2 + 2h(f(x^*)- f(x^{k+\frac{1}{3}}))\,.
\]
Taking the expectation of all the random variables, as the right-hand side gives an upper bound for all possible couplings, we have
\begin{align*}
&W_2^2(\rho_{k+\frac{1}{3}},\rho*) \leq (1-h\alpha)\mathbb{E}_{x_k,x^*}(\|x_k-x^*\|_2^2) + 2h(\mathcal{E}_f(\rho^*)-\mathcal{E}_f(\rho_{k+\frac{1}{3}}))\\
\leq & (1-h\alpha)W_2^2(\rho_k,\rho^*) + 2h(\mathcal{E}_f(\rho^*)-\mathcal{E}_f(\rho_{k+\frac{1}{3}}))\,.
\end{align*}

\textbf{Step 2:}
We would like to show
\begin{equation}
2h\left[\mathcal{E}_f(\rho_{k+\frac{2}{3}})-\mathcal{E}_f(\rho_{k+\frac{1}{3}})\right] \leq  Ch^2 \,,
\end{equation}
for some constant $C$ independent of $h$.

By the convexity and $L_f$-smoothness of $f$, we note
\[
f(x^{k+\frac{2}{3}})-f(x^{k+\frac{1}{3}}) + \langle \nabla f(x^{k+\frac{1}{3}}),x^{k+\frac{1}{3}}-x^{k+\frac{2}{3}}\rangle \leq \frac{L_f}{2}\|x^{k+\frac{1}{3}}-x^{k+\frac{2}{3}}\|_2^2 \,.
\]
Taking the expectation on both sides, we obtain
\begin{align*}
&\left[\mathcal{E}_f(\rho_{k+\frac{2}{3}})-\mathcal{E}_f(\rho_{k+\frac{1}{3}})\right] \leq  h\int \nabla f\cdot \nabla  g_h \rho_{k+\frac{1}{3}} dx + \frac{L_f}{2}h^2\int \|\nabla g_h\|_2^2\rho_{k+\frac{1}{3}} dx\,.
\end{align*}
Then by the assumption that $f$ is L-smooth and assumptions B.1  and B.2 of $g_h$, we have $\|\nabla f(x)\|\leq |x| L_f + C_f$ and $\|\nabla g_h\| \leq |x|L + C_g$ for some constant $C_f$ $C_g$ independent of $h$. Hence, we can arrive at the desired conclusion since $\rho_k$ has the finite second-order moment. 

\textbf{Step 3:}
In this step, we will use the convexity of the nonsmooth term and the property of the Moreau envelope to derive
\begin{equation}
2h\left[ \mathcal{E}_g(\rho_{k+\frac{2}{3}})-\mathcal{E}_g(\rho^*)\right]\leq W_2^2(\rho_{k+\frac{1}{3}},\rho^*)-W_2^2(\rho_{k+\frac{2}{3}},\rho^*)\,.
\end{equation}

To see that, by the definition of iteration steps, we have
\begin{align}
\label{g_energy}
&\|x^{k+\frac{2}{3}}-y^*\|_2^2 = \|x^{k+\frac{1}{3}}-y^*\|_2^2 + h^2 \|\nabla g_h(x^{k+\frac{1}{3}})\|_2^2 - 2h \langle \nabla g_h(x^{k+\frac{1}{3}}), x^{k+\frac{1}{3}}-y^*\rangle \\
=&\|x^{k+\frac{1}{3}}-y^*\|_2^2 - h^2 \|\nabla g_h(x^{k+\frac{1}{3}})\|_2^2 - 2h \langle \nabla g_h(x^{k+\frac{1}{3}}), x^{k+\frac{2}{3}}-y^*\rangle\notag\,.
\end{align}
Then, by the property of the Moreau envelope, we know that
\[
\nabla g_h(x^{k+\frac{1}{3}}) \in \partial g(\prox_g^h(x^{k+\frac{1}{3}})) = \partial g(x^{k+\frac{2}{3}})\,.
\]
In this case, by the convexity of $g$, we have
\[
\langle \nabla g_h(x^{k+\frac{1}{3}}), x^{k+\frac{2}{3}}-y^*\rangle \geq g(x^{k+\frac{2}{3}})- g(y^*)\,.
\]
Plug this into \eqref{g_energy}, we arrive
\[
\|x^{k+\frac{2}{3}}-y^*\|_2^2 \leq \|x^{k+\frac{1}{3}}-y^*\|_2^2   - 2h(g(x^{k+\frac{2}{3}})- g(y^*))\notag\,.
\]
Taking the expectation on both sides gives the desired result. 
 
\noindent\textbf{Step 4:}
In this step, we will derive the third step of iteration \eqref{particle_x_k} leads to
\begin{equation}
  2h(\mathcal{H}(\rho_{k+\frac{2}{3}})-\mathcal{H}(\rho^*)) \leq (1+h^{3/2-\delta}\beta^{-1})W_2^2(\rho_{k+\frac{2}{3}},\rho^*)-W_2^2(\rho_{k+1},\rho^*) + h^{3/2-\delta}\beta^{-1}C_K^2 + \mathcal{O}(h^2)\,.
\end{equation}
To derive this, by the iterative relationship, we have
\begin{align}
\label{eqn_mid_step_H}
&\|x^{k+1}-x^*\|_2^2 \\
= &\|x^{k+\frac{2}{3}}-x^*\|_2^2 + h^2 \beta^{-2}\|\nabla\log K\rho_{k+\frac{1}{3}}(x^{k+\frac{1}{3}})\|_2^2 - 2h\beta^{-1}\langle \nabla\log K\rho_{k+\frac{1}{3}}(x^{k+\frac{1}{3}}),x^{k+\frac{2}{3}}-x^*\rangle\notag\\
=& \|x^{k+\frac{2}{3}}-x^*\|_2^2 + h^2 \beta^{-2}\|\nabla \log K\rho_{k+\frac{1}{3}}(x^{k+\frac{1}{3}})\|_2^2 - 2h \beta^{-1}\langle \nabla\log\rho_{k+\frac{1}{3}}(x^{k+\frac{1}{3}}),x^{k+\frac{1}{3}}-x^*\rangle\notag\\
&+2h^2  \beta^{-1}\langle \nabla\log\rho_{k+\frac{1}{3}}(x^{k+\frac{1}{3}}), \nabla g_h(x^{k+\frac{1}{3}})\rangle\notag\\
&+2h  \beta^{-1}\langle (\nabla\log\rho_{k+\frac{1}{3}}-\nabla\log K\rho_{k+\frac{1}{3}})(x^{k+\frac{1}{3}}), x^{k+\frac{2}{3}}-x^*\rangle \,.\notag
\end{align}
After taking the expectation of both sides, for the second-to-last term, integration by parts implies
\begin{align*}
&\int \nabla\log\rho_{k+\frac{1}{3}}  \nabla g_h\rho_{k+\frac{1}{3}} dy   = \int \rho_{k+\frac{1}{3}}\Delta g_h dy\\
=&\int_{B(0,R)} \rho_{k+\frac{1}{3}}\Delta g_h dy + \int_{B(0,R)^c}\rho_{k+\frac{1}{3}}\Delta g_h dy \leq C_d(h^{d-1} + L_{g_h})\,,
\end{align*}
for some constant $C_d$ that is independent of $h$ where we used $\rho_{k+\frac{1}{3}}\leq C_R$ in $B(0,R)$ and $g_h$ is $L$-smooth outside the $B(0,R)$. 

Moreover, by the Cauchy-Schwarz inequality, the last term in \eqref{eqn_mid_step_H} is bounded by
\begin{align*}
&\langle (\nabla\log\rho_{k+\frac{1}{3}}-\nabla\log K\rho_{k+\frac{1}{3}})(x^{k+\frac{1}{3}}), x^{k+\frac{2}{3}}-x^*\rangle\\
\leq & h^{1/2-\delta}C_K \|x^{k+\frac{2}{3}}-x^*\|_2 \leq h^{1/2-\delta}\frac{(C_K^2+\|x^{k+\frac{2}{3}}-x^*\|_2^2)}{2}\,.
\end{align*}

We next will employ the fact that the heat equation generates the Wasserstein gradient flow of the entropy term $\mathcal{H}$ to bound $\langle \nabla\log \rho_{k+\frac{1}{3}}(x^{k+\frac{1}{3}}),x^{k+\frac{1}{3}}-x^*\rangle $.
Let \( T: \mathbb{R}^d \to \mathbb{R}^d \) be the optimal transport map pushing \( \rho_{k+\frac{1}{3}} \) forward to \( \rho^* \), i.e., \( T_\# \rho_{k+\frac{1}{3}} = \rho^* \).
Define the interpolation \( X_t(x) = (1 - t)x + t T(x) \), and set \( \rho_t = (X_t)_\# \rho_{k+\frac{1}{3}} \), which yields \( \rho_0 = \rho_{k+\frac{1}{3}} \), \( \rho_1 = \rho^* \), and \( (\rho_t)_{t \in [0,1]} \) forms the constant-speed geodesic in the Wasserstein space. The displacement convexity of the entropy \( \mathcal{H}(\rho) = \beta^{-1}\int \rho \log \rho \, dx \) implies
\begin{equation}
\mathcal{H}(\rho_{k+\frac{1}{3}}) - \mathcal{H}(\rho^*) \leq -\left.\frac{d}{dt} \mathcal{H}(\rho_t)\right|_{t=0}.
\end{equation}
To compute the derivative explicitly, use the change-of-variable formula: \( \rho_t(X_t) = \frac{\rho_{k+\frac{1}{3}}}{\det \nabla X_t} \), so
\[
\mathcal{H}(\rho_t) = \beta^{-1}\int \rho_t(y) \log \rho_t(y) \, dy = \beta^{-1}\int \rho_{k+\frac{1}{3}}(x) \log \left( \frac{\rho_{k+\frac{1}{3}}(x)}{\det \nabla X_t(x)} \right) dx.
\]
Differentiating at \( t = 0 \), we get
\[
\left. \frac{d}{dt} \mathcal{H}(\rho_t) \right|_{t=0} = -\beta^{-1}\int \rho_{k+\frac{1}{3}}(x) \left.\frac{d}{dt} \log \det \nabla X_t(x)\right|_{t=0} dx.
\]
Using Jacobi's formula and \( \nabla X_t(x) = (1 - t) I + t \nabla T(x) \), we compute
\begin{equation*}
\left.\frac{d}{dt} \log \det \nabla X_t(x)\right|_{t=0} = \operatorname{Tr}(\nabla T(x) - I) = \nabla \cdot (T(x) - x),
\end{equation*}
so the final result is
\begin{equation}
\mathcal{H}(\rho_{k+\frac{1}{3}})-\mathcal{H}(\rho^*) \leq -\beta^{-1}\int \nabla\log \rho_{k+\frac{1}{3}}(x)(T(x)-x) \rho_{k+\frac{1}{3}}(x)dx\,.
\end{equation}

Hence, we finally have the relation
\begin{align*}
 &W_2^2(\rho_{k+1},\rho^*) + 2h(\mathcal{H}(\rho_{k+\frac{1}{3}})-\mathcal{H}(\rho^*)) \\
 \leq &(1+h^{3/2-\delta}\beta^{-1})W_2^2(\rho_{k+\frac{2}{3}},\rho^*) + h^{3/2-\delta}\beta^{-1}C_K^2 +  \mathcal{O}(h^2)\,,
\end{align*} 
where the desired one is obtained by combining the above and $x^{k+\frac{2}{3}} = x^{k+\frac{1}{3}} - h\nabla g_h(x^{k+\frac{1}{3}})$.

\textbf{Step 5:}
Summing up the results in steps 1 to 5 to have
\begin{align}
&2h\beta^{-1}\KL(\rho_{k+\frac{2}{3}}\|\rho^*)\\
\leq&  (1-h\alpha)W_2^2(\rho_k,\rho^*)-W_2^2(\rho_{k+1},\rho^*) +  h^{3/2-\delta}\beta^{-1}C_K^2 + h^{3/2-\delta}\beta^{-1} W^2_2(\rho_{k+\frac{2}{3}},\rho^*) + \mathcal{O}(h^2)\,.\notag
\end{align}

\medskip
 
\noindent\textbf{Proof for Theorem \ref{thm_KL_nonsmooth}}
By the Assumption A.1., we recall the Talagrand inequality that for strongly log-concave $\rho^*$, we have
\begin{equation}
W_2^2(\rho,\rho^*) \leq \frac{2}{\alpha}\beta^{-1}\KL(\rho,\rho^*)\,.
\end{equation}
 
\noindent\textbf{Proof for part (1).}
Summing the inequality from Lemma \ref{lemma_w2_trhok} over \(j = 0, \ldots, k\) yields
\begin{align*}
&2h\sum_{j=0}^k \KL(\rho_{j+\frac{2}{3}} \| \rho^*) \\
\leq& \beta W_2^2(\rho_0, \rho^*) +   2(k+1)h^{3/2-\delta} C_K^2 +h^{3/2-\delta} \sum_{j=1}^k W_2^2(\rho_{j+\frac{2}{3}},\rho^*) + \mathcal{O}(h^2)\\
\leq & \beta W_2^2(\rho_0, \rho^*) +   (k+1)h^{3/2-\delta}  C_K^2 +\frac{2h^{3/2-\delta}}{\alpha}\beta^{-1}\sum_{j=1}^k\KL(\rho_{j+\frac{2}{3}}\|\rho^*) + \mathcal{O}(h^2)\,.
\end{align*}
Defining the averaged density
\[
\trho_{k} = \frac{1}{k+1} \sum_{j=0}^k \rho_j,
\]
and applying the convexity of the KL divergence with respect to $\rho_j$, we obtain
\[
\KL(\trho_{k} \| \rho^*) \leq \frac{1}{k+1} \sum_{j=0}^k \KL(\rho_j \| \rho^*).
\]
Combining these inequalities and letting $t_k = h(k+1)$ give the following result
\begin{equation}
\KL(\trho_k\|\rho^*)\leq \frac{\alpha\beta}{\alpha\beta -h^{1/2-\delta}}\left[\frac{\beta}{2t_k}W_2^2(\rho_0\|\rho^*)+h^{1/2-\delta}C_K^2\right] + \mathcal{O}(h)
\end{equation}

\noindent\textbf{Proof for part (2).}
We rewrite the inequality \eqref{eqn:lemma_w2} as
\begin{align}
&W_2^2(\rho_{k+1},\rho^*)\\
\leq &(1-h\alpha)W_2^2(\rho_k,\rho^*) - h\beta^{-1}(2\KL(\rho_{k+\frac{2}{3}}\|\rho^*)-h^{1/2-\delta}W_2^2(\rho_{k+\frac{2}{3}},\rho^*))+h^{3/2-\delta}\beta^{-1}C_K^2+ \mathcal{O}(h)\notag\\
\leq &(1-h\alpha)W_2^2(\rho_k,\rho^*) - h\beta^{-1}\left(\frac{4}{\alpha\beta}-h^{1/2-\delta}\right)W_2^2(\rho_{k+\frac{2}{3}},\rho^*)+h^{3/2-\delta}\beta^{-1}C_K^2+ \mathcal{O}(h)\notag\\
\leq &(1-h\alpha)W_2^2(\rho_k,\rho^*) + h^{3/2-\delta}\beta^{-1}C_K^2+ \mathcal{O}(h)\,,\notag
\end{align}
where we have used the assumption that 
$h\leq (4/(\alpha\beta))^{2/(1-2\delta)}$.
Iterating over $k$, we arrive
\begin{equation}
W_2^2(\rho_{k},\rho^*)\leq (1-h\alpha)^{k}W_2^2(\rho_0,\rho^*) + h^{1/2-\delta}\frac{C_K^2}{\alpha\beta}+ \mathcal{O}(h)\,.
\end{equation}

\section{Details about Numerical Experiments}

\textbf{Evaluation of marginal distribution in Example 1.}
We can integrate the mixture of Gaussian and Laplace models exactly. If $\Sigma_i^{-1} = 1/(2\sigma^2) I_d$, the integral is given by
\begin{align*}
   & \int_{\mathbb{R}^d} \rho^*(x) \, dx \\= & \sum_{n=1}^N \prod_{j=1}^d \int_{\mathbb{R}} \exp\left( -\frac{(x_j - y_{n,j})^2}{2\sigma_n^2} - \lambda |x_j| \right) dx_j \\
    = & \sum_{n=1}^N \prod_{j=1}^d \left[ \int_{-(y_{n,j} - \lambda \sigma_n^2)}^{\infty} \exp\left( -\frac{z_j^2}{2\sigma_n^2} \right) dz_j \exp\left( -\frac{y_{n,j}^2 - (y_{n,j} - \lambda \sigma_n^2)^2}{2\sigma_n^2} \right) \right. \\
    &\quad + \left. \int_{-\infty}^{-(y_{n,j} + \lambda \sigma_n^2)} \exp\left( -\frac{z_j^2}{2\sigma_n^2} \right) dz_j \exp\left( -\frac{y_{n,j}^2 - (y_{n,j} + \lambda \sigma_n^2)^2}{2\sigma_n^2} \right) \right] \\
    = & \frac{1}{(\sqrt{2}\sigma_n)^d} \sum_{n=1}^N \prod_{j=1}^d \left[ \int_{-\frac{(y_{n,j} - \lambda \sigma_n^2)}{\sqrt{2}\sigma_n}}^{\infty} \exp\left( -z_j^2 \right) dz_j \exp\left( -\frac{y_{n,j}^2 - (y_{n,j} - \lambda \sigma_n^2)^2}{2\sigma_n^2} \right) \right. \\
    &\quad + \left. \int_{-\infty}^{-\frac{(y_{n,j} + \lambda \sigma_n^2)}{\sqrt{2}\sigma_n}} \exp\left( -z_j^2 \right) dz_j \exp\left( -\frac{y_{n,j}^2 - (y_{n,j} + \lambda \sigma_n^2)^2}{2\sigma_n^2} \right) \right]\,.
\end{align*}
The above computation provides the normalization constant \(Z\). By replacing the integration over \(\mathbb{R}^d\) with an integration over \(\mathbb{R}^{d-1}\), we obtain the formula for the marginal distribution \(\rho_1^*\).

\end{appendix}
\bibliographystyle{plain}
\bibliography{ref}

\end{document}